\documentclass[reqno]{amsart}
\usepackage{amsmath}
\usepackage{amsfonts}
\usepackage{mathrsfs}
\usepackage {amssymb,euscript} 
\usepackage {amsmath}
\usepackage {amsthm}
\usepackage {amscd}
\usepackage{geometry}
\usepackage{hyperref}
\usepackage{color}
\usepackage{epsfig}
\usepackage{caption}

\usepackage{slashed}

\usepackage{tikz}

\usepackage{graphics}

\title[Emergence of Apparent Horizon]{Emergence of Apparent Horizon in Gravitational Collapse} 
\date{\today}

\author{Xinliang An}
\address{Department of Mathematics, National University of Singapore, Singapore 119076}
\email{matax@nus.edu.sg}

\geometry{a4paper,top=3.5cm,bottom=3.8cm,left=3.5cm,right=3.5cm}

\theoremstyle{definition}
\newtheorem{lemma}{Lemma}[section]

\newtheorem{proposition}[lemma]{Proposition}
\newtheorem{theorem}[lemma]{Theorem}

\newtheorem{remark}{Remark}

\numberwithin{equation}{section}

\begin{document}

\newcommand{\ub}{\underline{u}}
\newcommand{\Cb}{\underline{C}}
\newcommand{\Lb}{\underline{L}}
\newcommand{\Lh}{\hat{L}}
\newcommand{\Lbh}{\hat{\Lb}}
\newcommand{\phib}{\underline{\phi}}
\newcommand{\Phib}{\underline{\Phi}}
\newcommand{\Db}{\underline{D}}
\newcommand{\Dh}{\hat{D}}
\newcommand{\Dbh}{\hat{\Db}}
\newcommand{\omb}{\underline{\omega}}
\newcommand{\omh}{\hat{\omega}}
\newcommand{\ombh}{\hat{\omb}}
\newcommand{\Pb}{\underline{P}}
\newcommand{\chib}{\underline{\chi}}
\newcommand{\chih}{\hat{\chi}}
\newcommand{\chibh}{\hat{\chib}}

\newcommand{\alb}{\underline{\alpha}}
\newcommand{\zeb}{\underline{\zeta}}
\newcommand{\beb}{\underline{\beta}}
\newcommand{\etb}{\underline{\eta}}
\newcommand{\Mb}{\underline{M}}
\newcommand{\oth}{\hat{\otimes}}

\def\a {\alpha}
\def\b {\beta}
\def\ab {\alphab}
\def\bb {\betab}
\def\nab {\nabla}

\def\ub {\underline{u}}
\def\th {\theta}
\def\Lb {\underline{L}}
\def\Hb {\underline{H}}
\def\chib {\underline{\chi}}
\def\chih {\hat{\chi}}
\def\chibh {\hat{\underline{\chi}}}
\def\omegab {\underline{\omega}}
\def\etab {\underline{\eta}}
\def\betab {\underline{\beta}}
\def\alphab {\underline{\alpha}}
\def\Psib {\underline{\Psi}}
\def\hot{\widehat{\otimes}}
\def\Phib {\underline{\Phi}}
\def\thb {\underline{\theta}}
\def\t {\tilde}
\def\st {\tilde{s}}

\def\rhoc{\check{\rho}}
\def\sigmac{\check{\sigma}}
\def\Psic{\check{\Psi}}
\def\kappab{\underline{\kappa}}
\def\betabc {\check{\underline{\beta}}}

\def\d {\delta}
\def\f {\frac}
\def\i {\infty}
\def\l {\bigg(}
\def\r {\bigg)}
\def\S {S_{u,\underline{u}}}
\def\o{\omega}
\def\O{\Omega}\
\def\be{\begin{equation}\begin{split}}
\def\en{\end{split}\end{equation}}
\def\at{a^{\frac{1}{2}}}
\def\af{a^{\frac{1}{4}}}
\def\od{\omega^{\dagger}}
\def\ombd{\underline{\omega}^{\dagger}}
\def\K{K-\frac{1}{|u|^2}}
\def\ut{\frac{1}{|u|^2}}
\def\s{\frac{\delta a^{\frac{1}{2}}}{|u|}}
\def\Kb{K-\frac{1}{(u+\underline{u})^2}}
\def\bf{b^{\frac{1}{4}}}
\def\bt{b^{\frac{1}{2}}}
\def\de{\delta}
\def\ls{\lesssim}
\def\om{\omega}
\def\Om{\Omega}
\def\lo{\lambda_1}
\def\lt{\lambda_2}
\def\phib{\bar{\phi}}
\def\bR{\bar{R}}
\def\tR{\tilde{R}}


\newcommand{\e}{\epsilon}
\newcommand{\et} {\frac{\epsilon}{2}}
\newcommand{\ef} {\frac{\epsilon}{4}}
\newcommand{\LH} {L^2(H_u)}
\newcommand{\LHb} {L^2(\underline{H}_{\underline{u}})}
\newcommand{\M} {\mathcal}
\newcommand{\TM} {\tilde{\mathcal}}
\newcommand{\p}{\psi\hspace{1pt}}
\newcommand{\q}{\underline{\psi}\hspace{1pt}}
\newcommand{\Li}{_{L^{\infty}(S_{u,\underline{u}})}}
\newcommand{\Lt}{_{L^{2}(S)}}
\newcommand{\da}{\delta^{-\frac{\epsilon}{2}}}
\newcommand{\db}{\delta^{1-\frac{\epsilon}{2}}}
\newcommand{\D}{\Delta}


\renewcommand{\div}{\mbox{div }}
\newcommand{\curl}{\mbox{curl }}
\newcommand{\trchb}{\mbox{tr} \chib}
\def\trch{\mbox{tr}\chi}
\newcommand{\tr}{\mbox{tr}}

\newcommand{\Ls}{{\mathcal L} \mkern-10mu /\,}
\newcommand{\eps}{{\epsilon} \mkern-8mu /\,}

\newcommand{\xib}{\underline{\xi}}
\newcommand{\psib}{\underline{\psi}}
\newcommand{\rhob}{\underline{\rho}}
\newcommand{\thetab}{\underline{\theta}}
\newcommand{\gammab}{\underline{\gamma}}
\newcommand{\nub}{\underline{\nu}}
\newcommand{\lb}{\underline{l}}
\newcommand{\mub}{\underline{\mu}}
\newcommand{\Xib}{\underline{\Xi}}
\newcommand{\Thetab}{\underline{\Theta}}
\newcommand{\Lambdab}{\underline{\Lambda}}
\newcommand{\vphb}{\underline{\varphi}}

\newcommand{\ih}{\hat{i}}

\newcommand{\tcL}{\widetilde{\mathscr{L}}}

\newcommand{\sRic}{Ric\mkern-19mu /\,\,\,\,}
\newcommand{\sL}{{\cal L}\mkern-10mu /}
\newcommand{\sLh}{\hat{\sL}}
\newcommand{\sg}{g\mkern-9mu /}
\newcommand{\seps}{\epsilon\mkern-8mu /}
\newcommand{\sd}{d\mkern-10mu /}
\newcommand{\sR}{R\mkern-10mu /}
\newcommand{\snab}{\nabla\mkern-13mu /}
\newcommand{\sdiv}{\mbox{div}\mkern-19mu /\,\,\,\,}
\newcommand{\scurl}{\mbox{curl}\mkern-19mu /\,\,\,\,}
\newcommand{\slap}{\mbox{$\triangle  \mkern-13mu / \,$}}
\newcommand{\sGamma}{\Gamma\mkern-10mu /}
\newcommand{\somega}{\omega\mkern-10mu /}
\newcommand{\somb}{\omb\mkern-10mu /}
\newcommand{\spi}{\pi\mkern-10mu /}
\newcommand{\sJ}{J\mkern-10mu /}
\renewcommand{\sp}{p\mkern-9mu /}
\newcommand{\su}{u\mkern-8mu /}

\dedicatory{Dedicated to my father, Yuqing An}

\begin{abstract}
We solve Einstein vacuum equations in a spacetime region up to the ``center'' of gravitational collapse. Within this region, we construct a sequence of marginally outer trapped surfaces (MOTS) with areas going to zero. These MOTS form a marginally outer trapped tube (apparent horizon). It emerges from a point and is smooth (except at that point). In the proof we employ a scale critical trapped surface formation criterion established by An and Luk and a new type of quasilinear elliptic equation is studied. {\color{black}One of the main conclusions} in this paper proves a conjecture of Ashtekar on black hole thermodynamics. And the spacetimes constructed here could also be viewed as (non-spherically symmetric) generalizations of the well-known Vaidya spacetime. 
\end{abstract}  

\maketitle

\tableofcontents

\section{Introduction}

In this paper, we study gravitational collapse for the $3+1$ dimensional Einstein vacuum equations (EVEs):
\begin{equation}\label{Ei}
\mbox{Ric}_{\mu\nu}=0
\end{equation}
in a strong field regime. 

In modern physics, black hole mechanics plays an important role, which connects the laws of thermodynamics with the dynamical evolution of  the black-hole boundary. Back to 1970s, Bekenstein and Hawking found that the area of black-hole event horizon could be viewed as the entropy of a black hole. People further established four laws of black hole mechanics along the black-hole event horizons. 

In general relativity, there are two types of boundaries for a black hole region: event horizon and apparent horizon.  An event horizon is a boundary in spacetime, beyond which events cannot affect an outside observer (at future null infinity). For defining an event horizon, people need to know the global informations of the whole spacetime. An apparent horizon on the other side is defined locally. {\color{black}It is the boundary, where outward directed light rays sent in the interior region cannot escape to exterior.} When studying gravitational collapse via either pure math or numerical approaches, people are using an apparent horizon to mean the boundary of a black hole region.  
 
A natural research direction is to establish black hole mechanics along an apparent horizon. This was open until 2003. In \cite{AK03, AK},  Ashtekar and Krishnan showed that if a smooth and spacelike apparent horizon exits, then the black hole mechanics along an apparent horizon could be established. However, there is an important step missing: Does a spacelike apparent horizon exist? Can it form dynamically in gravitational collapse? Due to physical intuitions, Ashtekar further conjectured that \textit{the apparent horizon of a ``black hole" region may emerge from a spacetime point and it is spacelike in the beginning}. In current paper, we prove this conjecture of Ashtekar. 

We construct a series of marginally outer trapped surfaces (MOTS) up to the center. These MOTS constitutes the apparent horizon. For an open set of initial data, we show that an apparent horizon is forming dynamically in  evolution. It is smooth and spacelike (except at the center). 

In this paper, we combine techniques from both hyperbolic PDE and quasilinear elliptic equations. With hyperbolic part, we take a limit of a scale-critical result by An-Luk \cite{AL}. This gives the existence of EVEs in a spacetime region up to the center. Then along each incoming null hypersurface, we find the MOTS. Mathematically, these MOTS are solutions to a new type of quasilinear elliptic equations. 

The a priori estimates for these new equations are obtained for the first time. These equations are solved for the first time. And this result is also the first demonstrating formation of apparent horizon in dynamics.

\subsection{{\color{black}Penrose's Incompleteness Theorem}}
One of the central questions in general relativity is formation of singularities for (\ref{Ei}). In 1965, Penrose \cite{Penrose} proved his celebrated incompleteness theorem:
\textit{A spacetime with a non-compact Cauchy hypersurface satisfying EVEs and containing a compact trapped surface is future causally geodesically incomplete. }
Here a trapped surface is a two dimensional sphere, both future null expansions of which are negative. 

Applying Penrose's incompleteness theorem, singularity formation for EVEs is then reduced to trapped surface formation. However, the proof of Penrose's theorem does not guarantee that a trapped surfaces can arise in evolution. This latter problem requires an detailed understanding of the dynamics of EVEs in a large data regime. \\

\subsection{Formation of Trapped Surfaces}
For vacuum spacetimes, this was open for a long time until a recent breakthrough result in 2008. In a 589-page monumental work \cite{Chr:book}, Christodoulou proved that a trapped surface can indeed form dynamically from regular initial data free of trapped surfaces. 

In \cite{Chr:book} a characteristic initial value problem (see the figure below) for EVEs is studied. Initial data are prescribed along a truncated incoming cone $\Hb_0$ and a truncated outgoing cone $H_0$. 
Here 2-sphere $S_{0,0}$ is the intersection of these two cones.

\begin{center}
\begin{tikzpicture}[scale=0.8]

\draw [white](3,-1)-- node[midway, sloped, below,black]{$H_0(u=0)$}(4,0);

\draw [white](2,2)--node [midway,sloped,above,black] {$\Hb_{\delta}(\ub=\delta)$}(4,0);
\draw [white](1,1)--node [midway,sloped, below,black] {$\Hb_{0}(\ub=0)$}(3,-1);
\draw [dashed] (0, 4)--(0, -4);
\draw [dashed] (0, -4)--(4,0)--(0,4);
\draw [dashed] (0,0)--(2,2);

\draw [dashed] (0,-4)--(2,-2);
\draw [dashed] (0,2)--(3,-1);
\draw [very thick] (1,1)--(3,-1)--(4,0)--(2,2)--(1,1);
\fill[yellow!70!red] (1,1)--(3,-1)--(4,0)--(2,2)--(1,1);
\draw [white](1,1)-- node[midway,sloped,above,black]{$H_{u_*}$}(2,2);
\draw [->] (3.3,-0.6)-- node[midway, sloped, above,black]{$e_4$}(3.6,-0.3);
\draw [->] (1.4,1.3)-- node[midway, sloped, below,black]{$e_4$}(1.7,1.6);
\draw [->] (3.3,0.6)-- node[midway, sloped, below,black]{$e_3$}(2.7,1.2);
\draw [->] (2.4,-0.3)-- node[midway, sloped, above,black]{$e_3$}(1.7,0.4);
\end{tikzpicture}
\end{center}

We prescribe Minkowkian initial data along $\Hb_0$ such that the sphere $S_{0,0}$ is the standard $2$-sphere with radius $1$. The initial data on $H_0$ are prescribed in a region $0\leq \ub\leq \delta$. Denote $\chih$ to be the traceless part of the outgoing null second fundamental form $\chi$. And we require $\chih$ to be large in terms of $\delta$. 

This form of initial data was called a ``short pulse data'' in \cite{Chr:book}. According to these short pulse data, Christodoulou constructed a hierarchy of large and small quantities (parametrized by different weights in $\delta$). This hierarchy is later shown to be preserved by the nonlinear evolution. Hence, despite being a large data problem, a long time existence theorem can be established. We have the existence of a solution to EVEs and could control the geometry from initial data to a region where a ``black hole'' is about to form.  As long as the incoming radiation per unit solid angle is bounded uniformly below independent of $\de$, a trapped surface is guaranteed to form in the causal future of the initial data. We summarize
the main theorem\footnote{Christodoulou's original result allows the initial data to be posed at past null infinity. Here, we only mention a version in a finite region.} as follows:

\begin{theorem}[Christodoulou \cite{Chr:book}]\label{Chr.thm}
Consider the characteristic initial value problem for EVEs such that $\Hb_0$ coincides with a backwards light cone\footnote{Here, and in the remainder of this paper, we normalize the $u$ coordinate on the backwards light cone as follows. Let $C=\{(t,x_1,x_2,x_3):t\leq 0,\,t^2=x_1^2+x_2^2+x_3^2\}$ be the backward light cone in Minkowski space emanating from the origin. Define $r=\sqrt{x_1^2+x_2^2+x_3^2}$ and $u=\f12(t-r+2)$. Notice in particular that $u=0$ ($t=-1, r=1$) on a standard sphere of radius $1$ and $u=1$ ($t=0, r=0$) on the vertex.} in Minkowski space for $0\leq u\leq 1$. For every $B>0$ and $u_*\leq 1$, there exists $\de=\de(B,u_*)>0$ sufficiently small such that if the initial $\chih_0$, prescribed on $H_0$ for $0\leq \ub\leq \de$, satisfies
\begin{equation}\label{Chr.upper.bound}
\sum_{i\leq 5,\,j\leq 3}\de^{\frac 12+j}\|\nab_{e_4}^j \nab^i\chih_0\|_{L^\infty_{\ub}L^2(S_{0,\ub})}\leq B, 
\end{equation}
where $e_4$ and $\nab$ are respectively outgoing null vector and angular derivative on a 2-sphere $S_{u,\ub}$, then the solution to EVEs remains regular in $0\leq u\leq u_*$, $0\leq \ub\leq \de$. Moreover, if the initial data also verify the lower bound
\begin{equation}\label{Chr.lower.bound}
\inf_{\o\in \mathbb{S}^2} \int_0^{\de} |\chih_0|^2(\ub',\o)\,d\ub' \geq M_* > 2(1-u_*),
\end{equation}
then, after choosing $\de$ to be small (depending on $B$, $u_*$ and $M_*$), the sphere $S_{u_*,\de}$ is a trapped surface.\footnote{The initial data constructed in \cite{Chr:book} satisfy both \eqref{Chr.upper.bound} and \eqref{Chr.lower.bound} at the same time. Moreover, the initial data can be chosen to obey 
$\inf_{\o\in \mathbb{S}^2} \int_0^{\de} |\chih_0|^2(\ub',\o)\,d\ub'<2.$
Thus for $\de$ sufficiently small, it can be proved that the initial hypersurface $H_1$ does not contain any trapped surfaces.}
\end{theorem}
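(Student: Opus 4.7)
The argument splits into (A) long-time existence of the EVE solution in the slab $\{0\le u\le u_*,\ 0\le\ub\le\de\}$ and (B) verification of the trapped-surface conditions $\trch<0$, $\trchb<0$ on $S_{u_*,\de}$. For (A) I would set up a double-null foliation $(u,\ub,\theta^1,\theta^2)$ with null pair $(L,\Lb)=(e_4,e_3)$ and a tangential frame on each sphere $\S$; decompose the connection into Ricci coefficients $\chi,\chib,\eta,\etab,\omega,\omb$ and the Riemann tensor into null curvature components $\alpha,\beta,\rho,\sigma,\betab,\alphab$. The heart of the setup is a short-pulse $\de$-hierarchy: guided by \eqref{Chr.upper.bound}, which forces $\chih\sim\de^{-1/2}$, assign each component a $\de$-weight so that the Bianchi and null-structure systems close self-consistently. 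Morally $\alpha\sim\de^{-3/2}$, $\beta\sim\de^{-1/2}$, $(\rho,\sigma)\sim 1$, $\betab\sim\de^{1/2}$, $\alphab\sim\de$, while every deviation of a Ricci coefficient from its Minkowski value is small in the matching weighted norm.

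Existence is then a bootstrap. Assume the hierarchy on $\{0\le u\le u_*,\ 0\le\ub\le\ub'\}$ with a large constant and recover a smaller one. Curvature is controlled by $L^2$ energy estimates obtained by pairing the Bianchi system against null multipliers along the two cones; Ricci coefficients are then recovered from the null-structure (transport) equations along $L$ and $\Lb$, supplemented by Hodge/elliptic estimates on each $\S$. Because the weights are chosen so that every nominally anomalous bilinear term of size $\de^{-1}$ either cancels algebraically, is absent, or is integrable in $\ub$ over the short interval $[0,\de]$, the hierarchy is preserved and $\ub'$ may be continued all the way up to $\de$, which gives the regularity conclusion in the full slab.

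For (B), $\trchb<0$ on $S_{u_*,\de}$ is inherited from $\Hb_0$ (where it equals its Minkowski value) up to the $O(\de^{1/2})$ perturbation supplied by (A). For $\trch$, integrate Raychaudhuri
\[
L(\trch)=-\tfrac12(\trch)^2-|\chih|^2-2\,\omega\,\trch
\]
along the outgoing generator on $H_{u_*}$ from $\ub=0$, where $\trch$ equals its Minkowski value on the sphere of radius $1-u_*$, to $\ub=\de$. The quadratic self-interaction and the $\omega\,\trch$ term are $O(\de)$-small by (A), which also keeps $\chih$ within $O(\de^{1/2})$ of $\chih_0$ along the generator. The surviving contribution is
\[
-\int_0^{\de} |\chih_0|^2(\ub',\omega)\,d\ub',
\]
and by \eqref{Chr.lower.bound} this dominates the initial positive value of $\trch$ uniformly in $\omega$ once $\de$ is sufficiently small, so $\trch<0$ everywhere on $S_{u_*,\de}$ and the sphere is trapped.

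The real obstacle lies entirely in (A): the linear parts of the Bianchi and transport equations see no $\de$, so a naive estimate for bilinears such as $\chih\cdot\chih$ or $\chih\cdot\beta$ would blow up as $\de\to 0$. Closing the hierarchy requires, equation by equation, identifying exactly which bilinears are genuinely anomalous and showing that each is absorbed either by algebraic cancellation, by integration along the $\de$-short $\ub$-direction, or by always sitting on the ``good'' side of a transport equation. Combined with the usual derivative-loss issue for $\alpha$ and $\alphab$ (handled by renormalized energy estimates and the trace-estimate technology of Klainerman--Rodnianski), this delicate bookkeeping is the substance of Christodoulou's monograph and is precisely what the scale-critical framework of An--Luk \cite{AL} compresses into a more streamlined argument, whose $\de\to 0$ limit will be invoked later in the paper.
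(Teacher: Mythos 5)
The paper does not actually prove this theorem; it is quoted from Christodoulou's monograph \cite{Chr:book}, and your outline of part (A) is the standard short-pulse/bootstrap strategy, stated at a level of detail (as you acknowledge) that defers all of the actual work of closing the hierarchy. That is acceptable for a cited result, but part (B), the only part you argue concretely, contains a genuine quantitative error — and it sits exactly where the constant $2(1-u_*)$ in \eqref{Chr.lower.bound} comes from.

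You claim the bootstrap ``keeps $\chih$ within $O(\de^{1/2})$ of $\chih_0$ along the generator,'' so that the surviving term in Raychaudhuri on $H_{u_*}$ is $-\int_0^{\de}|\chih_0|^2\,d\ub'$. That is not what the null structure equation $\nab_3\chih+\f12\trchb\,\chih=\nab\hat{\otimes}\eta+2\omb\chih-\f12\trch\chibh+\eta\hat{\otimes}\eta$ gives: what is (approximately) conserved along the incoming direction is $r\,\chih$ with $r=1-u$, so on $H_{u_*}$ one has $|\chih|^2(u_*,\ub,\o)\approx(1-u_*)^{-2}\,|\chih_0|^2(\ub,\o)$ (this is precisely the relation $|u|^2|\chih|^2(u,\ub,\o)\approx|\chih_0|^2(\ub,\o)$ used later in this paper, Section \ref{Construction of Special Initial Data}). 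Moreover the value of $\trch$ at the starting sphere $S_{u_*,0}$ is its Minkowski value $2/(1-u_*)$, which is \emph{not} comparable to $2(1-u_*)$ when $u_*$ is close to $1$. Integrating Raychaudhuri along $H_{u_*}$ with these inputs gives, up to errors vanishing with $\de$,
\begin{equation*}
\trch(u_*,\de,\o)\;\approx\;\f{2}{1-u_*}-\f{1}{(1-u_*)^2}\int_0^{\de}|\chih_0|^2(\ub',\o)\,d\ub',
\end{equation*}
which is negative exactly when $\int_0^{\de}|\chih_0|^2\,d\ub'>2(1-u_*)$, i.e.\ under \eqref{Chr.lower.bound}. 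With your bookkeeping (no amplification of $\chih$ and no attention to the initial value $2/(1-u_*)$), the comparison would instead require $M_*>2/(1-u_*)\ge 2(1-u_*)$, so the hypothesis as stated would not suffice and the theorem with its actual constant would not follow. The fix is to propagate $|\chih|^2$ correctly along $e_3$ (picking up $(1-u_*)^{-2}$) before integrating the Raychaudhuri equation in $\ub$; the $\trchb<0$ part of your argument is fine.
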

A trapped surface could be considered as a ``localized black hole''. After \cite{Chr:book}, various extensions of Theorem \ref{Chr.thm} have been achieved.  Interested readers are referred to \cite{An, Dafermos, KR:Scarred, KR:Trapped, LY, L-R:Interaction, R-T, Yu1, Yu2}.  In all these works, trapped surfaces formed are of area $1$ and similar upper and lower bounds as in \cite{Chr:book} are employed.\\

A recent work \cite{KLR} relaxed the lower bound assumption in \cite{Chr:book}:
\begin{theorem}[Klainerman-Luk-Rodnianski \cite{KLR}]
Assume that the initial data for EVEs satisfy the condition \eqref{Chr.upper.bound} in Theorem \ref{Chr.thm}. If the initial data also verify the lower bound
\begin{equation*}
\sup_{\o\in \mathbb{S}^2} \int_0^{\de} |\chih_0|^2(\ub',\o)\,d\ub' \geq M_* > 0,
\end{equation*}
then, after choosing $\de$ small, a compact trapped surface can be guaranteed to formed to the future of the initial data.
\end{theorem}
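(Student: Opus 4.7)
The upper bound \eqref{Chr.upper.bound} is identical to that of Theorem~\ref{Chr.thm}, so Christodoulou's large-data short-pulse existence theorem already produces a smooth solution of \eqref{Ei} in the entire slab $\{0\leq u\leq u_*,\,0\leq \ub\leq \de\}$ for any $u_*<1$, together with his full scale-critical hierarchy of estimates on the geometric quantities. The new difficulty is that the relaxed hypothesis only guarantees $\int_0^\de|\chih_0|^2(\ub',\omega)\,d\ub'\geq M_*$ in a single direction $\omega_0\in\mathbb{S}^2$, so the canonical round sphere $S_{u_*,\de}$ is generically \emph{not} trapped: its outward null expansion $\trch$ remains positive in most angular directions. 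The strategy is therefore to abandon round sections of $\Hb_\de$ and instead produce a trapped $2$-surface as a graph $S_f=\{(u,\ub,\omega)\in\Hb_\de:u=f(\omega)\}$ for some function $f:\mathbb{S}^2\to(0,u_*)$ to be determined.

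\textbf{Reduction to a quasilinear elliptic problem.} For such an $S_f$, the inward expansion $\trchb_{S_f}$ in the null frame adapted to $S_f$ is a small short-pulse perturbation of its Minkowskian value and is automatically negative. The outward expansion is the delicate one: after performing the null rotation/boost that aligns the frame with $S_f$, a direct computation yields the schematic form
\[
\trch_{S_f}(\omega)\;\approx\;T\bigl(f(\omega),\omega\bigr)\;+\;\slap f\;+\;Q\bigl(f,\nab f,\nab^2 f\bigr),
\]
where $T(u,\omega):=\trch(u,\de,\omega)$ is the background expansion along $\Hb_\de$, $\slap$ is the Laplacian with respect to the induced area metric on $\mathbb{S}^2$, and $Q$ gathers the lower-order quasilinear deformation terms arising from the frame tilt. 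Using the $\nab_4\trch$ transport equation from $\ub=0$ to $\ub=\de$ together with Christodoulou's short-pulse estimates, $T(u,\omega)$ is controlled by its Minkowskian value at $\ub=0$ minus the shear integral $E(\omega):=\int_0^\de|\chih_0|^2(\ub',\omega)\,d\ub'$, plus an error uniformly small in $\de$. Thus the MOTS condition $\trch_{S_f}=0$ reduces to a quasilinear elliptic equation for $f$ on $\mathbb{S}^2$ whose forcing is the single-direction-concentrated $E(\omega)$.

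\textbf{Solvability and main obstacle.} The plan is then to solve this elliptic equation by a sub/super-solution (Perron-type) construction. The concentration of $E$ near $\omega_0$ supplies a sub-solution supported in a small angular cap $U\ni\omega_0$, and outside $U$ one exploits the negative contribution of $\slap f$ (together with the tame part of $Q$) by choosing $f$ to be a sharp bump of small angular scale about $\omega_0$; the interplay of $\slap f$, $T(f,\omega)$ and $E(\omega)$ through the maximum principle then yields an honest MOTS $S_f$, and an infinitesimal inward perturbation of $S_f$ produces the desired compact trapped surface. The main obstacle is controlling the quasilinear term $Q(f,\nab f,\nab^2 f)$: tilting the null frame to match $S_f$ introduces null rotations whose coefficients are of short-pulse size (for instance $\chih\sim\de^{-1/2}$), so $Q$ is quasilinear with \emph{large} coefficients. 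One must verify simultaneously that $\slap f+Q$ remains uniformly elliptic on the class of admissible $f$'s and that the nonlinear contributions do not overwhelm the sign of the forcing. This forces the angular scale on which $f$ is permitted to concentrate to be tuned to Christodoulou's scale-critical hierarchy, with the $\de^{1/2+j}$-weighted estimates in \eqref{Chr.upper.bound} being used in an essential way to dominate all lower-order contributions in $Q$.
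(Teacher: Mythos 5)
Your deformation formula and the idea of seeking the trapped surface as a graph $u=f(\o)$ over $\Hb_\de$ are exactly the right starting point (this is the formula the paper records as Proposition \ref{1st deformation} and attributes to \cite{KLR}). But the solvability mechanism you propose has a genuine gap, and it also departs from what \cite{KLR} actually do: as the paper notes, they solve the quasilinear elliptic \emph{inequality} $\trch'<0$ by an explicit construction, not the MOTS \emph{equation} $\trch'=0$. Your detour through an exact MOTS followed by an inward perturbation is likely to fail under the hypothesis of this theorem: the lower bound on $E(\o)=\int_0^\de|\chih_0|^2\,d\ub'$ holds only at a single direction $\o_0$, and where $E(\o)$ vanishes the background expansion satisfies $\trch\approx\frac{2}{1-u}>0$ for every admissible $u$, so a marginally trapped section of $\Hb_\de$ need not exist inside the region controlled by the short-pulse estimates. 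This is precisely why the present paper, which does solve the MOTS equation, must assume the two-sided pointwise bound $\tfrac{20}{21}\leq f(\ub,\o)\leq\tfrac{22}{21}$ uniformly over $\mathbb{S}^2$ to obtain its $C^0$ estimate $R(\o)\approx\tfrac12\ub a f$; that estimate has no analogue when $E$ is allowed to degenerate.

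The second, more concrete, gap is the choice of profile. A ``sharp bump of small angular scale about $\o_0$'' is essentially constant outside a small cap, so there $\slap f\approx 0$, $\nab f\approx 0$, and $\trch_{S_f}\approx\trch(f,\de,\o)>0$: the surface is simply not trapped in the directions receiving no radiation. The only term that can produce negativity there is $-2\O\slap f$, so one needs $\slap f>0$ (suitably large) on the \emph{entire} complement of a neighbourhood of $\o_0$; since $\int_{\mathbb{S}^2}\slap f=0$, the compensating region where $\slap f<0$ must be concentrated at $\o_0$, which is exactly where the forcing $E(\o)\geq M_*$ (after shrinking $1-u$ near $\o_0$) makes $\trch$ negative enough to absorb it. Arranging this global balance while keeping the quadratic gradient terms $-\O^2\trchb|\nab f|^2-8\O^2\omb|\nab f|^2$ (whose coefficient $-\trchb\approx 2/(1-u)$ has the unfavourable sign) dominated is the actual content of the Klainerman--Luk--Rodnianski construction, carried out with a carefully tuned logarithmically shaped $f$ defined on all of $\mathbb{S}^2$ and verified pointwise; a local sub/super-solution argument in a cap does not see this constraint and cannot close.
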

To prove this result, in \cite{KLR} they deform a ``double null foliation'' and solve a quasilinear elliptic inequality. See also a geometric approach by Le in \cite{Le}. \\

Another recent work \cite{AL} relaxed the upper bound assumption in \cite{Chr:book}: 

\begin{theorem}[An-Luk \cite{AL}] \label{thm1.3}
Consider the following characteristic initial value problem for EVEs: The initial incoming null hypersurface $\Hb_0$ is required to coincide with a backwards light cone in Minkowski space with $0\leq u \leq 1$. Given $\delta$, for every $B$, there exist $a_0=a_0 (B)$ and $b_0=b_0(B)$ sufficiently large. Pick any $a$ and $b$ satisfying $a_0 \leq a \leq \delta^{-1}$ and $b_0\leq b \leq a^{\frac12} \leq \delta^{-\frac12}$.  Along $H_0$, for $0\leq \ub \leq \d$, if the initial $\chih_0$ verifies
\begin{equation}\label{AL.upper.bound}
\sum_{i\leq 5, j\leq 3}\delta^{j} a^{-\frac12}\|\nab^{j}_{e_4}\nab^{i}\chih_{0}\|_{L^{\infty}_{\ub}L^2(S_{0,\ub})}\leq B,\footnote{{\color{black}In the proof of \cite{AL} we don't use the $\nab_{e_4}$ derivative of $\nab^i\chih$. For rougher initial data we could proceed as in \cite{L-R:Propagation} and replace \eqref{AL.upper.bound} with $\sum_{i\leq 5} a^{-\frac12}\|\nab^{i}\chih_{0}\|_{L^{\infty}_{\ub}L^2(S_{0,\ub})}\leq B$.}}
\end{equation}
then the solution to EVEs remains regular in $0\leq u \leq 1-b\delta \at$, $0\leq \ub \leq \d$. Moreover, if the initial data also satisfy a lower bound
\begin{equation*}
\inf_{\o\in \mathbb{S}^2}\int_0^{\delta}|\chih_{0}|^2(\ub', \o)d\ub'\geq 4\delta a,
\end{equation*}
then the sphere $S_{1-\d a, \d}$ is a trapped surface.
\end{theorem}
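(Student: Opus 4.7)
\textbf{Proof proposal for Theorem \ref{thm1.3}.} The plan is to run a long-time existence argument along the truncated double null foliation $(u,\ub)$ via a scale-critical bootstrap, and then read off the sign of $\trch$ at $S_{1-\de a,\de}$ from the Raychaudhuri equation. The first task is to identify the right norms. Because the ``short-pulse'' ansatz \eqref{AL.upper.bound} allows $\|\chih_0\|_{L^2(S)}\sim \at$ (rather than $\sim 1$ as in Christodoulou), the critical scaling one should respect is $\chih\sim \at/|u|$, $\trch\sim 1/|u|$, $\chibh\sim \at\de/|u|^2$, etc., together with anomalous weights in $\de$ coming from $\nab_4$ derivatives. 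I would therefore introduce $\de$- and $u$-weighted $L^\infty_uL^\infty_\ub L^2(S)$ and $L^\infty_u L^2_\ub L^2(S)$ norms for the Ricci coefficients and null curvature components, tuned so that every schematic bilinear term in the null structure/Bianchi equations is subcritical in $a$, with at most one ``borderline'' factor that will need the explicit largeness of $a^{-1}$ in the smallness budget to absorb.

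The second task is the bootstrap. I would make bootstrap assumptions of the form $\|\chih\|\le a^{1/2}/|u|\cdot a^{1/4}$, etc.\ (i.e.\ each quantity bounded by its critical size times a small power of $a$), on a region $0\le u\le 1-b\de\at$, $0\le \ub\le \de$, and close them in the following order. First, integrate the $\nab_4$ null structure equations for $\chi,\eta$ starting from $H_0$; the $\de$-weights in \eqref{AL.upper.bound} ensure the initial data are controlled. Second, integrate the $\nab_3$ equations for $\chib,\etab,\omb$ from $\Hb_0\cap\{\ub=0\}$, where the Minkowskian data give vanishing or explicit values. Third, estimate curvature by energy estimates on $(u,\ub)$ with the renormalized Bianchi pair so that $\alpha$ need only be controlled in $\LH$ (never in $\LHb$), which is crucial because $\alpha$ is the largest curvature component, of size $\at/(|u|\de)$. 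The condition $b\ge b_0(B)$ is exactly what gives the smallness needed to close the bootstrap: the region has $|u|\ge b\de\at$, so every $\int d\ub/|u|^2$ factor comes with a factor of $1/(b\at)$, providing the absorption.

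The third task is to set up and close elliptic estimates on $S_{u,\ub}$ to recover top-order angular derivatives of $\chih$ and $\chibh$ from the div-curl systems (Codazzi), using the Gauss curvature $K$ which itself is controlled at the critical level $K\sim 1/|u|^2$ by the $\de$-weighted norms. Once all Ricci coefficients and curvature are bounded on the claimed region, existence in $0\le u\le 1-b\de\at$, $0\le \ub\le \de$ follows by a standard continuation argument.

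Finally, for the trapped surface conclusion on $S_{1-\de a,\de}$ I would integrate Raychaudhuri
\[
\nab_4 \trch + \tfrac12 (\trch)^2 = -|\chih|^2 - 2\omega\,\trch
\]
along $\ub\in[0,\de]$ at $u=1-\de a$. The a priori estimates give $|\omega|,\,|\trch|^2\le Ca/|u|^2\cdot\de$ integrated, hence contribute $O(a\de/|u|^2)=O(1/|u|)$, while the hypothesis $\int_0^\de|\chih|^2 d\ub'\ge 4\de a$ propagated via the $\nab_3$ equation for $\chih$ (whose error terms are again subcritical) yields $-\int_0^\de |\chih|^2\,d\ub' \le -2\de a$ at $u=1-\de a$. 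Since $|u|=\de a$ there, we obtain $\trch(1-\de a,\de)\le 2/|u|-2\de a/|u|^2 = 2/(\de a)-2/(\de a)\cdot(1) <0$ once all error terms are bounded by $a^{-1/4}/(\de a)$, which is automatic for $a\ge a_0(B)$ large. Together with $\trchb<0$, which follows from its transport equation starting from the Minkowskian value $-2/|u|$, this is the desired trapped surface.

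\textbf{Main obstacle.} The principal difficulty is the \emph{scale-criticality}: unlike Christodoulou's setting where $\de$ supplies essentially unlimited smallness, here $\de$ and $a$ conspire so that many terms are exactly borderline, and only the factor $1/(b\at)$ coming from the geometric length of the region gives genuine smallness. Designing the norm hierarchy so that this single smallness suffices to absorb every nonlinearity — particularly in the $\alpha$ estimate and in the top-order elliptic recovery — is where the real work lies.
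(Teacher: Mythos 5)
This theorem is quoted from An--Luk \cite{AL}; the present paper offers no proof of it, so the only meaningful comparison is with the cited work, whose strategy the paper summarizes (scale-critical weighted estimates, renormalization avoiding $\alpha$, smallness generated by $b$). Your sketch --- norms modelled on $\chih\sim a^{1/2}/|u|$ with $\delta$-anomalies, a bootstrap closed on $|u|\geq b\delta a^{1/2}$ using the smallness $\delta a^{1/2}/|u|\leq 1/b$, renormalized Bianchi energy estimates that never place $\alpha$ on incoming hypersurfaces, and the trapped-surface conclusion from integrating the Raychaudhuri equation with the propagated lower bound $|u|^2|\chih|^2(u,\ub,\o)\approx|\chih_0|^2(\ub,\o)$ so that at $u=1-\delta a$ one gets $2/(\delta a)-4/(\delta a)[1+o(1)]<0$ --- is essentially the same approach as \cite{AL}, and the final sign computation at $S_{1-\delta a,\delta}$ is correct.
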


\begin{minipage}[!t]{0.4\textwidth}
\begin{tikzpicture}[scale=0.95]
\draw [white](3,-1)-- node[midway, sloped, below,black]{$H_0(u=0)$}(4,0);

\draw [white](0.5,1.5)-- node[midway,sloped,above,black]{$H_{1-b \delta \at}$}(1.5,2.5);
\draw [white](2,2)--node [midway,sloped,above,black] {$\Hb_{\delta}(\ub=\delta)$}(4,0);
\draw [white](1,1)--node [midway,sloped, below,black] {$\Hb_{0}(\ub=0)$}(3,-1);
\draw [dashed] (0, 4)--(0, -4);
\draw [dashed] (0, -4)--(4,0)--(0,4);
\draw [dashed] (0,0)--(2,2);
\draw [dashed] (0,1)--(1.5,2.5);
\draw [dashed] (0,-4)--(2,-2);
\draw [dashed] (0,2)--(3,-1);
\draw [very thick] (1,1)--(3,-1)--(4,0)--(2,2)--(1,1);
\draw [very thick] (1,1)--(0.5,1.5)--(1.5,2.5)--(2,2)--(1,1);
\fill[yellow!70!red] (1,1)--(3,-1)--(4,0)--(2,2)--(1,1);
\fill[yellow!30!red](1,1)--(0.5,1.5)--(1.5,2.5)--(2,2)--(1,1);
\draw [white](1,1)-- node[midway,sloped,above,black]{$H_{1-\delta a}$}(2,2);
\draw [->] (3.3,-0.6)-- node[midway, sloped, above,black]{$e_4$}(3.6,-0.3);
\draw [->] (1.4,1.3)-- node[midway, sloped, below,black]{$e_4$}(1.7,1.6);
\draw [->] (3.3,0.6)-- node[midway, sloped, below,black]{$e_3$}(2.7,1.2);
\draw [->] (2.4,-0.3)-- node[midway, sloped, above,black]{$e_3$}(1.7,0.4);
\end{tikzpicture}
\end{minipage}
\begin{minipage}[!t]{0.58\textwidth}
Note that the radius of $\S$ is of size $1-u$.\\

In \cite{AL} when setting $a=\d^{-1}$, we are back to the main theorem in \cite{Chr:book}.\\

As explained in \cite{AL}, the characteristic initial data in this theorem are of the following sizes:
$$H^{\f32}\, \mbox{norm} \sim a^{\f12},  \quad H^1\, \mbox{norm} \sim \d^{\f12} a^{\f12}.$$
For Christodoulou's initial data in \cite{Chr:book} ($a=\d^{-1}$):
$$H^{\f32}\, \mbox{norm} \sim{\d}^{-\f12}, \quad H^1\, \mbox{norm} \sim1.$$

Moreover, here $a$ could be chosen as a large universal constant (independent of $\d$).\\

The smaller the initial data, the harder to form a trapped surface. And $H^{\f32}$ norm is the critical norm for EVEs in $\mathbb{R}^{3+1}$.  We hence call our improved trapped surfaces formation criterion as \textit{a scale critical trapped surfaces formation criterion}.\\
\end{minipage}
\hspace{0.05\textwidth}

Theorem \ref{thm1.3} also answers a natural question in general relativity: \textit{to form a trapped surface, what is the least size of initial data?} Let $a$ be a large universal constant. In \cite{AL} we have constructed smooth characteristic initial data, for metric the $H^{\f32}$ norm of which is of size $\at$. \footnote{By definition $\chih$ is essentially $\partial_{\ub}g$ and it is of size $\at/|u|$. By dimensional analysis $\partial_{\ub}$ is of size $\d^{-1}$ and $\partial^{\f12}_{\ub}\thicksim\d^{-\f12}$, we have
$$\int_{H^{(0,\d)}_{u=0}}|\partial_{\ub} g|^2 =\int_0^{\d}\int_{S_{0,\ub'}}|\partial_{\ub} g(u,\ub')|^2 d\theta^1 d\theta^2 d \ub'\approx \d \at \at \approx \d a,$$
$$\int_{H^{(0,\d)}_{u=0}}|\partial^{\f32}_{\ub} g|^2 =\int_0^{\d}\int_{S_{0,\ub'}}|\partial^{\f12}_{\ub}\partial_{\ub} g(u,\ub')|^2 d\theta^1 d\theta^2 d \ub'\approx \d \d^{-\f12}\at \d^{-\f12}\at \approx a.$$ 
Since trivial initial data are prescribed along $\Hb_0$, the characteristic initial data are then of the following sizes:
$$H^{\f32}\, \mbox{norm} \sim a^{\f12},  \quad H^1\, \mbox{norm} \sim \d^{\f12} a^{\f12}, \quad H^{s}\, \mbox{norm} \sim \d^{\f{3-2s}{2}} a^{\f12}.$$} And these initial data lead to trapped surface formation in future.  \\

{\color{black} With the same $\d$ (length of short-pulse regions in \cite{Chr:book} and \cite{AL}), the energy input in \cite{AL} is much smaller and we expect to form a much smaller trapped surface.} The trapped surface formed in \cite{Chr:book} is of radius 1, while in \cite{AL} it is of radius $\delta a$. Hence, an additional difficulty arising for \cite{AL} is to control the behavior of solutions to EVEs in the region close to the vertex: $u=1$ and $\ub=0$.  

Instead of decaying, all the geometric quantities grow. In \cite{AL}, we introduce weighted estimates (as to study the decay rates) to carefully track and derive these growth rates.  \\

\subsection{Natural Questions}

In all results above, the boundary of the trapped region (apparent horizon) is not studied. The next natural questions are to find the apparent horizon and to show that it is emerging from a spacetime point. In modern physics, these are long-existing hypotheses. Given such a boundary, Ashtekar and Krishnan (\cite{AK03}, \cite{AK}) established the laws of black hole mechanics along it. 

In this paper, we address these two questions and give positive answers:
\begin{enumerate}
\item A new type of quasilinear elliptic equations is solved. Its solutions are corresponding to marginally outer trapped surface (MOTS) along an incoming null hypersurface. 
\item This equation is different from Jang's equation \footnote{Along a spacelike hypersurface, MOTS locates at the blow-up points for solutions to Jang's equation. In this paper,  along an incoming null hypersurface a new quasilinear elliptic equation for MOTS is solved. Its solutions remain regular.}. Its a priori estimates are established for the first time.
\item This is the first result proving that an apparent horizon is formed in dynamics. With a new idea to take the limit of a scale-critical result,  this apparent horizon is shown to emerge from a spacetime point and spacelike.
\end{enumerate} 
This paper thus proves the long-existing hypotheses about spacelike apparent horizon (dynamical horizon) in modern physics. The second law of black mechanics is verified in Section 11. These new spacetimes could be used as the test bed for exploring more physical thoughts.

\subsection{Marginally Outer Trapped Surface (MOTS)}

Fix $\ub\in [0, \d]$. Along each incoming null hypersurface $\Hb_{\ub}$, the boundary of the trapped region is called a marginally outer trapped surface (MOTS): let $\tr\chi'$ and $\tr\chib'$ denote respectively the outgoing null expansion and incoming null expansion of a 2-sphere $S$ with coordinates $(u,\ub,\o)=(1-R(\ub, \o), \ub, \o)$; on $S$ if conditions $\tr\chi'=0$ and $\tr\chib'<0$ hold pointwisely, then this $S$ is called a marginally outer trapped surface (MOTS).\\

The first theorem of this paper shows the existence of MOTS along each $\Hb_{\ub}$: 
\begin{theorem}\label{thm1.6}
For EVEs, prescribe characteristic initial data as in Theorem \ref{thm1.3} satisfying the condition \eqref{AL.upper.bound}. If we also have
\begin{equation*}
 \int_0^{\ub}|\chih_0|^2 (0, \ub', \o)d\ub'=f(\ub, \o)\ub a \quad  \mbox{for each}  \quad b\d a^{-\f12} \leq \ub\leq \d,
\end{equation*}
where $f(\ub, \o)$ is a smooth function with properties $20/21\leq f(\ub, \o)\leq 22/21$ and $|\partial^i_{\o} f(\ub, \o)|\lesssim 1$ for all $i\in \mathbb{N}$ and $\o\in \mathbb{S}^2$, then, along every $\Hb_{\ub}$ ($b\d a^{-\f12} \leq \ub\leq \d$), there exists a unique MOTS $M_{\ub}$. Moreover, if requiring
\begin{equation*}
\sum_{0\leq i< \infty, 0 \leq j<\infty}\delta^{j} a^{-\frac12}\|\nab^{j}_{e_4}\nab^{i}\chih_{0}\|_{L^{\infty}_{\ub}L^2(S_{0,\ub})}\leq B,
\end{equation*}
then for different $\ub$, $\{M_{\ub}\}$ form a 3-dimensional smooth hypersurface (dashed line in the colored region). We name this 3-dimensional hypersurface a marginally outer trapped tube.  
\end{theorem}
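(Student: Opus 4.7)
The plan is to realize a MOTS inside $\Hb_{\ub}$ as a graph
\[ S_{\ub} = \{(u,\ub,\o) : u = 1 - R(\ub, \o)\} \]
over $\mathbb{S}^2$, and translate the condition $\trch' = 0$ into a quasilinear elliptic equation on $\mathbb{S}^2$ for $R(\ub,\cdot)$. For this one replaces the double null frame $\{e_3, e_4\}$ by the frame $\{e_3', e_4'\}$ adapted to $S_{\ub} \subset \Hb_{\ub}$, with schematically $e_4' = e_4 + (\snab R)^{\sharp} + Q(|\snab R|^2) e_3$ so that $e_4'$ is null and tangent to $\Hb_{\ub}$; expanding the second fundamental form of $S_{\ub}$ in this frame yields a principal term $\slap_{\sg} R$ together with a zero-order potential built from $\trch,\trchb$ and curvature components evaluated at $S_{\ub}$, plus a quadratic correction in $\snab R$. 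The background spacetime in which this is computed is supplied by Theorem \ref{thm1.3} (pushed to the higher angular regularity needed for the second assertion by the infinite-order hypothesis on $\chih_0$), together with the full scale-critical hierarchy of An--Luk.

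\textbf{A priori estimates via rescaling.} Integrating the transport equation for $\trch$ along $L$ with the lower bound $\int_0^{\ub} |\chih_0|^2\,d\ub' = f(\ub,\o)\ub a$ gives $\trch(u,\ub,\o) \approx \frac{2}{|u|} - \frac{f(\ub,\o)\ub a}{|u|^2} + \text{lower order}$, so setting the leading part equal to zero forces $|u| \approx \tfrac12 f(\ub,\o)\ub a$; this motivates the rescaling $R(\ub,\o) = \ub a\,\rho(\ub,\o)$, after which the MOTS equation becomes a \emph{uniformly} elliptic equation for $\rho$ of the form
\[ -\tfrac{1}{\ub a}\slap_{\sg}\rho + \rho - \tfrac12 f(\ub,\o) = \mathcal{N}(\rho,\snab\rho;\ub,\o), \]
with $\mathcal{N}$ quadratic in $\snab\rho$ and $\rho - \tfrac12 f$, and every coefficient of $\mathcal{N}$ carrying a positive power of $\ub a/|u|^2$ or of $\d/\at$ coming from the An--Luk hierarchy. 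With $20/21 \leq f \leq 22/21$, this makes the equation a small perturbation of a coercive linear problem on $\mathbb{S}^2$ whose positive solution is pointwise comparable to $\tfrac12 f$; propagating this through the perturbation gives uniform $C^{k,\alpha}(\mathbb{S}^2)$ bounds on $\rho$ and on $\rho - \tfrac12 f$.

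\textbf{Existence, uniqueness, and the tube.} With the a priori bounds in hand, existence for each fixed $\ub$ follows from the method of continuity applied to a one-parameter family interpolating between the model equation $-\slap_{\sg}\rho + \rho = \tfrac12 f$ (uniquely solvable on $\mathbb{S}^2$) and the actual problem: openness holds because the linearized operator is $-(\ub a)^{-1}\slap_{\sg} + V$ with $V \geq c > 0$ plus a contractive correction, hence invertible; closedness is the estimate above. Uniqueness follows the same way, subtracting two solutions and applying the maximum principle to the resulting linear equation. The pointwise check $\trchb' < 0$ on $M_{\ub}$ is then automatic from the An--Luk bound $\trchb \approx -2/|u|$ together with $R \ll 1$. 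For the smoothness of the union $\{M_{\ub}\}$, under the infinite-order initial hypothesis the background is $C^\infty$ jointly in $(\ub, \o)$, and the same invertible linearization lets one apply the implicit function theorem to the equation $F(R;\ub) = 0$ in Banach scales $C^{k,\alpha}(\mathbb{S}^2)$; iterating up in $k$ gives joint smoothness of $R(\ub,\o)$, so the graphs $\{u = 1 - R(\ub,\o)\}$ assemble into a smooth $3$-dimensional hypersurface.

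\textbf{Main obstacle.} The hardest step is the a priori estimate: because the background Ricci coefficients and curvature components all grow toward the center $(u=1,\ub=0)$, one must carefully track the An--Luk weights in $\ub a/|u|$, $\d$, and $a^{-1/2}$ throughout the nonlinearity $\mathcal{N}$ to verify that every term is genuinely subleading to the principal linear operator on $\mathbb{S}^2$. Unlike Jang's equation this MOTS equation has no variational formulation and its solutions remain regular, so the full force of the maximum principle, linearization, and the scale-critical hierarchy is needed to close the estimate; this is exactly the new quasilinear elliptic analysis the paper advertises.
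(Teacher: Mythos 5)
Your overall architecture (graph formulation on $\Hb_{\ub}$, a priori bounds, continuity method with invertible linearization, maximum principle for uniqueness, regularity in $\ub$ for the tube) is the same as the paper's, but there is a genuine gap at the step you yourself identify as the crux: the a priori estimate. Your mechanism for it — that after the rescaling $R=\ub a\,\rho$ the equation is a ``small perturbation of a coercive linear problem'' because ``every coefficient of $\mathcal{N}$ carries a positive power of $\ub a/|u|^2$ or of $\d/\at$'' — is false for the dominant nonlinearity. The quadratic gradient term comes from $\f12\O\tr\chib|\nab R|^2\approx -\f{1}{R}|\nab R|^2$, and $\tr\chib\sim-2/R$ is a leading-order background quantity with no smallness from the An--Luk hierarchy; after the natural nondimensionalization its coefficient is $O(1)$, comparable to the principal part. (Note also that on the MOTS $|u|=R\approx \ub a/2$, so $\ub a/|u|^2\approx 4/(\ub a)$ is large, not small, and your rescaled equation $-\tfrac{1}{\ub a}\slap_{\sg}\rho+\rho-\tfrac12 f=\mathcal{N}$ is not dimensionally consistent with (\ref{1.12}); the correct rescaling removes all powers of $\ub a$ from the principal part.) Consequently ``propagating through the perturbation'' does not yield the gradient bound, and without $|\nab R|\ll 1$ the closedness step of your continuity argument, the invertibility of the linearization (whose zeroth-order coefficient is $\f{1}{\tR^3}(-\tR-\tR|\nab\tR|^2+4\O\tR^2\chibh_{ab}\nab^a\tR\nab^b\tR+\dots)$ and needs the gradient term to be subleading), the uniqueness argument, and the smoothness-in-$\ub$ argument are all unsupported.

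What is missing is precisely the paper's new elliptic input (Section \ref{C1 Estimate} and Appendix \ref{AppendixA}): a maximum-principle/Bochner argument applied not to $|\nab R|^2$ but to the weighted quantity $h(R)|\nab R|^2$ with $h(R)=1+\f{8}{\ub^2a^2}(R-\f{\ub a}{2})^2$, where the $C^0$ bound $|R-\f{\ub a}{2}|\leq\f{\ub a}{20}$ (itself obtained first, by the maximum principle applied to the structure $-\f1R+\f{\ub a f}{2R^2}$) makes $h''$ dominate the bad quartic gradient terms, a lower bound on $\mbox{Ric}$ of the unknown surface obtained from the Gauss equation with $\rhoc=\rho-\f12\chih\cdot\chibh$, and the structural positivity $-2R+2\ub a f(\ub,\o)\geq\f12 R$ entering through $\nab_3(\O^{-1}\tr\chi)$, which produces the good $|\nab R|^2/R^2$ term. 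Your implicit-function-theorem route to joint smoothness in $\ub$ is a reasonable alternative to the paper's difference-quotient argument, but it rests on the same linearization invertibility and hence on the missing gradient estimate, so the gap propagates there as well.
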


\begin{minipage}[!t]{0.4\textwidth}
\begin{tikzpicture}[scale=0.95]
\draw [white](3,-1)-- node[midway, sloped, below,black]{$H_0(u=0)$}(4,0);

\draw [white](0.6,1.5)-- node[midway,sloped,above,black]{$H_{1-b \delta \at}$}(1.6,2.5);
\draw [white](2,2)--node [midway,sloped,above,black] {$\Hb_{\delta}(\ub=\delta)$}(4,0);
\draw [white](1,1)--node [midway,sloped, below,black] {$\Hb_{0}(\ub=0)$}(3,-1);
\draw [dashed] (0, 4)--(0, -4);
\draw [dashed] (0, -4)--(4,0)--(0,4);
\draw [dashed] (0,0)--(2,2);
\draw [dashed] (0,1)--(1.5,2.5);
\draw [dashed] (0,-4)--(2,-2);
\draw [dashed] (0,2)--(3,-1);
\draw [very thick] (1,1)--(3,-1)--(4,0)--(2,2)--(1,1);
\draw [very thick] (1,1)--(0.5,1.5)--(1.5,2.5)--(2,2)--(1,1);
\fill[yellow!70!red] (1,1)--(3,-1)--(4,0)--(2,2)--(1,1);
\fill[yellow!30!red](1,1)--(0.5,1.5)--(1.5,2.5)--(2,2)--(1,1);
\draw [yellow!70!red](1,0.9)-- node[midway,sloped,above,black]{$H_{1-\delta a}$}(2,1.9);
\draw [->] (3.3,-0.6)-- node[midway, sloped, above,black]{$L$}(3.6,-0.3);
\draw [->] (1.4,1.3)-- node[midway, sloped, below,black]{$L$}(1.7,1.6);
\draw [->] (3.3,0.6)-- node[midway, sloped, below,black]{$\Lb$}(2.7,1.2);
\draw [->] (2.4,-0.3)-- node[midway, sloped, above,black]{$\Lb$}(1.7,0.4);
\draw [dashed] (0.875,1.875)--(2,2);
\end{tikzpicture}
\end{minipage}
\begin{minipage}[!t]{0.58\textwidth}
In this paper, we also call the marginally outer trapped tube defined above an apparent horizon. \\

In \cite{AL}, setting $1\ll b \leq a^{\f12} \leq \d^{-\f12}$, together with the upper bound  
$\sum_{i\leq 7}\|\nab^i \chih_0\|_{L^{\infty}_{\ub}L^2(S_{0,\ub})}\leq \at,$
we have existence to EVEs up to {\color{black} $u=1-b\d \at$. Here $S_{1-\d a, \d}$ is already trapped.}\\ 

If choosing $1\ll b\leq \at=\d^{-\f12}$, we have an apparent horizon (the dashed line) connecting two MOTS of radius $b\d^{\f12}$ and $1$. \\

Here we have a unique MOTS along each incoming null hypersurface $\Hb_{\ub}$, as a consequence the MOTS we construct is outermost and stable.    
\end{minipage}
\hspace{0.05\textwidth}

\subsection{Emergence of Apparent Horizon}
Since the results in \cite{AL} is scaling-critical (the result is independent of $\d$),  we could keep $a$ as a universal large constant and let $\ub\rightarrow 0$, together with Theorem \ref{thm1.6}, we have

\begin{theorem}[main theorem]\label{thm1.8}
Consider the following characteristic initial value problem for EVEs:  The initial incoming null hypersurface $\Hb_0$ is required to coincide with a backwards light cone in Minkowski space with $0\leq u \leq 1$. Given $\delta$, for every $B$, there exist $a_0=a_0 (B)$ and $b_0=b_0(B)$ sufficiently large. Pick any $a$ and $b$ satisfying $a_0 \leq a \leq \delta^{-1}$ and $b_0\leq b \leq a^{\frac12} \leq \delta^{-\frac12}$. And along $H_0$, for $0\leq \ub \leq \delta$ we prescribe $\chih_0$ such that \begin{equation*}
\sum_{i\leq 5,  j\leq 3} a^{-\frac12}\|{{\color{black}\ub}}^{j}\nab^{j}_{e_4}\nab^{i}\chih_{0}\|_{L^{\infty}_{\ub}L^2(S_{0,\ub})}\leq B,
\end{equation*}
Then EVEs (\ref{Ei}) admits a unique solution in the region: $0\leq \ub \leq \d$ and $0\leq u \leq 1-b\ub \at$. Moreover, if requiring
\begin{equation*}
 \int_0^{\ub}|\chih_0|^2 (\ub', \o)d\ub'=f(\ub, \o)\ub a \quad  \mbox{for each}  \quad 0<\ub\leq \d,
\end{equation*}
with smooth function $f(\ub,\o)$ satisfying $20/21\leq f(\ub, \o)\leq 22/21$ and $|\partial^i_{\o} f(\ub, \o)|\lesssim 1$ for all $i\in \mathbb{N}$ and $\o\in\mathbb{S}^2$, then along each $\Hb_{\ub}$ $(0< \ub \leq \d)$, there exists a unique MOTS $M_{\ub}$. Furthermore, if requiring
\begin{equation*}
\sum_{0\leq i< \infty, 0 \leq j<\infty} a^{-\frac12}\|{{\color{black}\ub}}^{j}\nab^{j}_{e_4}\nab^{i}\chih_{0}\|_{L^{\infty}_{\ub}L^2(S_{1,\ub})}\leq B,
\end{equation*}
then except at the vertex where $u=1$ and $\ub=0$, the collection of MOTS $\{M_{\ub}\}$ forms {\color{black} a smooth marginally outer trapped tube (apparent horizon).  In Section \ref{section DH} under more restrictive initial-data condition,  we prove that the apparent horizon is spacelike.} \\
\end{theorem}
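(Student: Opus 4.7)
The plan rests on a single observation: both Theorem \ref{thm1.3} and Theorem \ref{thm1.6} are \emph{scale-critical}, in the sense that $\delta$ enters only through $\delta$-weighted norms of $\chih_0$ and never as an independent length scale. Hence if $\chih_0$ on $H_0$ satisfies the hypotheses of each theorem on $[0,\delta]$, then for any $\ub_* \in (0,\delta]$ the restricted data on $[0,\ub_*]$ automatically satisfy the analogous hypotheses with $\delta \leftarrow \ub_*$, keeping $a,b,B$ fixed. This is the single ingredient that allows passage all the way down to the vertex.

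First, for any $\ub_* \in (0,\delta]$ I apply Theorem \ref{thm1.3} with $\delta \leftarrow \ub_*$, obtaining a unique regular solution of \eqref{Ei} on $\{0 \leq \ub \leq \ub_*\} \cap \{0 \leq u \leq 1 - b\ub_*\at\}$. Uniqueness of the characteristic IVP makes these slabs mutually compatible, and letting $\ub_* \uparrow \delta$ assembles them into a single solution on $\{0 \leq \ub \leq \delta,\ 0 \leq u \leq 1 - b\ub\at\}$, while letting $\ub_* \downarrow 0$ exhausts the region down to the (excluded) vertex $(u,\ub)=(1,0)$. For the MOTS on a fixed cone $\Hb_{\ub_0}$ with $0<\ub_0\leq \delta$, I apply Theorem \ref{thm1.6} with $\delta \leftarrow \ub_0$: the standing hypothesis $b \leq \at$ puts $\ub_0$ inside the admissible interval $[b\ub_0 a^{-1/2}, \ub_0]$, so Theorem \ref{thm1.6} yields a unique MOTS $M_{\ub_0} = \{u = 1 - R(\ub_0,\omega)\} \subset \Hb_{\ub_0}$. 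Varying $\ub_0$ over $(0,\delta]$ produces the family $\{M_\ub\}$.

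For smoothness of the tube away from the vertex, I reapply Theorem \ref{thm1.6} with $\delta \leftarrow \ub_0'$ chosen so that $\ub_0$ lies strictly \emph{inside} $[b\ub_0' a^{-1/2}, \ub_0']$ --- possible precisely because $b < \at$ --- obtaining a smooth portion of marginally outer trapped tube in an open neighborhood of each $\ub_0$, which I then patch together. For the spacelike character I parametrize the tube by $(\ub,\omega) \mapsto (1 - R(\ub,\omega), \ub, \omega)$ and pull back the double-null metric
\[
g = -2\Omega^2 \, du \otimes_s d\ub + \sg_{AB}\, d\theta^A d\theta^B.
\]
The induced metric is Riemannian precisely when $\partial_\ub R > 0$ strictly. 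Differentiating the MOTS equation $F(R,\ub,\omega) := \tr\chi'|_{u = 1 - R(\ub,\omega)} = 0$ in $\ub$ and inverting the MOTS stability operator $\partial_R F$ --- whose invertibility, with strictly positive principal eigenvalue, follows from the outermost and stable character of $M_\ub$ implied by uniqueness in Theorem \ref{thm1.6} --- yields $\partial_\ub R = -(\partial_R F)^{-1} \partial_\ub F$. The lower bound $\int_0^\ub |\chih_0|^2\,d\ub' = f(\ub,\omega)\,\ub a$ with $f \in [20/21, 22/21]$ forces $-\partial_\ub F$ to have a definite sign and magnitude comparable to $a$, giving $\partial_\ub R \gtrsim \at$ on every compact $\ub$-interval away from zero.

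The main obstacle is uniformity as $\ub \to 0$: the spectral gap of $\partial_R F$, the Green's function bounds needed to extract $\partial_\ub R \gtrsim \at$, and the positivity of $\Omega^2$ and $\sg_{AB}$ at the MOTS all have to hold with constants independent of $\ub_0$, even though the MOTS radius $R \sim \ub a$ shrinks to zero. The scale-critical weighted estimates of An--Luk \cite{AL} are exactly what make this work: every geometric quantity entering the MOTS equation and its linearization scales consistently, so the relevant norms depend only on $a$ and $B$. The degeneracy at the vertex itself is inherent, since $R(\ub,\omega)$ and $\partial_\ub R(\ub,\omega)$ both vanish at $\ub = 0$ at controlled but non-uniform rates, which is why smoothness of the tube is claimed only away from the vertex.
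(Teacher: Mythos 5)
Your reduction of the hyperbolic part and of the existence/uniqueness of each $M_{\ub}$ to the scale-critical Theorems \ref{thm1.3} and \ref{thm1.6} with $\delta$ replaced by $\ub_*$ is exactly the paper's limiting argument, and invoking the rescaled Theorem \ref{thm1.6} (under the all-orders bound on $\chih_0$) for smoothness of the tube away from the vertex is likewise consistent with what the paper does. The genuine gap is in the spacelike part, which is precisely the new content of Theorem \ref{thm1.8} beyond Theorem \ref{thm1.6}. You write that the lower bound $\int_0^{\ub}|\chih_0|^2\,d\ub'=f(\ub,\o)\,\ub a$ ``forces $-\partial_{\ub}F$ to have a definite sign and magnitude comparable to $a$'', and from this you extract $\partial_{\ub}R\gtrsim \at$ by inverting the stability operator. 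But $-\partial_{\ub}F$ is, to leading order, the pointwise quantity $|\chih|^2$ evaluated at the MOTS, and the hypothesis only controls its $\ub$-average: pointwise in $\o$ it can be arbitrarily small, and indeed $\chih$ must vanish at some point of every sphere (the topological obstruction recorded in the paper's Appendix B remark), so ``magnitude comparable to $a$'' is false as a pointwise statement. With a merely nonnegative source, inverting $\D'-\nu$ ($\nu>0$) via the maximum principle yields only $\partial_{\ub}R\geq 0$, which is not enough: strict spacelikeness of the induced metric requires a strictly positive lower bound on $\partial_{\ub}R$ that dominates the off-diagonal terms coming from $\nab R$.

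This is exactly where the paper has to work: in Sections \ref{regularity of horizon}--\ref{section DH} it derives the linearized equation $\D h-\f{\nu}{\ub^2a^2}h+\tilde{\nu}\f{a}{\ub^2a^2}=0$, obtains $\bar h=[\f12+o(1)]a$ by integrating ($\int\nu h=\int\tilde\nu a$ with $\nu\approx 4$, $\bar{\tilde\nu}\approx 2$), and then shows $\|h-\bar h\|_{L^\infty}=o(1)\,a$ so that $\partial_{\ub}R=[\f12+o(1)]a>0$. That last $L^\infty$ step needs $\tilde\nu$ (i.e.\ $|\chih_0|^2/a$ at the MOTS) to be nearly constant in $\o$, which is why Section \ref{Construction of Special Initial Data} restricts to an open set of data with $|\chih_0|^2=a$ outside a small disc; the integral condition with $20/21\leq f\leq 22/21$ alone does not give this. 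Your proposal has no substitute for this averaging-plus-oscillation-control step (nor for the choice of data that makes it possible), so the claimed bound $\partial_{\ub}R\gtrsim\at$, and hence the spacelike conclusion, is not established. A secondary, smaller point: the invertibility of $\partial_RF$ with a quantitative spectral gap is obtained in the paper by direct computation of the linearization ($\D'-\tR^{-2}[1+o(1)]$), not as a soft consequence of uniqueness/outermostness, and some such quantitative statement is needed to run your implicit-differentiation identity uniformly as $\ub\to 0$.
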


{\color{black}Under the initial condition in Section \ref{section DH}}, for the area of each $M_{\ub}$: $\mbox{Area}(M_{\ub})$, in Section \ref{entropy} we further prove
\begin{theorem}\label{thm entropy}
$$\lim_{\ub\rightarrow 0}\mbox{Area}(M_{\ub})=0 \quad \quad \mbox{and} \quad \quad \mbox{Area}(M_{\ub'})>\mbox{Area}(M_{\ub}) \quad \mbox{for} \quad \ub'>\ub.$$
\end{theorem}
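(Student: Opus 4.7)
The plan is to derive a first-variation formula for $A(\ub) := \mathrm{Area}(M_{\ub})$ and exploit three pieces of structure already in hand: the MOTS equation $\tr\chi'_{M_{\ub}} = 0$, the trapped inequality $\tr\chib'_{M_{\ub}} < 0$ (built into the definition of a MOTS), and the spacelike character of the apparent horizon established in Theorem \ref{thm1.8}. Parametrize $M_{\ub}$ as the graph $u = 1 - R(\ub, \omega)$ inside $\Hb_{\ub}$. Since $\partial_u$ is null and orthogonal to the angular directions in the double-null gauge, the induced metric on $M_{\ub}$ is just $\sg|_{(1-R,\ub,\omega)}$, and
\[
A(\ub) = \int_{\mathbb{S}^2} \sqrt{\det \sg\big(1 - R(\ub, \omega), \ub, \omega\big)}\, d\omega.
\]

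A direct computation identifies the incoming null normal of $M_{\ub}$ as $\Lb' = \partial_u$ (the null generator of $\Hb_{\ub}$) and the outgoing null normal as
\[
L' = \partial_{\ub} + \Omega^2 |\snab R|^2 \partial_u - 2\Omega^2 (\snab R)^A \partial_{\theta^A}.
\]
Taking $V = \partial_{\ub} - (\partial_{\ub} R)\, \partial_u$ to be the tangent vector along the apparent horizon in the $\ub$-direction and splitting $V$ into tangent and normal parts along $M_{\ub}$ yields $V^{\perp} = L' + \lambda\, \Lb'$ with $\lambda := -\big(\partial_{\ub} R - \Omega^2 |\snab R|^2\big)$. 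The induced metric on the $3$-dimensional apparent horizon has determinant proportional to $(\partial_{\ub} R - \Omega^2 |\snab R|^2)\det\sg$; the spacelike conclusion of Theorem \ref{thm1.8} is therefore exactly the inequality $\partial_{\ub} R > \Omega^2 |\snab R|^2$, and consequently $\lambda < 0$. The codimension-two first-variation formula for area then reads
\[
\frac{dA}{d\ub} = \int_{M_{\ub}} \big( \tr\chi'_{M_{\ub}} + \lambda\, \tr\chib'_{M_{\ub}} \big)\, d\mu_{M_{\ub}}.
\]
The MOTS equation annihilates the first integrand, and the second is strictly positive pointwise since $\lambda < 0$ and $\tr\chib'_{M_{\ub}} < 0$. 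Integrating in $\ub$ gives the strict monotonicity $A(\ub') > A(\ub)$ for $\ub' > \ub$.

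For the limit $A(\ub) \to 0$ as $\ub \to 0$, I would use the quantitative control $R(\ub, \omega) \lesssim \ub\, a$, uniform in $\omega$, coming from the MOTS construction in Theorem \ref{thm1.6} (the MOTS sits at $u \approx 1 - \ub a$), together with the near-round comparison $\sg \asymp R^2 \gamma_{\mathbb{S}^2}$ at such small scales; this gives $A(\ub) \asymp R^2 \to 0$. The main technical input underlying the whole argument is the pointwise spacelike inequality $\partial_{\ub} R > \Omega^2 |\snab R|^2$, which is not an a priori obvious consequence of the MOTS equation itself but is a delicate output of the elliptic analysis in Theorem \ref{thm1.8}; once it is granted, both the monotonicity and the vanishing of the area at the vertex follow cleanly from the first-variation identity and the size bound on $R$.
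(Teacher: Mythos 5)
Your proposal is correct and follows essentially the same route as the paper: Section \ref{entropy} likewise proves the limit by comparing $\sqrt{\det g}(u,\ub,\cdot)$ to the round sphere of radius $1-u=R\asymp \ub a$ via $\partial_{\ub}\sqrt{\det g}=\sqrt{\det g}\,\O\tr\chi$, and proves monotonicity by a first-variation identity in which the deformation vector normal to $M_{\ub}$ is split into null components, with the outgoing piece killed by $\tr\chi'=0$ and the incoming piece contributing $(\mbox{negative})\times\tr\chib'>0$ because the apparent horizon is spacelike. The only cosmetic difference is that you extract the sign of the incoming null coefficient by an explicit coordinate decomposition of $\partial_{\ub}-(\partial_{\ub}R)\partial_u$ (giving $\lambda=-(\partial_{\ub}R-\O^2|\nab R|^2)<0$), whereas the paper writes the unit normal $r^\mu$ within the horizon as $\f12 l_4 e_4'-\f12 l_3 e_3'$ using the timelike normal to the horizon — the same spacelikeness input in different clothing.
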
 

In \cite{AK03, AK} Ashtekar and Krishnan studied black hole mechanics along an spacelike apparent horizon (dynamical horizon). Inspired by their works, we call Theorem \ref{thm entropy} \textit{the area law} along the dynamical horizon.

\begin{remark} In order to go further towards the singularity, here we introduce \textit{a limiting argument}:

\begin{minipage}[!t]{0.4\textwidth}
\begin{tikzpicture}[scale=1]
\draw [white](0.6,1.9)--node [midway,sloped, above,black] {$S_{1-\tilde{\ub} a, \tilde{\ub}}$}(1.2, 2.5);
\fill[yellow!70!red](0,2)--(1,1)--(2,2)--(0,2);
\draw [white](3,-1)-- node[midway, sloped, below,black]{$H_0(u=0)$}(4,0);
\draw [white](2,2)--node [midway,sloped, above,black] {$S_{1-\delta a, \delta}$}(2.8,2.8);
\draw [white](2,2)--node [midway,sloped,above,black] {$\Hb_{\delta}(\ub=\delta)$}(4,0);
\draw [white](1,1)--node [midway,sloped, below,black] {$\Hb_{0}(\ub=0)$}(3,-1);
\draw [dashed] (0, 4)--(0, -4);
\draw [dashed] (0, -4)--(4,0)--(0,4);
\draw [dashed] (0,0)--(2,2);
\draw [dashed] (0,-4)--(2,-2);
\draw [thick] (0,2)--(3,-1);
\fill[yellow!70!red] (1,1)--(3,-1)--(4,0)--(2,2)--(1,1);
\fill[yellow!30!red] (0.5, 2)--(3.25,-0.75)--(3,-1)--(0.25, 1.75)--(0.5, 2);
\draw [white] (1.2, 1.3)--node [midway,sloped,above,black] {$\Hb_{\tilde{\ub}}(\ub=\tilde{\ub})$}(3.25,-0.75); 

\draw [thick] (1,1)--(3,-1)--(4,0)--(2,2)--(1,1);

\draw [thick] (0.5, 2)--(3.25,-0.75); 
\draw [thick] (0.25, 1.75)--(0.5, 2); 
\draw[dashed] (0, 2)--(2.1, 1.9);
\end{tikzpicture}
\end{minipage}
\begin{minipage}[!t]{0.5\textwidth}
The proof in \cite{AL} depends only on the largeness of $a, b$ and is independent of $\d$. With characteristic initial data as in Theorem \ref{thm1.8}, for each $\tilde{\ub}\in [0,\d]$ we have $b\leq \at \leq \d^{-\f12}\leq \tilde{\ub}^{-\f12}$. Viewing $\tilde{\ub}$ as the new $\delta$ in Theorem \ref{thm1.3}, we then obtain the existence of EVEs in the region $0\leq u \leq 1-b\tilde{\ub} \at$ and $0\leq \ub\leq \tilde{\ub}$. Let $\tilde{\ub}\rightarrow 0$. We have the solution to EVEs in the whole colored region. \\

The dashed line contained in the colored region is an apparent horizon up to the vertex: $u=1$ and $\ub=0$. 
\end{minipage}
\hspace{0.05\textwidth}  

\end{remark}

\begin{remark}
For the characteristic initial data in Theorem \ref{thm1.8}, there is a jump discontinuity for $\chih_0$: for $\ub\leq 0$, $\chih_0 (\ub, \o)=0$; while for $\ub>0$, $\chih_0\approx \at$. To be more precise, here we prove the existence of EVEs in the whole colored region by combining the local well-posedness result established by Luk and Rodnianski in \cite{L-R:Propagation} and the a priori estimates derived by An and Luk in \cite{AL}. And the spacetimes constructed here could be viewed as generalizations of the well known Vaidya spacetime, which is a spherically symmetric solution (with jump discontinuity) to Einstein-dust system.  
\end{remark}

\begin{remark}
In this paper, an apparent horizon is defined to be the collection of MOTSs $M_{\ub}$, and each $M_{\ub}$ is required to lay on $\Hb_{\ub}$. According to Theorem \ref{thm1.8}, on each $\Hb_{\ub}$ there exists a unique MOTS. In this sense, we also have a uniqueness result for the apparent horizon (defined in this way). 
\end{remark}
 
\begin{remark}
By a similar argument, we also have the following result:
\begin{theorem}\label{thm1.9}
Consider the following characteristic initial value problem for EVEs:  The initial incoming null hypersurface $\Hb_0$ is required to coincide with a backwards light cone in Minkowski space with $0\leq u \leq 1$. Given $\delta$, for every $B$, there exist $a_0=a_0 (B)$ and $b_0=b_0(B)$ sufficiently large. Pick any $a$ and $b$ satisfying $a_0 \leq a \leq \delta^{-1}$ and $b_0\leq b \leq a^{\frac12} \leq \delta^{-\frac12}$. And along $H_0$, for $0\leq \ub \leq 2\delta$ we prescribe $\chih_0$ such that \begin{equation*}
\sum_{i\leq 5,  j\leq 3} a^{-\frac12}\|{{\color{black}\ub}}^{j}\nab^{j}_{e_4}\nab^{i}\chih_{0}\|_{L^{\infty}_{\ub}L^2(S_{0,\ub})}\leq B,
\end{equation*}
Then EVEs (\ref{Ei}) admits a unique solution in the region: $0\leq \ub \leq 2\d$ and $0\leq u \leq 1-b\ub \at$. Moreover, if requiring
\begin{equation*}
 \begin{cases}
 \int_0^{\ub}|\chih_0|^2 (\ub', \o)d\ub'=f(\ub, \o)\ub a \quad  \mbox{for each}  \quad 0<\ub\leq \d,\\
 \int_{\d}^{\ub}|\chih_0|^2 (\ub', \o)d\ub'=0 \quad \quad \quad \quad \, \, \, \mbox{for each} \quad \d\leq \ub <2\d,
\end{cases}
\end{equation*}
with smooth function $f(\ub,\o)$ satisfying $20/21\leq f(\ub, \o)\leq 22/21$ and $|\partial^i_{\o} f(\ub, \o)|\lesssim 1$ for all $i\in \mathbb{N}$ and $\o\in\mathbb{S}^2$, then along each $\Hb_{\ub}$ $(0< \ub \leq 2\d)$, there exists a unique MOTS $M_{\ub}$. And all the MOTSs $\{M_{\ub}\}_{0\leq \ub \leq 2\d}$ form a 3-dimensional piecewise smooth apparent horizon. \\

\begin{minipage}[!t]{0.4\textwidth}
\begin{tikzpicture}[scale=0.9]
\draw [white](3,-1)-- node[midway, sloped, below,black]{$H_0(u=0)$}(5,1);
\draw [white](2,2)--node [midway,sloped,above,black] {$\Hb_{\delta}(\ub=\delta)$}(4,0);
\draw [white](1,1)--node [midway,sloped, below,black] {$\Hb_{0}(\ub=0)$}(3,-1);
\draw [dashed] (0, 4)--(0, -4);
\draw [dashed] (0, -4)--(4,0)--(0,4);
\draw [dashed] (0,0)--(2,2);
\draw [dashed] (0,1)--(1.5,2.5);
\draw [dashed] (0,-4)--(2,-2);
\draw [dashed] (0,2)--(3,-1);
\draw [thick] (1,1)--(3,-1)--(4,0)--(2,2)--(1,1);
\draw [thick] (1,1)--(0.5,1.5)--(1.5,2.5)--(2,2)--(1,1);
\fill[yellow!70!red] (1,1)--(3,-1)--(4,0)--(2,2)--(1,1);
\fill[yellow!30!red](1,1)--(0.5,1.5)--(1.5,2.5)--(2,2)--(1,1);
\fill[red!40!](1, 2)--(0,2)--(0.5, 1.5)--(1, 2);
\draw [yellow!70!red](1,0.9)-- node[midway,sloped,above,black]{$H_{1-\delta a}$}(2,1.9);
\draw [dashed] (0.875,1.875)--(2,2);
\draw[thick] (0,2)--(0.5, 1.5);
\draw [dashed] (0.875, 1.875)--(0,2);
\draw[thick] (1.5, 2.5)--(2.5, 3.5);
\draw[thick] (2.5, 3.5)--(3,3);
\draw[thick] (2,2)--(3,3);
\draw[thick] (3,3)--(5,1);
\draw[thick] (5,1)--(4,0);
\fill[yellow!30!red](1.5,2.5)--(2.5,3.5)--(3,3)--(2,2)--(1.5,2.5);
\fill[yellow!70!red] (2,2)--(3,3)--(5,1)--(4,0)--(2,2);
\draw[dashed] (2,2)--(3.02, 2.98);
\draw [thin](2,2)--node [midway,sloped,above,black] {$\Hb_{\delta}(\ub=\delta)$}(4,0);
\draw [thin](3,3)--node [midway,sloped,above,black] {$\Hb_{2\delta}(\ub=2\delta)$}(5,1);
\draw [thin](0.6,1.6)-- node[midway,sloped,above,black]{$H_{1-b \delta \at}$}(1.5,2.5);
\end{tikzpicture}
\end{minipage}
\begin{minipage}[!t]{0.5\textwidth}
{\color{black} Note: the dashed line in the shadowed region is the 3-dimensional apparent horizon. \\

For $0 \leq \ub \leq \d$, it is emerging from the ``center" and expands as ``short-pulse'' coming in. \\

For $\d\leq \ub \leq 2\d$, after switching off the ``short pulse'', intuitively this new piece of apparent horizon will be tilted toward outgoing null direction.}
\end{minipage}

\end{theorem}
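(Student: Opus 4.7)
The plan is to treat the two intervals $0<\ub\le\d$ and $\d\le\ub\le 2\d$ separately, and then analyze what happens at the junction $\ub=\d$. First, for $0<\ub\le \d$ the hypotheses of Theorem \ref{thm1.8} are satisfied verbatim: the upper bound \eqref{AL.upper.bound} holds, and the lower/smoothness conditions on $f(\ub,\o)$ coincide with those in Theorem \ref{thm1.8}. So Theorem \ref{thm1.8} gives existence of a unique smooth solution to EVEs in the region $\{0\le \ub\le \d,\,0\le u\le 1-b\ub\at\}$ together with a unique MOTS $M_{\ub}$ on each $\Hb_{\ub}$ and the fact that $\{M_{\ub}\}_{0<\ub\le\d}$ assembles into a smooth spacelike apparent horizon (with the vertex $u=1,\ub=0$ as its tip).

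Next, for $\d\le\ub\le 2\d$, I extend the existence in the region $\{\d\le\ub\le 2\d,\,0\le u\le 1-b\ub\at\}$. The data on $H_0$ over $[\d,2\d]$ has $\chih_0\equiv 0$ along $H_0$, and the data on $\Hb_\d$ (playing the role of the new initial incoming hypersurface) have already been controlled in the first step. The a priori estimates of \cite{AL} rely on integrals $\int_0^{\ub}|\chih_0|^2\,d\ub'$ which are locked at size $\sim\ub a$ by the condition at $\ub=\d$ and do not grow for $\ub>\d$; hence all the hierarchy bounds in Theorem \ref{thm1.3} continue to hold, and the scale-critical argument of \cite{AL} extends the solution to $0\le u\le 1-b\ub\at$ for every $\ub\in[\d,2\d]$. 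Combined with the local well-posedness of Luk--Rodnianski to handle the transition region near $\ub=\d$, one obtains a unique EVE solution in the full region $\{0\le\ub\le 2\d,\,0\le u\le 1-b\ub\at\}$.

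With the spacetime in hand, for every $\ub\in(\d,2\d]$ I solve the quasilinear elliptic MOTS equation on $\Hb_{\ub}$ exactly as in Theorem \ref{thm1.6}. The a priori control of the null geometry on $\Hb_{\ub}$ provides the required bounds on $\tr\chi$, $\chih$, $\omb$, etc., so the existence-uniqueness scheme for the MOTS equation (the central analytic input of this paper) applies without change, yielding a unique $M_{\ub}\subset\Hb_{\ub}$. Smoothness of $\ub\mapsto M_{\ub}$ in each of the open intervals $(0,\d)$ and $(\d,2\d)$ follows as in Theorem \ref{thm1.8}, since the elliptic linearization is invertible uniformly.

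The expected main difficulty is the junction at $\ub=\d$. Across $\ub=\d$ the transverse derivative $\nab_{e_4}\chih$ (and hence higher $\nab_{e_4}^j$ derivatives) along $H_0$ jumps, because $\chih_0$ is turned off. This jump propagates along outgoing null cones into the spacetime, so although the metric and $\chih$ themselves stay in the spaces required by the MOTS estimates, the coefficients of the MOTS equation fail to be $C^\infty$ in $\ub$ across $\ub=\d$. Consequently I expect $\ub\mapsto M_{\ub}$ to be merely piecewise smooth (smooth on $(0,\d)$, smooth on $(\d,2\d]$, continuous across $\ub=\d$). To make this precise I would compare the two one-sided limits $\lim_{\ub\uparrow\d}M_{\ub}$ and $\lim_{\ub\downarrow\d}M_{\ub}$: both solve the same MOTS equation on $\Hb_\d$ with the same (continuous) geometric data, hence coincide by the uniqueness part of Theorem \ref{thm1.6}, giving continuity; the failure of higher regularity is then exactly traced to the jump in $\nab_{e_4}\chih_0$ at $\ub=\d$. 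The spacelike character on each side is inherited from Theorem \ref{thm1.8}'s proof, since the sign computation of the induced metric on the apparent horizon depends only on one-sided quantities.
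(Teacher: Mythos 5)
Your overall plan coincides with the paper's own proof of Theorem \ref{thm1.9}: keep Theorem \ref{thm1.8} on $(0,\d]$, extend the hyperbolic existence to $\ub\in[\d,2\d]$ by the An--Luk estimates (the key point being that $\int_0^{\ub}|\chih_0|^2$ is frozen at $M_0^*(\o)=\d a f(\d,\o)$) together with Luk--Rodnianski local well-posedness across the jump, rerun the apriori estimates, the two continuity arguments and the maximum-principle uniqueness on each $\Hb_{\ub}$ with $M_0$ replaced by $M_0^*$ (so $R\approx \d a/2$ and the Bochner weight uses $\d a$ rather than $\ub a$ --- a harmless but necessary modification to your ``applies without change''), and obtain one-sided smoothness in $\ub$ from the difference-quotient elliptic equation, with the junction $\ub=\d$ handled by uniqueness of the MOTS on $\Hb_\d$.

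Two points should be corrected. First, your closing claim that ``the spacelike character on each side is inherited'' is wrong for the piece $\ub\in(\d,2\d]$: the spacelikeness computation in Section \ref{section DH} rests on $g'_{\ub\ub}=4\,\partial R/\partial\ub=4h(\ub,\o)[1+o(1)]$ with $h=[\f12+o(1)]a$, and that size of $h$ comes from $\tilde\nu\sim |\chih|^2/a\approx 2$, i.e.\ from the incoming radiation being on. For $\ub>\d$ one has $|\chih|^2\approx 0$, hence $|h(\ub,\o)|\le o(1)\cdot a$ and $|\partial R/\partial\ub|\le o(1)\cdot a$, so positivity of the induced metric cannot be concluded; the paper explicitly expects this piece to tilt toward a null (isolated) horizon. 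The theorem does not assert spacelikeness there, so this does not damage the stated conclusion, but the claim should be dropped. Second, your mechanism for the loss of regularity is imprecise: the relevant jump is in $|\chih_0|^2$ itself (from $\approx a$ to $0$), which enters $\partial_\ub\tr\chi$ through the Raychaudhuri equation and hence the inhomogeneous term $\tilde\nu\, a/(\ub^2a^2)$ of the difference-quotient equation; the paper quantifies this by showing $\partial R/\partial\ub\geq 0.3\,a$ for $\ub<\d$ versus $|\partial R/\partial\ub|\lesssim o(1)\cdot a$ for $\ub>\d$, so the horizon already fails to be $C^1$ at $\ub=\d$ --- not merely a failure of higher smoothness caused by a jump in $\nab_{e_4}\chih_0$ while ``$\chih$ stays in the required spaces''. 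To make the piecewise-smoothness assertion a proof rather than an expectation, you need this quantitative one-sided analysis of $h(\ub,\o)$ as in Sections \ref{regularity of horizon}--\ref{section DH}.
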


\end{remark}

\subsection{Strategy of the Proof}

\subsubsection{Deformation of Foliations}

In the original double null foliation, on $S_{1-\ub a, \ub}$ the outgoing null expansion ($\tr\chi|_{S_{1-\ub a, \ub}}$) is negative pointwisely. Hence $S_{1-\ub a, \ub}$ is a trapped surface.  

To find a 2-sphere $M_{\ub}$, where the  
outgoing null expansion vanishes for all points (thus $M_{\ub}$ is a marginally outer trapped surface). We deform the foliation on $\underline{H}_{\ub}$: $\{(u, \ub, \o): u=1-R(\o)\}$. 
Adapted to this set, we use the following null frames \footnote{ \color{black}{Here $\{e_1, e_2, e_3, e_4\}$ and $\{e'_1, e'_2, e'_3, e'_4\}$ are two sets of null tetrads (moving frames). $\{e_1, e_2\}$ and $\{e'_1, e'_2\}$ are tangent to some spacelike $2$-spheres. And $\{e_3, e_4\},\,\,\{e'_3, e'_4\}$ are corresponding null pairs. See details in Section  \ref{sec.setting}-\ref{deformation formula}. }}
\begin{equation}\label{nfr}
e'_3=e_3, \quad e'_a=e_a-\O e_a(R)e_3, \quad e'_4=e_4-2\O e^a(R) e_a+\O^2 |\nab R|^2 e_3. 
\end{equation}
By definition $e_3(u)=\O^{-1}$. We thus have $e'_a(u+R-1)=e_a(R)-e_a(R)\O e_3(u)=0$. Here $e_3$ is orthogonal to any vector tangent to $\Hb_{\ub}$, thus we can easily check
\begin{equation*}
g(e'_a, e'_b)=g(e_a, e_b)=\d_{ab}, \quad g(e'_4, e'_a)=g(e'_4, e'_4)=0, \quad g(e'_3, e'_4)=-2.
\end{equation*}
We then compute the null expansion $\tr\chi'$ for these new frames and get (see Proposition \ref{1st deformation} and also \cite{KLR} ) 
\begin{equation}\label{deformation equation}
\begin{split}
\tr\chi'=&\tr\chi-2\O\D R(\o)-4\O\eta_a\nab^a R(\o)-4\O^2\chibh_{bc}\nab^b R(\o) \nab^c R(\o)\\
&-\O^2 \tr\chib |\nab R(\o)|^2-8\O^2\omb|\nab R(\o)|^2.
\end{split}
\end{equation}
For a fixed $\o\in\mathbb{S}^2$, $\D$ and $\nab$ are Laplacian operator and angular derivative to the original double null foliation on $S_{1-R(\o), \ub}$ at the point $(1-R(\o), \ub, \o)$. And 
$$\eta_a:=-\f12 g(D_{e_3}e_a, e_4), \quad \chib_{bc}:=g(D_{e_b}e_3, e_c), \quad \omb:=-\f14 g(D_{e_3}e_4, e_3).$$
Here $D$ is the covariant derivative of 4 dimensional spacetimes; $\chibh$ and $\tr\chib$ are respectively the traceless and trace parts of $\chib$.  

\begin{remark}
In \cite{KLR}, Klainerman, Luk and Rodnianski derived (\ref{deformation equation}) and first solved the quasilinear elliptic inequality
\begin{equation*}
\begin{split}
\tr\chi'=&\tr\chi-2\O\D R(\o)-4\O\eta_a\nab^a R(\o)-4\O^2\chibh_{bc}\nab^b R(\o) \nab^c R(\o)\\
&-\O^2 \tr\chib |\nab R(\o)|^2-8\O^2\omb|\nab R(\o)|^2\\
<&0.
\end{split}
\end{equation*}
They thus gave \textit{a fully anisotropic mechanism for formation of trapped surfaces in vacuum}. 
\end{remark}

\subsubsection{A Quasilinear Elliptic Equation} 

For the purpose of a priori estimates, we rewrite $\D R$ with $\D'_{M_{\ub}} R$. $\D'_{M_{\ub}}$ is the Laplacian operator on the unknown sphere $M_{\ub}$.
Proposition \ref{change laplacian} will give
\begin{equation*}
\begin{split}
\D R=&e^a e_a (R)\\
=&\D'_{M_{\ub}} R-2\O \chibh_{ab}\nab^a R \nab^b R.
\end{split}
\end{equation*}
Then $\tr\chi'=0$ is equivalent to a quasilinear elliptic equation: 
\begin{equation}\label{1.12}
\begin{split}
&\D'_{M_{\ub}} R(\o)+2\eta_b\nab^b R(\o)+\f12 \O \tr\chib |\nab R(\o)|^2\\
&+4\O \omb |\nab R(\o)|^2-\f{\O^{-1}}{2} \tr\chi\\
=0.
\end{split}
\end{equation}
\begin{remark}
Note that $\nab'_{M_{\ub}} R=e_a'(R)=e_a(R)=\nab R.$
For simplicity, we also use $\nab R$ to stand for $\nab'_{M_{\ub}} R$. 
\end{remark}

\begin{remark}
In (\ref{1.12}), $\eta_b, \O, \tr\chib, \omb, \tr\chi$ are geometric quantities respect to the original double null foliation and take values at point $(1-R(\o), \ub, \o)$. 
\end{remark}

\begin{remark}
We notice that the induced metric $g'_{M_{\ub}}$ on $M_{\ub}$ depends on $R(\o)$, $\o$ but not on $\nab R(\o)$. We come to this conclusion because the induced metric on 2-sphere $M_{\ub}$, i.e., $u=1-R(\theta_1, \theta_2, \ub)$ along $\underline{H}_{\ub}$ is  
$$g'_{\theta_i \theta_j}=g_{\theta_i \theta_j}+\f{\partial (1-R)}{\partial \theta_i} \f{\partial (1-R)}{\partial \theta_j}\cdot g(\f{\partial}{\partial u}, \f{\partial}{\partial u})=g_{\theta_i \theta_j}.$$
Here we employ the property of double null foliation: $\f{\partial}{\partial u}$ is null and $g(\f{\partial}{\partial u}, \f{\partial}{\partial u})=0$. Hence,
$$g'_{M_{\ub}}(1-R(\o), \ub, \o)=g(1-R(\o), \ub, \o).$$
And equation (\ref{1.12}) is a quasilinear elliptic equation. 
\end{remark}

\subsubsection{A priori Estimates}
Fix $\underline{H}_{\ub}$. Denote $M_0(\o):=\int_0^{\ub} |\chih_0|^2 (0, \ub', \o)d\ub'$. 
By the assumptions of Theorem \ref{thm1.8} $$M_0(\o)=\ub a f(\ub, \o), \quad \mbox{with} \quad \f{20}{21}\leq f(\ub, \o) \leq \f{22}{21}.$$
From the estimates in \cite{AL}, we have
$$\tr\chi=\f{2}{R(\o)}-\f{M_0(\o)}{R(\o)^2}+l.o.t., \quad \tr\chib=-\f{2}{R(\o)}+l.o.t., \quad \O=1+l.o.t..$$
Also treat $\eta_b$ and $\omb$ as lower order terms, equation (\ref{1.12}) is thus transfered to
$$\D'_M R(\o)-\f{1}{R(\o)}|\nab R(\o)|^2-\f{1}{R(\o)}+\f{M_0(\o)}{2R(\o)^2}+l.o.t.=0.$$
We first derive $C^0$ estimates for $R(\o)$ by using maximal principle and obtain
$$\f{10}{21}[1+o(1)]\ub a \leq R(\o)\leq \f{11}{21}[1+o(1)]\ub a.$$
We then derive $C^1$ (gradient) estimates. {\color{black}A first try is to use Bochner's formula:
$$\D'_M |\nab R|^2=2|\nab^2 R|^2+2\mbox{Ric}(\nab R, \nab R)+2\nab'^{a} R \nab'_{a}(\D'_M R).$$
If we rewrite $2\nab'^{a} R \nab'_{a}(\D'_M R)$ with the help of equation (\ref{1.12}) and the definition $e'_a=e_a-\O e_a(R)e_3$, we notice that the highest order nonlinear term $|\nab R|^4$ would pop up. However, the coefficient of $|\nab R|^4$ in above equation is negative, which prevents us from using maximal principle to derive uniform bound of $|\nab R|$. To cure this problem,  we need to explore the structures of equation (\ref{1.12}) further. 

Instead of calculating $\D'_M |\nab R|^2$, we consider $\D'_M\l h(R)|\nab R|^2\r$, with a function $h(R)$ to be determined. (Later we will see that $h(R)=1+\f{8}{\ub^2 a^2}(R-\f{\ub a}{2})^2$ is a good choice.)
With Bochner's formula and (\ref{1.12}), for this time we have}
\begin{equation}\label{weightedBochner}
\begin{split}
&\D'_M\l h(R)|\nab R|^2\r\\
=&h'(R)\D'_M R\cdot |\nab R|^2+h''(R)|\nab R|^4+\f{2h'(R)\nab'^a R}{h(R)}\cdot \nab'_{a}\l h(R)|\nab R|^2\r-\f{2h'(R)h'(R)}{h(R)}|\nab R|^4\\
&+h(R)\l 2|\nab^2 R|^2+2\mbox{Ric}(\nab R, \nab R)+2\nab'^{a} R \nab'_{a}(\D'_M R)\r\\
=&h'(R)|\nab R|^2\cdot(-\f12\O \tr\chib |\nab R|^2+\f{\O^{-1}}{2}\tr\chi-2\eta^b \nab_b R-4\O \omb |\nab R|^2)\\
&+h''(R)|\nab R|^4+\f{2h'(R)\nab'^a R}{h(R)}\cdot \nab'_{a}\l h(R)|\nab R|^2\r-\f{2h'(R)h'(R)}{h(R)}|\nab R|^4\\
&+h(R)\l 2|\nab^2 R|^2+2\mbox{Ric}(\nab R, \nab R)\r\\
&+h(R)\cdot 2\nab'^a R \nab'_a (-\f12\O \tr\chib |\nab R|^2+\f{\O^{-1}}{2}\tr\chi-2\eta^b \nab_b R-4\O \omb |\nab R|^2)\\
\end{split}
\end{equation}

With the calculations in Subsection \ref{C1 Estimate}, for $|c, c_a|\ll 1$ we further have
\begin{equation*}
\begin{split}
&\D'_M\l h(R)|\nab R|^2\r\\
\geq&h(R)\l 2|\nab^2 R|^2+2\mbox{Ric}(\nab R, \nab R)\r\\
&+\l \f{h'(R)}{R}+h''(R)-\f{2h'(R)h'(R)}{h(R)}-\f{2h(R)}{R^2}-\f{2h'(R)}{R}\r \cdot (1+c) \cdot |\nab R|^4\\
&+\l h'(R)\cdot \f{\O^{-1}}{2}\tr\chi+h(R)(\f{1}{2R^2}+\f{c}{R^2})\r |\nab R|^2 \\
&+\f{2h'(R)\nab'^a R}{h(R)}\cdot \nab'_{a}\l h(R)|\nab R|^2\r-({\O\tr\chib+8\O\omb)\nab'^a R}\cdot\nab'_a(h(R)|\nab R|^2)\\
&-h(R)|\nab R|^2\nab^a R \f{c_a}{R^2}+h'(R)|\nab R|^2\nab^a R \f{c_a}{R}+h(R)\nab^a R \f{c_a}{R^2}-4h(R)\nab^a R \nab'_a\nab'_b R\,\eta^b.
\end{split}
\end{equation*}

{\color{black}\noindent We hope $\f{h'(R)}{R}+h''(R)-\f{2h'(R)h'(R)}{h(R)}-\f{2h(R)}{R^2}-\f{2h'(R)}{R}$ (the coefficient in front of $|\nab R|^4$) to be positive. For further application, we also hope that the coefficient of $|\nab R|^2$ is positive.}

Construct
$$h(R)=1+\f{8}{\ub^2 a^2}(R-\f{\ub a}{2})^2.$$ 
With $C^0$ estimates $|R-\f{\ub a}{2}|\leq \f{\ub a}{20}$ and the estimates in \cite{AL}, it is straightforward to check
\begin{equation*}
\begin{split}
& h'(R)=\f{16}{\ub^2 a^2} (R-\f{\ub a}{2}), \quad \quad h''(R)=\f{16}{\ub^2 a^2}, \quad \quad |h(R)-1|\leq \f{1}{50},\\
&|h'(R)|\leq \f{4}{5\ub a}, \quad \quad |\O^{-1}\tr\chi|\leq \f{3\ub a}{20 R^2},\\
&\l \f{h'(R)}{R}+h''(R)-\f{2h'(R)h'(R)}{h(R)}-\f{2h(R)}{R^2}-\f{2h'(R)}{R}\r \cdot (1+c) \geq \f{2}{\ub^2 a^2},\\
&\l h'(R)\cdot \f{\O^{-1}}{2}\tr\chi+h(R)(\f{1}{2R^2}+\f{c}{R^2})\r\geq \f{1}{6R^2}.
\end{split}
\end{equation*}

For $2 \mbox{Ric}(\nab R, \nab R)$ term, relying on an estimate (\ref{RicciBound}) in appendix we have
\begin{equation*}
\begin{split}
2\mbox{Ric}(\nab R, \nab R)\geq& -\f{\ub \at}{R}\cdot\f{|\nab R|^4}{R^2}-\f{\ub \at}{R^2} \cdot|\nab^2 R|\cdot |\nab R|^2-\f{\ub a}{R^3}\cdot\f{\ub \at}{R}\cdot|\nab R|^3.
\end{split}
\end{equation*}

We also use $2|\nab^2 R|^2$ to control all the terms involving $\nab^2 R$. Thus, we arrive at 
\begin{equation*}
\begin{split}
\D'_M \l h(R)|\nab R|^2 \r\geq& |\nab R|^4 \cdot\f{1}{\ub^2 a^2}+|\nab R|^2\cdot \f{1}{8R^2}-\f{1}{R^2}\cdot o(1).
\end{split}
\end{equation*}
{\color{black}Note that both the coefficient of $|\nab R|^4$ and the coefficient of $|\nab R|^2$ are positive!}

\noindent Denote $[h(R)|\nab R|^2](\tilde{\o}_{max}):=\max_{\o\in S^2} [h(R(\o))|\nab R(\o)|^2]$.  We hence derive 
$$0\geq \D'_{M}[h(R)|\nab R|^2](\tilde{\o}_{max})\geq |\nab R|^4(\tilde{\o}_{max})\f{1}{\ub^2 a^2}+|\nab R|^2(\tilde{\o}_{max})\f{1}{8R^2}-\f{1}{R^2}\cdot o(1).$$
This implies
$$|\nab R|^2(\tilde{\o}_{max})\ll1.$$
Let $|\nab R|(\o_{max}):=\max_{\o\in S^2} |\nab R|$.  We derive
\begin{equation*}
\begin{split}
|\nab R|^2 (\o_{max})=&\f{1}{[h(R)](\o_{max})}[h(R)|\nab R|^2](\o_{max})\\
\leq& \f{1}{[h(R)](\o_{max})}[h(R)|\nab R|^2](\tilde{\o}_{max})\\
\leq& \f{50}{49}[h(R)|\nab R|^2](\tilde{\o}_{max})\ll 1.
\end{split}
\end{equation*}

Therefore, we conclude
\begin{equation}
|\nab R| (\o)\ll1 \quad \mbox{for all} \quad \o\in S^2. 
\end{equation}

Furthermore, together with elliptic estimates, we obtain
$$\|\nab^2 R(\o)\|_{L^p(M)}\lesssim (\ub a)^{-1+\f{2}{p}}, \quad \quad \|R(\o)\|_{C^{1,q}(M)}\lesssim (\ub a)^{-q}, \quad \quad \|R(\o)\|_{C^{2,q}(M)}\lesssim (\ub a)^{-1-q}.$$

\subsubsection{Existence of MOTS}
With a priori estimates, to solve (\ref{1.12}) we employ the method of continuity.  Recall
$$M_0(\o)=\ub a f(\ub, \o), \quad \mbox{with} \quad \f{20}{21} \leq f(\ub, \o) \leq \f{22}{21}.$$
We first solve the following equation:
\begin{equation}\label{medium step}
\D'_M R(\o)+\f12\O\tr\chib|\nab R(\o)|^2-\f{1}{R(\o)}+\f{\ub a f(\ub, \o)}{2 R(\o)^2}=0.
\end{equation}
We introduce $\lambda$ and let 
$$F(R(\o),\lambda):=\D'_M R(\o)+\f12\O\tr\chib|\nab R(\o)|^2-\f{1}{R(\o)}+\f{\ub a}{2 R(\o)^2}[1+(f(\ub, \o)-1)\lambda].$$
When $\lambda=0$, $R(\o)={\ub a}/{2}$ is a solution to 
$$F(R(\o),0)=\D'_M R(\o)+\f12\O\tr\chib|\nab R(\o)|^2-\f{1}{R(\o)}+\f{\ub a}{2 R(\o)^2}=0.$$
For $0\leq \tilde{\lambda} \leq 1$, assume that $\tilde{R}(\o)$ is a solution to 
\begin{equation}\label{tildeR}
\D'_M \tR(\o)+\f12\O\tr\chib|\nab \tR(\o)|^2-\f{1}{\tR(\o)}+\f{\ub a}{2 \tR(\o)^2}[1+(f(\ub, \o)-1)\tilde{\lambda}]=0.
\end{equation}
{\color{black}Since $0\leq\tilde{\lambda}\leq 1$ we have ${20}/{21}\leq 1+(f(\ub, \o)-1)\tilde{\lambda} \leq {22}/{21}.$ Thus by a priori estimates outlined above, we have $$\f{10}{21}[1+o(1)]\ub a \leq \tR(\o) \leq \f{11}{21}[1+o(1)]\ub a  \quad \mbox{and} \quad |\nab \tR|\ll 1.$$} 
\noindent By direct calculations {\color{black} in Section \ref{Continuity1}}, it follows 
\begin{equation}\label{FR linearization}
\begin{split}
&F_{R}(\tR(\o),\lambda)[W]\\
=&\lim_{\epsilon\rightarrow 0}\f{1}{\epsilon}\l F(\tR+\epsilon W, \lambda)-F(\tR,\lambda) \r\\
=&\D'_{1-\tR,\ub}W+\f{1}{\tR^3}\l -\tR{\color{black}+}\tR|\nab\tR|^2-\ub a (f(\ub, \o)-1)(\tilde{\lambda}-\lambda)\r\cdot [1+o(1)]\cdot W.
\end{split}
\end{equation}

\noindent {\color{black}Note that in this step we always choose $\lambda$ close to $\tilde{\lambda}$. {\color{black}Together with $|\nab R|\ll 1$}, hence the coefficient in front of $W$ is negative. Thus operator $F_R(\tR(\o),\lambda)[W]$ is invertible for $W$. By the method of continuity, repeating this process, we extend $\lambda$ to the whole region $[0,1]$. Hence for $0\leq \lambda \leq 1$ there exists $R_{\lambda}(\o)$ such that $F(R_{\lambda}(\o),\lambda)=0$. In particular $R_1(\o)$ is a solution to (\ref{medium step}), which is $F(R_1(\o),\lambda)=0$. }

This is the starting point for another continuity argument: we further define operator $G$ through 
\begin{equation*}
\begin{split}
G(R(\o),\lambda):=&\D'_M R(\o)+\f12 \O\tr\chib|\nab R(\o)|^2-\f{1}{R(\o)}+\f{\ub a f(\ub, \o)}{2R(\o)^2}\\
&+\lambda \l 2\eta_b\nab^b R(\o)+4\O \omb |\nab R(\o)|^2-\f{\O^{-1}}{2} \tr\chi+\f{1}{R(\o)}-\f{\ub a f(\ub, \o)}{2R(\o)^2}\r.
\end{split}
\end{equation*}
When $\lambda=0$, we have $R_1(\o)$ is a solution to $F(R_1(\o),1)=0$, that is $G(R_1(\o),0)=0$. Therefore, it follows that 
$\tr\chi'=0$ is equivalent to $G(R(\o),1)=0$:
\begin{equation*}
\begin{split}
&\D'_M R(\o)+2\eta_b \nab^b R(\o)+\f12\O\tr\chib|\nab R(\o)|^2\\
&+4\O\omb|\nab R(\o)|^2-\f{\O^{-1}}{2}\tr\chi\\
=&0.
\end{split}
\end{equation*}
For any $\tilde{\lambda}\in (0,1)$, let $\tR(\o)$ solve
\begin{equation*}
\begin{split}
G(\tR(\o),\tilde{\lambda})=&\D'_M \tR(\o)+\f12 \O\tr\chib|\nab \tR(\o)|^2-\f{1}{\tR(\o)}+\f{\ub a f(\ub, \o)}{2\tR(\o)^2}\\
&+\tilde{\lambda} \l 2\eta_b\nab^b \tR(\o)+4\O \omb |\nab \tR(\o)|^2-\f{\O^{-1}}{2} \tr\chi+\f{1}{\tR(\o)}-\f{\ub a f(\ub, \o)}{2\tR(\o)^2}\r\\=&0.
\end{split}
\end{equation*}
By direct calculations {\color{black} in Section \ref{Continuity2}, for $\lambda$ close to $\tilde{\lambda}$ we have} 
\begin{equation*}
\begin{split}
&G_{R}(\tR(\o),\lambda)[W]\\
=&\lim_{\epsilon\rightarrow 0}\f{1}{\epsilon}\l G(\tR+\epsilon W, \lambda)-G(\tR,\lambda) \r\\
=&\D'_{1-\tR,\ub}W-\f{1}{\tR^2}\cdot [1+o(1)]\cdot W,
\end{split}
\end{equation*}
which is invertible for $W$. {\color{black}By the method of continuity, we then extend $\lambda$ to the whole region $[0,1]$.} Hence, there exists a solution $R(\o)$ for $G(R(\o),1)=0$, which satisfies
$$\tr\chi'=0.$$

\subsubsection{Uniqueness of MOTS}

Let's assume we have two solutions $\tR(\o)$ and $R(\o)$ satisfying $\tr\chi'=0$.  We then derive an elliptic equation for $\tR(\o)-R(\o)$. Together with a priori estimates and bounds derived in \cite{AL}, from Section \ref{uniqueness} we have
\begin{equation*}
\begin{split}
&\D_{1-R(\o),\ub}\l \tR(\o)-R(\o)\r-\nu(\o)\l \tR(\o)-R(\o)\r\\
&+\f{1}{\ub^2 a^2}\cdot \l \tR(\o)-R(\o) \r \cdot o(1)\\
&+\f{1}{\ub^2 a^2}\cdot \f{\partial}{\partial \theta_i} \l \tR(\o)-R(\o)\r\cdot o(1)\\
=0,
\end{split}
\end{equation*}
with
$$\nu(\o)\geq \f{64}{81\ub^2 a^2}.$$
By maximal principle, we conclude that 
$$\tR(\o)=R(\o) \quad \mbox{for} \quad \o\in \mathbb{S}^2.$$

\subsubsection{Smoothness of Apparent Horizon}
Along each incoming null hypersurface $\underline{H}_{\ub}$, there exists a unique 2-dimensional MOTS $M_{\ub}$, which is corresponding to the unique $C^2$ solution $R(\ub, \o)$ to $\tr\chi'=0$. Varying $\ub'$, $u=1-R(\ub', \o)$ is thus a three dimensional hypersurface. 

{\color{black} For different $\ub'$ and $\ub$, with the help of a priori estimates and bounds derived in \cite{AL}, in Section \ref{regularity of horizon} we then derive an elliptic equation for $\f{R(\ub', \o)-R(\ub, \o)}{\ub'-\ub}:$}  
\begin{equation*}
\begin{split}
&\Delta_{1-R(\ub, \o), \ub} \l \f{R(\ub',\o)-R(\ub,\o)}{\ub'-\ub}-h(\ub, \o; \ub')\r-\f{\nu(\ub, \o; \ub')}{\ub^2 a^2} \l \f{R(\ub',\o)-R(\ub,\o)}{\ub'-\ub}-h(\ub, \o; \ub')\r \\
&+\f{1}{\ub^2 a^2} \cdot \l\f{R(\ub', \o)-R(\ub, \o)}{\ub'-\ub}-h(\ub, \o; \ub')\r \cdot o(1)\\
&+\f{1}{\ub^2 a^2} \cdot \l\f{\partial}{\partial \theta_i}\f{R(\ub', \o)-R(\ub, \o)}{\ub'-\ub}-h(\ub, \o; \ub')\r \cdot o(1)\\
=&\f{a}{\ub^2 a^2}\cdot o(1).
\end{split}
\end{equation*}
{\color{black}With this equation, in Section \ref{regularity of horizon} we derive uniform bound for $\f{R(\ub', \o)-R(\ub, \o)}{\ub'-\ub}$. Via standard argument for difference quotient, we prove that $\f{\partial R}{\partial \ub}(\ub, \o)=\lim_{\ub'\rightarrow\ub}\f{R(\ub', \o)-R(\ub, \o)}{\ub'-\ub}$ exists and lies in $L^2(M_{\ub})$. Then combining repeated arguments for difference quotient and elliptic estimates, we improve the regularity of $R(\ub, \o)$ and obtain that (with sufficiently smooth initial data in \cite{AL})$$\f{\partial^k R}{\partial \ub^k}\in C^{\infty}(M) \quad \mbox{for any} \quad k\in \mathbb{Z}^{+}.$$ 

 And we will show that $h(\ub, \o;\ub')$ is close to $\f{R(\ub', \o)-R(\ub, \o)}{\ub'-\ub}$ in $L^{\infty}(M)$ norm. By estimating $h(\ub,\o;\ub')$, we calculate the size of $\f{\partial R}{\partial \ub}$. More precisely, }we have both $h(\ub, \o; \ub')$ and $\nu(\ub, \o; \ub')$ being smooth functions respect to $\ub$ and $\o$. And there exist smooth functions $h(\ub, \o)$ and $\nu(\ub, \o)$ such that 
$$h(\ub, \o)=\lim_{\ub'\rightarrow\ub}h(\ub, \o; \ub'), \quad \quad \nu(\ub,\o)=\lim_{\ub'\rightarrow\ub}\nu(\ub, \o; \ub').$$
When $\ub'$ is close to $\ub$, in Section \ref{regularity of horizon} we derive
$$\f{64}{81} \leq \nu(\ub, \o; \ub')\leq \f{1088}{375}, \quad \quad \quad \f{64}{81} \leq \nu(\ub, \o)\leq \f{1088}{375}.$$
{\color{black}And under the conditions in Section \ref{section DH} we obtain
$$h(\ub, \o; \ub')=[\f12+o(1)]a,  \quad \quad \quad h(\ub, \o)=[\f12+o(1)]a.$$
Through elliptic estimates and standard argument for difference quotient, we have 
$$|\f{\partial R(\ub,\o)}{\partial \ub}-h(\ub, \o)|\leq a\cdot o(1).$$}
Recall the apparent horizon we have constructed is a three dimensional hypersurface:
\begin{equation*}
u=1-R(\theta_1, \theta_2, \ub).
\end{equation*}
The components of the induced metric read
\begin{equation*}
g'_{\theta_i \theta_j}=g_{\theta_i \theta_j}+\f{\partial (1-R)}{\partial \theta_i}\cdot \f{\partial(1-R)}{\partial \theta_j} \cdot g(\f{\partial}{\partial u}, \f{\partial}{\partial u})=g_{\theta_i \theta_j},
\end{equation*}
\begin{equation*}
g'_{\ub\, \ub}=g_{\ub\, \ub}+2\f{\partial (1-R)}{\partial \ub}\cdot \f{\partial \ub}{\partial \ub} \cdot g(\f{\partial}{\partial u}, \f{\partial}{\partial \ub})=4\f{\partial R}{\partial \ub}=4h(\ub, \o)\cdot[1+o(1)],
\end{equation*}
\begin{equation*}
g'_{\theta_i \ub}=g_{\theta_i \ub}+\f{\partial (1-R)}{\partial \theta_i}\cdot \f{\partial \ub}{\partial \ub}\cdot g(\f{\partial}{\partial u}, \f{\partial}{\partial \ub})=2\f{\partial R}{\partial \theta_i}.
\end{equation*}
Our parameter $a$ is a large fixed positive constant. Hence the tangent vectors $\f{\partial}{\partial \theta_1}, \f{\partial}{\partial \theta_2}, \f{\partial}{\partial \ub}$ are all spacelike. 
Let $\lambda_1, \lambda_2, \lambda_3$ be any real numbers. Since $h(\ub, \o)=[1/2+o(1)]a$, 
\begin{equation*}
\begin{split}
&g'(\lambda_1 \f{\partial}{\partial \theta_1}+\lambda_2 \f{\partial}{\partial \theta_2}+\lambda_3 \f{\partial}{\partial \ub},\lambda_1 \f{\partial}{\partial \theta_1}+\lambda_2 \f{\partial}{\partial \theta_2}+\lambda_3 \f{\partial}{\partial \ub})\\
=&\lambda_1^2 \cdot g_{\theta_1 \theta_1}+\lambda_2^2 \cdot g_{\theta_2 \theta_2}+4\lambda_1 \lambda_3 \f{\partial R}{\partial \theta_1}+4\lambda_2 \lambda_3 \f{\partial R}{\partial \theta_2}+\lambda_3^2 \cdot h(\o)\cdot [1+o(1)]\\
\geq& 0.
\end{split}
\end{equation*}
Therefore, the apparent horizon formed is spacelike hence a dynamical horizon defined in \cite{AG, AK03, AK}.  

Using the property that the apparent horizon is spacelike, in Section {\ref{entropy}} we prove 
$$\mbox{Area}(M_{\ub'})>\mbox{Area}(M_{\ub}) \quad \mbox{for} \quad \ub'>\ub.$$
This is corresponding to the second law of black hole mechanics. 

\subsection{Related Results on Spacelike Hypersurfaces and Comparison} 
In this paper, the MOTS is defined along an incoming null hypersurface. A MOTS can also be defined along a spacelike hypersurface and it plays an important role in proving positive mass theorem and Penrose's inequality. We refer the interested readers to \cite{Al, AEM, AMS, AM, AG, AK03, AK, E1, E2, M, SY, Yau} and references therein.\\

As to the existence of MOTS, on a spacelike hypersurface Andersson and Metzger \cite{AM} and Eichmair \cite{E1} proved

\textit{Theorem 1.1 in \cite{AM} and Theorem 3.1 in \cite{E1} \footnote{For more detailed statement of these two theorems, interested readers are referred to Theorem 3.3 in \cite{AEM}. }: Set $(N, g, k)$ to be a Cauchy data set. Assume that $N$ is compact and has two boundary components, an inner and an outer boundary. Assuming that the inner boundary is outer trapped and the outer boundary is outer untrapped, then there exists a stable MOTS in $N$.  }

In both \cite{AM} and \cite{E1}, Jang's equation is employed. MOTSs are not critical points for a variational principle, hence the familiar minimization arguments for the existence of minimal surfaces do not generalize to MOTSs. In a talk given at the Miami Waves conference in 2004, Schoen \cite{Sc} pointed out that blowup surfaces for Jang's equation are marginal surfaces and actually provides a result replacing the above mentioned barrier arguments.  In \cite{AM} Andersson and Metzger gave a closer analysis of the blow-up surface. Together with curvature estimates for MOTS and a novel bending procedure to convert the one-sided trapping assumption into the two-sided boundary
curvature conditions, they provided a detailed proof of Schoen's approach. In \cite{E1} Eichmair proved the same results by employing Perron method and tools from geometric measure theory to force and control a blow-up of Jang's equation. With Perron method, in \cite{E1} Eichmair also settled down the Plateau problem. In both \cite{AM} and \cite{E1}, their theorems are beyond the perturbative regime. \\

In this paper, the MOTS is constructed along an incoming null hypersurface. Let $M_{\ub}(\o): ( u=1-R(\o), \ub, \o )$ be the MOTS along $\Hb_{\ub}$. The elliptic equation for the unknown $R(\o)$ is quasilinear in nature. But  as stated earlier, the induced metric $g'_{M_{\ub}}$  on $M_{\ub}$ depends on unknown function $R(\o)$ and independent variable $\o$ but not on $\nab R(\o)$. While, for Jang's equation the induced metric (on a graph $\hat{M}=(x, u(x))$ in the Riemannian product space $(\hat{M}\times\mathbb{R}, g+dt^2)$) depends also on the gradient of the unknown function. Therefore, the form of equation for MOTS in this paper is much nicer than Jang's equation. Using a priori estimates, we conclude $R(\o)$ is $C^{\infty}(\o)$.  

Another advantage of this paper is that we solve hyperbolic EVEs and quasilinear elliptic equation for MOTS at the same time. We first construct spacetimes with trapped surface formation. And within these spacetimes, we solve the elliptic equation for MOTS. Thus, we do not need any assumptions on the (unsolved) spacetimes as in  \cite{AM} and \cite{E1}. We prescribe assumptions directly on the characteristic Cauchy initial data.  

Moreover, this approach allows us to prove that the marginally outer trapped tube (apparent horizon) is formed from a point $O$. By applying the scale critical trapped surface formation criterion \cite{AL}, we have the existence of EVEs in a spacetime region up to $O$ along incoming null hypersurfaces. And we solve the equation for MOTS along these null hypersurfaces. This gives a sequence of MOTS connecting $O$. Note that the methods as in \cite{AM, E1} do not apply here. Because we do not have (existence) informations about the spacetime to the future of $O$, and we hence do not have a proper spacelike initial data set $(N, g, k)$. 

Finally, it has to be stated that the characteristic initial data prescribed for Theorem \ref{thm1.8} has a jump discontinuity for $\chih_0$. Therefore, the spacetimes constructed in this paper could be considered as generalizations of the well-known Vaidya spacetimes \footnote{Vaidya spacetime describes a spherically symmetric solution to Einstein-dust system, which is either emitting or absorbing null dusts.} in physics literature. 

\subsection{Outline of the Paper}
In Section 2 and Section 3, we give the basic settings. In Section 4, we derive a priori estimates. In Section 5 and Section 6, we apply the method of continuity. In Section 7, we derive a lower bound for injectivity radius for MOTS. In Section 8, we study uniqueness of apparent horizon.  In Section 9, we prove regularity of apparent horizon. In Section 10, we demonstrate that the apparent horizon constructed is spacelike. In Section 11, we verify the second law of black hole mechanics.

\subsection{Acknowledgements}
We are grateful for enlightening discussions with Jonathan Luk and for his helpful suggestions on an earlier version of the manuscript. We thank Po-Ning Chen, Demetrios Christodoulou, Zheng-Chao Han, Sergiu Klainerman, Yakov Shlapentokh-Rothman, Shadi Tahvildar-Zadeh and Willie Wong for valuable conversations.

\section{Basic Setup} \label{sec.setting}

In this section, we review the setup of double null foliation and the coordinate system. We also define the geometric quantities associated with them.

\subsection{Double Null Foliation and Coordinate System}\label{secdnf}
To explore the geometric structures hidden in EVEs, we define the following double null foliation:

\begin{minipage}[!t]{0.5\textwidth}
\begin{tikzpicture}[scale=0.95]

\draw [white](3,-1)-- node[midway, sloped, below,black]{$H_0(u=0)$}(4,0);

\draw [white](0.5,1.5)-- node[midway,sloped,above,black]{$H_{1-b \delta \at}$}(1.5,2.5);
\draw [white](2,2)--node [midway,sloped,above,black] {$\Hb_{\delta}(\ub=\delta)$}(4,0);
\draw [white](1,1)--node [midway,sloped, below,black] {$\Hb_{0}(\ub=0)$}(3,-1);
\draw [dashed] (0, 4)--(0, -4);
\draw [dashed] (0, -4)--(4,0)--(0,4);
\draw [dashed] (0,0)--(2,2);
\draw [dashed] (0,1)--(1.5,2.5);
\draw [dashed] (0,-4)--(2,-2);
\draw [dashed] (0,2)--(3,-1);
\draw [very thick] (1,1)--(3,-1)--(4,0)--(2,2)--(1,1);
\draw [very thick] (1,1)--(0.5,1.5)--(1.5,2.5)--(2,2)--(1,1);
\fill[yellow!70!red] (1,1)--(3,-1)--(4,0)--(2,2)--(1,1);
\fill[yellow!30!red](1,1)--(0.5,1.5)--(1.5,2.5)--(2,2)--(1,1);
\draw [white](1,1)-- node[midway,sloped,above,black]{$H_{1-\delta a}$}(2,2);
\draw [->] (3.3,-0.6)-- node[midway, sloped, above,black]{$e_4$}(3.6,-0.3);
\draw [->] (1.4,1.3)-- node[midway, sloped, below,black]{$e_4$}(1.7,1.6);
\draw [->] (3.3,0.6)-- node[midway, sloped, below,black]{$e_3$}(2.7,1.2);
\draw [->] (2.4,-0.3)-- node[midway, sloped, above,black]{$e_3$}(1.7,0.4);
\end{tikzpicture}
\end{minipage}
\begin{minipage}[!t]{0.5\textwidth}
Let $u$ and $\ub$ be solutions to the eikonal equations
$$g^{\mu\nu}\partial_\mu u\partial_\nu u=0,$$
$$g^{\mu\nu}\partial_\mu\ub\partial_\nu \ub=0.$$
We choose $u$ and $\ub$ satisfying $u=0$ on $H_0$ and $\ub=0$ on $\Hb_0$. \\

Here $\ub$ is increasing towards the future
while $u$ is decreasing towards the future.
\end{minipage}
\hspace{0.05\textwidth} 
Let
$$L'^\mu=-2g^{\mu\nu}\partial_\nu u,\quad  \quad \Lb'^\mu={\color{black}-}2g^{\mu\nu}\partial_\nu \ub.$$ 
Both $L'$ and $\Lb'$ are future directed, null geodesic vector fields. Define
$$2\Omega^{-2}=-g(L',\Lb').$$
Denote
$$e_3=\Omega\Lb', \quad \quad e_4=\Omega L'$$
to be the normalized null pair such that 
$$g(e_3,e_4)=-2.$$
 Define also
$$\Lb=\Omega^2\Lb', \quad \quad L=\Omega^2 L'.$$

By prescribing 
$$\Omega=1\quad\mbox{on $H_0$ and $\Hb_0$},$$
we fix the gauge on the initial characteristic hypersurfaces.We denote the level sets of $u$ as $H_u$ and the level sets of $\ub$ as $\Hb_{\ub}$. By the eikonal equations, $H_u$ and $\Hb_{\ub}$ are null hypersurface. The  intersections of the hypersurfaces $H_u$ and $\Hb_{\ub}$ are topologically 2-spheres. We denote them by $S_{u,\ub}$. 

For the spacetime in a neighborhood of $S_{0,0}$, we define a coordinate system $(u,\ub,\th^1,\th^2)$: Let $(\th^1,\th^2)$ be a coordinate system on the standard sphere $S_{0,0}$ and with the property that on each coordinate patch the metric $\gamma$ is smooth, bounded and positive definite. 
Next, the coordinates on the initial hypersurfaces are defined by requiring $\th^A$ to 
be constant along null generators of the initial hypersurface. In the spacetime, we also define $u$ and $\ub$ to be solutions to the eikonal equations:
$$g^{\mu\nu}\partial_\mu u\partial_\nu u=0,\quad g^{\mu\nu}\partial_\mu\ub\partial_\nu \ub=0.$$
{\color{black}Furthermore, we propagate $\th^1, \th^2$ along $\Lb$ direction: defining $\th^1, \th^2$ by solving
$$\Lb \th^1=0, \quad \mbox{and} \quad \Lb \th^2=0.$$
}
\noindent According to this coordinate system $(u,\ub,\th^1,\th^2)$, we express $e_3$ and $e_4$ as
$$e_3=\Omega^{-1}\frac{\partial}{\partial u}, \quad e_4=\Omega^{-1}\left(\frac{\partial}{\partial \ub}+d^A\frac{\partial}{\partial \th^A}\right)$$
for some $d^A$ satisfying $d^A=0$ on $\Hb_0$. Hence, the metric $g$ takes the form
$$g=-2\Omega^2(du\otimes d\ub+d\ub\otimes du)+\gamma_{AB}(d\th^A-d^Ad\ub)\otimes (d\th^B-d^Bd\ub).$$

\subsection{Ricci Coefficients and Curvature Components}\label{seceqn}
We then define the geometric quantities with a null frame $e_3$, $e_4$ introduced above and an frame ${e_1,e_2}$ tangent to the 2-spheres $S_{u,\ub}$. Using the indices $A,B\in\{1,2\}$, we define the Ricci coefficients:
 \begin{equation}
\begin{split}
&\chi_{AB}=g(D_A e_4,e_B),\, \,\, \quad \chib_{AB}=g(D_A e_3,e_B),\\
&\eta_A=-\frac 12 g(D_3 e_A,e_4),\quad \etab_A=-\frac 12 g(D_4 e_A,e_3),\\
&\tilde{\omega}=-\frac 14 g(D_4 e_3,e_4),\quad\,\,\, \omegab=-\frac 14 g(D_3 e_4,e_3),\\
&\zeta_A=\frac 1 2 g(D_A e_4,e_3)
\end{split}
\end{equation}
where $D_A=D_{e_{A}}$. We separate the trace and traceless part of $\chi$ and $\chib$. Denote $\chih$ and $\chibh$ to be the traceless parts of $\chi$ and $\chib$, respectively. 

{\color{black}Recall 
$R(X,Y, Z, W):=g(D_X D_Y Z-D_Y D_X Z-D_{[X,Y]}Z, W).$
We also introduce the  null curvature components}
\begin{equation}
\begin{split}
\a_{AB}&=R(e_A, e_4, e_B, e_4),\quad \, \,\,   \ab_{AB}=R(e_A, e_3, e_B, e_3),\\
\b_A&= \frac 1 2 R(e_A,  e_4, e_3, e_4) ,\quad \bb_A =\frac 1 2 R(e_A,  e_3,  e_3, e_4),\\
\rho&=\frac 1 4 R(e_4,e_3, e_4,  e_3),\quad \sigma=\frac 1 4  \,^*R(e_4,e_3, e_4,  e_3).
\end{split}
\end{equation}
Here $\, ^*R$ is the Hodge dual of $R$.  Denote $\nab$ to be the 
induced covariant derivative on $S_{u,\ub}$ and $\nab_3$, $\nab_4$ to be
the projections to $S_{u,\ub}$ of the covariant derivatives $D_3$, $D_4$ (see
precise definitions in \cite{KNI:book}). We further have
\begin{equation}
\begin{split}
&\tilde{\omega}=-\frac 12 \nab_4 (\log\Omega),\qquad \omegab=-\frac 12 \nab_3 (\log\Omega),\\
&\eta_A=\zeta_A +\nab_A (\log\Omega),\quad \etab_A=-\zeta_A+\nab_A (\log\Omega)
\end{split}
\end{equation}

\section{Deformation Formula}\label{deformation formula}
With the notation in Section \ref{sec.setting}, we follow the calculation in \cite{KLR} and derive the null expansion $\tr\chi'$ for 2-sphere {\color{black}$M:=\{(u, \ub, \o): u=1-R(\o)\}$} on $\Hb_{\ub}$. \\

We use null frames
\begin{equation}\label{nfr}
e'_3=e_3, \quad e'_a=e_a-\O e_a(R)e_3, \quad e'_4=e_4-2\O e^a(R) e_a+\O^2 |\nab R|^2 e_3. 
\end{equation}
By definition $e_3(u)=\O^{-1}$. Hence $e'_a(u+R-1)=e_a(R)-e_a(R)\O e_3(u)=0$. Recall $e_3$ is orthogonal to any vector tangent to $\Hb$. We thus have 
\begin{equation*}
g(e'_a, e'_b)=g(e_a, e_b)=\d_{ab}, \quad g(e'_4, e'_a)=g(e'_4, e'_4)=0, \quad g(e'_3, e'_4)=-2.
\end{equation*}
It is straightforwrd to check 
\begin{proposition}\label{1st deformation}(\cite{KLR})
The trace of the null second fundamental form $\chi'$, relatively to the new frames (\ref{nfr}) is given by 
\begin{equation}
\tr\chi'=\tr\chi-2\O\D R-4\O\eta\cdot\nab R-4\O^2\chibh_{bc}\nab^b R \nab^c R-\O^2 \tr\chib |\nab R|^2-8\O^2\omb|\nab R|^2.
\end{equation}
\end{proposition}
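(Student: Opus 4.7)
\medskip

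\noindent\textbf{Proof proposal for Proposition \ref{1st deformation}.}
My plan is to compute $\tr\chi' = \sum_{a=1,2} g(D_{e_a'}\,e_4'\,,\,e_a')$ directly from the definitions in \eqref{nfr} and then read off the Ricci coefficients one by one. Since $g(e_a',e_b')=\delta_{ab}$ on the deformed sphere, the trace with respect to the induced metric is just the Euclidean trace in $a$, which keeps bookkeeping clean.

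First I would split
\[
e_4' \;=\; e_4 \;-\; 2\Omega\,e^c(R)\,e_c \;+\; \Omega^2 |\nabla R|^2\, e_3
\;=:\; E^{(1)}+E^{(2)}+E^{(3)},
\]
and write $\tr\chi'=\mathcal T_1+\mathcal T_2+\mathcal T_3$ where $\mathcal T_i:=\sum_a g(D_{e_a'}E^{(i)},\,e_a')$. For $\mathcal T_1$, expanding $e_a' = e_a-\Omega e_a(R) e_3$ and using the Ricci-coefficient definitions yields
\[
\mathcal T_1 \;=\; \sum_a g(D_{e_a}e_4,e_a) \;-\; \Omega\,e_a(R)\,g(D_{e_3}e_4,e_a) \;-\; \Omega\,e_a(R)\,g(D_{e_a}e_4,e_3) \;+\; O(|\nabla R|^2),
\]
which gives $\tr\chi$ from the first sum, a $\eta_a\nabla^a R$ contribution from $g(D_3 e_a,e_4)$, and -- crucially -- a cancellation of any $\zeta$-type term against $g(D_{e_a}e_4,e_3)$, together with an $\omb|\nabla R|^2$ piece from $g(D_3e_4,e_3)$. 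For $\mathcal T_2$, applying the Leibniz rule gives two families of contributions: derivatives of $\Omega e^c(R)$ acting on $e_a$ (which will assemble into the Laplacian $\Delta R$ on $S_{u,\ub}$ after collecting the 2-sphere Christoffel symbols hidden in $g(D_{e_a}e_c,e_a)$), and a connection term $-2\Omega e^c(R)\,g(D_{e_a}e_c,e_a)$ that drops out upon symmetrization in $(a,c)$. For $\mathcal T_3$, since $g(e_3,e_a')=0$ only the coefficient $\Omega^2|\nabla R|^2$ matters, and the Leibniz expansion produces $\Omega^2|\nabla R|^2\,g(D_{e_a}e_3,e_a)=\Omega^2|\nabla R|^2\,\tr\chib$ plus lower-order corrections.

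Next I would carefully combine. The quadratic-in-$\nabla R$ terms assemble into $-4\Omega^2\chibh_{bc}\nabla^b R\nabla^c R - \Omega^2\tr\chib|\nabla R|^2 - 8\Omega^2\omb |\nabla R|^2$ after separating the traceless and trace parts of $\chib$ and using $g(D_3 e_4,e_3)=-4\omb$. The first-order-in-$\nabla R$ terms produce $-4\Omega\eta_a\nabla^a R$, using $\eta_a=-\tfrac12 g(D_3 e_a,e_4)$. The second-derivatives of $R$ collapse to the intrinsic Laplacian: one gets $e_a(e^a(R)) - \Gamma^c_{aa}e_c(R)=\Delta R$ on the $S_{u,\ub}$-leaf, producing $-2\Omega\,\Delta R$; the factor of $2$ arises because the $e_a'$ inner product brings in one copy from $\mathcal T_2$ and the Leibniz rule contributes the symmetric partner.

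The main obstacle is the bookkeeping of connection coefficients: one must verify that the apparent $\zeta$-type and mixed connection terms (from $g(D_{e_a}e_c,e_3)$, $g(D_{e_3}e_c,e_a)$, and $e_3(\Omega e^c(R))$) either cancel pairwise or are absorbed into the recognized Ricci coefficients through the identities $\eta_a=\zeta_a+\nabla_a\log\Omega$ and $\etab_a=-\zeta_a+\nabla_a\log\Omega$. Once those cancellations are executed and the 2-sphere Christoffel symbols are correctly identified to produce $\Delta R$, the formula follows exactly as stated, matching the computation of Klainerman--Luk--Rodnianski in \cite{KLR}.
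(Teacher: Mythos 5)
Your overall route---computing $\tr\chi'=\sum_a g(D_{e_a'}e_4',e_a')$ directly in the frames (\ref{nfr}) and sorting terms by Ricci coefficients---is the same direct computation the paper relies on (it cites \cite{KLR}, and the equivalent calculation appears in Appendix \ref{AppendixA}, where $\chi'_{ab}$ is written out with $F_a=\Omega\nab_a R$ and then traced using $\nab_3 F_a=-\Omega\chib\cdot\nab R-2\omb\,\Omega\nab R$). However, two of your definite claims are wrong as stated and would derail the bookkeeping if followed literally. First, in $\mathcal{T}_1$ there is no cancellation of the $\zeta$-type term: $g(D_{e_a}e_4,e_3)=2\zeta_a$ simply survives, and it only turns into part of $-4\Omega\eta\cdot\nab R$ after you add the $e_a(\Omega)$-term generated in $\mathcal{T}_2$ when the derivative hits the factor $\Omega$ in $-2\Omega e^c(R)e_c$, via $\eta_a=\zeta_a+\nab_a\log\Omega$. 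Second, your treatment of $\mathcal{T}_2$ is self-contradictory: the connection term $-2\Omega e^c(R)\sum_a g(D_{e_a}e_c,e_a)$ does not ``drop out upon symmetrization'' (the index $c$ is contracted with $\nab^c R$, and $\sum_a g(D_{e_a}e_c,e_a)=-g(e_c,\sum_a D_{e_a}e_a)\neq 0$ in general); it is precisely the Christoffel correction that converts $\sum_a e_a(e_a R)$ into the covariant $\D R$, as you yourself say one line earlier. The term that genuinely vanishes by antisymmetry is $2F^aF^c\,g(D_{e_3}e_c,e_a)$. Also, the factor $2$ in $-2\Omega\D R$ comes solely from the coefficient $-2\Omega e^c(R)$ in $e_4'$, not from a ``symmetric partner'' in the Leibniz rule.

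Beyond these, the quadratic-in-$\nab R$ coefficients are asserted rather than derived, and the sources you name do not suffice. From $g(D_3e_4,e_3)=-4\omb$ you only obtain $-4\Omega^2\omb|\nab R|^2$; the other half of $-8\Omega^2\omb|\nab R|^2$ comes from $e_3(\Omega)=-2\omb\,\Omega$ when $e_3$ falls on the coefficient $F_a=\Omega e_a(R)$, while $-4\Omega^2\chibh_{bc}\nab^bR\nab^cR-\Omega^2\tr\chib|\nab R|^2$ requires three inputs absent from your sketch: $g(D_{e_a}e_c,e_3)=-\chib_{ac}$, the commutation $\nab_3\nab_aR=-\chib_{ab}\nab^bR$ (valid since $e_3(R)=0$), and the $+\Omega^2\tr\chib|\nab R|^2$ contribution from $\mathcal{T}_3$. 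These are exactly the identities packaged in Appendix \ref{AppendixA} as $\nab_3F_a=-\Omega\chib\cdot\nab R-2\omb\,\Omega\nab R$ together with $D_3e_3=-2\omb\, e_3$. If you replace the two erroneous cancellation claims by these identities and carry the trace through, your computation closes and reproduces the stated formula; as written, it does not yet constitute a proof.
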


In anticipation for later use, we rewrite $e^a e_a (R)$ with $\D'_M R$, where $\D'_M$ is the Laplace-Beltrami operator on $M$:
\begin{proposition}\label{change laplacian}
\begin{equation}
e^a e_a (R)=\D'_{M}R-2\O \chibh_{ab}\nab^a R \nab^b R. 
\end{equation}
\end{proposition}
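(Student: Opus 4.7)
The strategy is to compute both Laplacians in a common coordinate chart on $M$ and exploit the fact that the two underlying metrics agree pointwise but differ in their angular derivatives. Parametrize $M$ by the map $\omega\mapsto (1-R(\omega),\ub,\omega)$. Because $g(\partial_u,\partial_u)=0$ and $g(\partial_u,\partial_{\theta^i})=0$ in the coordinate expression of $g$ from Section~\ref{sec.setting}, the induced metric on $M$ is simply
\[
g'_{ij}(\omega) \;=\; \gamma_{ij}\bigl(1-R(\omega),\ub,\omega\bigr),
\]
so $g'^{ij}=\gamma^{ij}$ at every point of $M$. In particular, the pure second-derivative operator $g'^{ij}\partial_i\partial_j$ agrees with the one entering $e^a e_a(R)$, which, following the conventions fixed just after Proposition~\ref{1st deformation}, is the intrinsic spherical Laplacian $\D$ of $S_{1-R(\omega),\ub}$ evaluated along $M$.

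Although the two metrics coincide at a given point of $M$, their $\theta^k$-derivatives differ by one term produced by chain rule through the $u$-slot:
\[
\partial_k g'_{ij} - \partial_k\gamma_{ij}\big|_M \;=\; -(\partial_u\gamma_{ij})\big|_M\cdot\nab_k R.
\]
A one-line Koszul computation in the $(u,\ub,\theta^1,\theta^2)$ chart, using $e_3=\Om^{-1}\partial_u$ and the coordinate form $\chib_{ij}=g(D_{\partial_i}e_3,\partial_j)$, gives $\partial_u\gamma_{ij}=2\Om\,\chib_{ij}$. Substituting this into the standard Christoffel-symbol formula yields
\[
\Gamma'^{\,k}_{\,ij} - \Gamma^{k}_{\,ij} \;=\; -\Om\,\gamma^{kl}\bigl(\chib_{jl}\,\nab_i R + \chib_{il}\,\nab_j R - \chib_{ij}\,\nab_l R\bigr),
\]
where $\Gamma^k_{ij}$ are the Christoffel symbols of the sphere metric $\gamma(1-R(\omega),\ub,\cdot)$, i.e.\ those that control $\D$ and hence $e^a e_a$.

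Since the second-derivative pieces of the two Laplacians agree, the difference is purely first order:
\[
\Delta'_M R - e^a e_a(R) \;=\; -\gamma^{ij}\bigl(\Gamma'^{\,k}_{\,ij}-\Gamma^{k}_{\,ij}\bigr)\,\nab_k R \;=\; 2\Om\,\chib^{ab}\nab_a R\,\nab_b R - \Om\,\tr\chib\,|\nab R|^2.
\]
Splitting $\chib_{ab} = \chibh_{ab} + \tfrac12\tr\chib\,\gamma_{ab}$, the $\tr\chib$ contributions cancel exactly, leaving
\[
\Delta'_M R - e^a e_a(R) \;=\; 2\Om\,\chibh_{ab}\,\nab^a R\,\nab^b R,
\]
which is the claimed identity after rearrangement. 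The main difficulty is pure bookkeeping: keeping the three Christoffel terms straight and checking that it is precisely the trace-free part $\chibh$ that survives. A clean sanity check is that substituting this identity back into Proposition~\ref{1st deformation} cancels the $-4\Om^2\chibh_{bc}\nab^bR\nab^cR$ term and reproduces equation~\eqref{1.12} exactly, which pins down every sign and coefficient.
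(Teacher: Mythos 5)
Your proposal is correct, and it reaches the identity by a genuinely different route than the paper. The paper works entirely in the null-frame formalism: it expands the unprimed frame as $e_a=e'_a+\O e_a(R)e_3$, uses the commutation relation $(e_3e_a)R=-\chibh_{ac}\nab^cR-\f12\tr\chib\nab_aR$, and then decomposes $D_{e'_a}e'_a$ into its part tangent to $M$ plus $e'_3,e'_4$ components, where $g(D_{e'_a}e'_a,e'_3)$ contributes the $\tr\chib$ term; the two $\O\tr\chib|\nab R|^2$ contributions cancel at the end, exactly as in your computation. You instead exploit the two facts the paper records but does not use in this proof: the induced metric on $M$ equals $\gamma_{ij}(1-R(\o),\ub,\o)$ pointwise (no dependence on $\nab R$), and the first variation $\partial_u\gamma_{ij}=2\O\chib_{ij}$ (valid in this chart because $\theta^A$ is Lie-transported along $e_3$, i.e.\ the shift is attached only to $d\ub$), so the whole difference of Laplacians reduces to a Christoffel-symbol comparison that is manifestly first order in $\nab R$. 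Your reading of $e^ae_a(R)$ as $\D R$ of the original foliation is the intended one — the paper itself identifies $\D R=e^ae_a(R)$ before \eqref{1.12}, and in its proof the trace of the spacetime Hessian over $\{e_a\}$ coincides with the intrinsic sphere Laplacian since $e_3R=e_4R=0$ — and your back-substitution check against Proposition \ref{1st deformation} and \eqref{1.12} confirms all signs. What each approach buys: yours is more elementary and makes transparent why only the trace-free part $\chibh$ survives; the paper's frame manipulations, on the other hand, are re-used verbatim later (e.g.\ the relation between $\nab'_a\nab'_bR$ and $\nab_a\nab_bR$ in Section 4 and the linearizations in Sections 5–6), so staying in that formalism pays off downstream.
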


\begin{proof}
{\color{black}We denote $(e_3 e_a)R=e_3 e_a (R):=D^2_{e_{3,} e_a}R.$} Employing commutator formulas in \cite{KR:Trapped}, we get
$$(e_3 e_a)R=-\chibh_{ac} \nab^c R-\f12 \tr\chib \nab_a R.$$
Recall the definition $\chib_{ab}:=g(D_{e_a}e_3, e_b)$. {\color{black} Along each $\Hb_{\ub}$, we could rewrite $R(\ub, \o)=R(\o)$, thus we have $(D_{e_3}e_3)R=0$.} And we deduce 
\begin{equation}\label{eaeaR}
\begin{split}
&e^a e_a (R)\\
=& (e^{a'}+\O e^a(R)e_3)(e_{a'}+\O e_a(R)e_3) R\\
=& (e^{a'}+\O e^a(R)e_3)(e_{a'}(R))-\l D_{e^{a'}+\O e^a(R)e_3} e_{a'}+\O e_a (R)e_3\r R\\
=& e^{a'}(e_{a'}(R))-(D_{e^{a'}}e_{a'}) R+ \O e^a(R) e_3 (e_a (R))-\O e^a (R) (D_{e_3}e_{a'}) R\\
&-\O e_a(R) D_{e^{a'}}e_3 R-\O e^a(R) \O e_a (R) (D_{e_3}e_3) R\\
=& e^{a'} e_{a'}(R)+\O e^a(R) e_3 e_a (R)-\O e_a(R) ( D_{e^a} e_3 ) R\\
=& e^{a'} e_{a'}(R)+\O e^a(R) e_3 e_a (R)-\O e^a(R) \chib_{ab} e^b(R)\\
=& e^{a'} e_{a'}(R)-\f{\O}{2} \tr\chib \nab^a R \nab_a R- \O \chibh_{ac}\nab^a R \nab^c R-\O e^a(R) \chib_{ab} e^b(R)\\
=& e^{a'} e_{a'}(R)-\O \tr\chib \nab^a R \nab_a R- 2\O \chibh_{ac}\nab^a R \nab^c R.
\end{split}
\end{equation}

On the other side
$$D_{e_{a'}}e_{a'}=\overline{D_{e_{a'}}e_{a'}}-\f12 g(D_{e_{a'}}e_{a'}, e'_4)e'_3-\f12 g(D_{e_{a'}}e_{a'}, e'_3)e'_4.$$
Here $\overline{D_{e_{a'}}e_{a'}}$ is the projection of $D_{e_{a'}}e_{a'}$ to $T_{p}M$ at point $p$.
From $e'_3=e_3$ and $e'_4=e_4-2\O e^a(R) e_a+\O^2 |\nab R|^2 e_3$, we thus have
\begin{equation*}
\begin{split}
D_{e_{a'}}e_{a'}(R)=&\overline{D_{e_{a'}}e_{a'}}(R)-\f12 g(D_{e_{a'}}e_{a'}, e'_3)e'_4(R)\\
=&\overline{D_{e_{a'}}e_{a'}}(R)-\f12 g(D_{e_{a'}}e_{a'}, e'_3)\cdot -2\O e^a(R)e_a(R)\\
=&\overline{D_{e_{a'}}e_{a'}}(R)+\f12 g(D_{e_{a'}}e_3, e'_a)\cdot -2\O e^a(R)e_a(R)\\
=&\overline{D_{e_{a'}}e_{a'}}(R)+\f12 \tr\chib\cdot -2\O e^a(R)e_a(R)\\
=&\overline{D_{e_{a'}}e_{a'}}(R)-\O\tr\chib \nab^a R\nab_a R.
\end{split}
\end{equation*}

Hence,
\begin{equation}\label{Laplacian operator definition}
\begin{split}
e^{a'} e_{a'}(R)=&e^{a'}{\color{black}\big(e_{a'}(R)\big)}-D_{e^{a'}}e_{a'}(R)\\
=&e^{a'}{\color{black}\big(e_{a'}(R)\big)}-\overline{D_{e^{a'}}e_{a'}}(R)+\O\tr\chib\nab^a R \nab_a R\\
=&\D'_M R+\O\tr\chib \nab^a R \nab_a R.
\end{split}
\end{equation}

Together with (\ref{eaeaR}), we arrive at
\begin{equation*}
\begin{split}
e^a e_a(R)=&e^{a'} e_{a'}(R)-\O \tr\chib \nab^a R \nab_a R- 2\O \chibh_{ac}\nab^a R \nab^c R\\
=&\D'_M R+\O\tr\chib \nab^a R \nab_a R-\O \tr\chib \nab^a R \nab_a R-2\O \chibh_{ac}\nab^a R \nab^c R\\
=&\D'_M R-2\O \chibh_{ac}\nab^a R \nab^c R.
\end{split}
\end{equation*}

Here $\D'_M$ is the induced Laplace-Beltrami operator on $M$. 
\end{proof}

We then conclude
\begin{proposition}\label{newdeformation}
The trace of the null second fundamental form $\chi'$, relatively to the new frames (\ref{nfr}) is 
\begin{equation}
\tr\chi'=\tr\chi-2\O\D'_M R-4\O\eta\cdot\nab R-\O^2 \tr\chib |\nab R|^2-8\O^2\omb|\nab R|^2.
\end{equation}
\end{proposition}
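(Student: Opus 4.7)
The plan is to obtain Proposition \ref{newdeformation} by directly combining the deformation formula from Proposition \ref{1st deformation} with the identity derived in Proposition \ref{change laplacian}. Proposition \ref{1st deformation} expresses $\tr\chi'$ in terms of $\D R$, the Laplacian on the original double-null sphere $S_{1-R(\o),\ub}$, together with the extra quadratic contribution $-4\O^2\chibh_{bc}\nab^b R\,\nab^c R$. Proposition \ref{change laplacian} states that $e^a e_a R = \D'_M R - 2\O\,\chibh_{ab}\nab^a R\,\nab^b R$, and at a point where $\{e_a\}$ is chosen to be a normal frame on $S_{1-R(\o),\ub}$ this reads $\D R = \D'_M R - 2\O\,\chibh_{ab}\nab^a R\,\nab^b R$; since both sides are scalar quantities built from tensorial data, this identity is independent of the frame.

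Substituting this identity into Proposition \ref{1st deformation}, the term $-2\O\D R$ becomes $-2\O\D'_M R + 4\O^2\chibh_{ab}\nab^a R\,\nab^b R$, and the new $+4\O^2\chibh$ contribution cancels exactly against the pre-existing $-4\O^2\chibh_{bc}\nab^b R\,\nab^c R$ term. The remaining terms reassemble to yield
\[
\tr\chi' = \tr\chi - 2\O\D'_M R - 4\O\eta\cdot\nab R - \O^2\tr\chib|\nab R|^2 - 8\O^2\omb|\nab R|^2,
\]
which is the asserted formula. In this sense Proposition \ref{newdeformation} is simply Proposition \ref{1st deformation} rewritten in terms of the intrinsic Laplacian on the deformed sphere $M$, with the $\chibh$ quadratic contribution absorbed into $\D'_M$.

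There is no serious analytic obstacle here. Propositions \ref{1st deformation} and \ref{change laplacian} have already carried out all the differential-geometric work, the first by expanding the null second fundamental form in the new frame \eqref{nfr} and the second by comparing the projected second derivative $e^a e_a R$ against the intrinsic Laplace--Beltrami operator on $M$. The only point that requires care is verifying that the two $\chibh$ contributions enter with precisely the coefficients needed for exact cancellation, after which the substitution is purely algebraic. The virtue of this cleaner form is that the quasilinear elliptic equation $\tr\chi' = 0$ for the MOTS then has its principal part given by the Laplace--Beltrami operator on the unknown surface $M$, a structure that is essential for the a priori estimates of the subsequent sections.
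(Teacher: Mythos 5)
Your proposal is correct and matches the paper's own derivation: Proposition \ref{newdeformation} is obtained precisely by substituting the identity of Proposition \ref{change laplacian} into Proposition \ref{1st deformation}, whereupon the $+4\O^2\chibh_{ab}\nab^aR\,\nab^bR$ produced by $-2\O\D R$ cancels the existing $-4\O^2\chibh_{bc}\nab^bR\,\nab^cR$ term. Nothing further is needed.
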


\begin{remark}
Note that $e_a (R)=e_{a'} (R)$. 
\end{remark}

\section{A Priori Estimates}\label{a priori estimates section} 
Along each incoming null hypersurface $\Hb_{\ub}$, to get the location of the 2-dimensional MOTS: $(u, \ub, \o)=(1-R(\ub, \o), \ub, \o)$, we will solve a quasilinear elliptic equation for $R(\ub, \o)$. To solve it, we employ the method of continuity. In this section, we derive the a priori estimates in need.  \\

On $\underline{H}_{\ub}$, we define operator $L$
\begin{equation}
\begin{split}
L(R(\omega)):=&\D'_M R(\o)+2\eta_b\nab^b R(\o)+\f12 \O \tr\chib |\nab R(\o)|^2\\
&+4\O \omb |\nab R(\o)|^2-\f{\O^{-1}}{2} \tr\chi. 
\end{split}
\end{equation}
Here $R(\o)$ is an abbreviation for $R(\ub, \o)$. For each fixed $\ub$, $\D'_M$ is the Laplace-Beltrami operator on the unknown 2-dimensional MOTS: 
$$(u, \ub, \o)=(1-R(\ub, \o), \ub, \o).$$ 
Notice that $\tr\chi'=0$ is equivalent to $L(R(\o))=0$:
\begin{equation}\label{LR=0}
\begin{split}
&\D'_M R(\o)+2\eta_b\nab^b R(\o)+\f12 \O \tr\chib |\nab R(\o)|^2\\
&+4\O \omb |\nab R(\o)|^2-\f{\O^{-1}}{2} \tr\chi\\
=&0.
\end{split}
\end{equation}  
Given $\o \in \mathbb{S}^2$, we solve for $R(\o)$ via (\ref{LR=0}).

In later sections, we will employ continuity arguments. For $0\leq\lambda\leq 1$, we will need a priori estimates for solutions to 
\begin{equation}\label{eqn2 continuity}
\D'_M R(\o)+\f12\O\tr\chib|\nab R(\o)|^2-\f{1}{R(\o)}+\f{\ub a}{2R(\o)^2}[1+(f(\ub, \o)-1)\lambda]=0, 
\end{equation}
and
\begin{equation}\label{eqn1 continuity}
\begin{split}
&\D'_M R(\o)+\f12 \O\tr\chib|\nab R(\o)|^2-\f{1}{R(\o)}+\f{\ub a f(\ub, \o)}{2R(\o)^2}\\
&+\lambda \l 2\eta_b\nab^b R(\o)+4\O \omb |\nab R(\o)|^2-\f{\O^{-1}}{2} \tr\chi+\f{1}{R(\o)}-\f{\ub a f(\ub, \o)}{2R(\o)^2}\r=0. 
\end{split}
\end{equation}
In this section we derive these a priori bounds. Since $|f(\ub, \o)-1|\leq \f{1}{21}$, we have 
$$\f{20}{21}\leq 1+(f(\ub, \o)-1)\lambda \leq \f{22}{21}.$$
Recall in \cite{AL} we obtained estimates:

$$|\chibh,\omb, \eta, \tr\chib+\f{2}{R(\o)}|\leq \f{\ub \at{\color{black}b^{\f14}}}{R(\o)^2},$$
$$|\O-1|\leq \f{\ub \at \bf}{R(\o)},$$
$$|\tr\chi-\f{2}{R(\o)}+\f{1}{R(\o)^2}\int_0^{\ub}|\chih_0|^2(\ub',\o)d\ub'|\leq \f{\ub\at\bf}{R(\o)^2}.$$
Therefore, for any fixed $\lambda$ with $0\leq\lambda\leq 1$, (\ref{eqn2 continuity}) and (\ref{eqn1 continuity}) are equivalent \footnote{For (\ref{eqn1 continuity}), since $\f{20}{21}\leq 1+(f(\o)-1)\lambda \leq \f{22}{21}$, we rename $1+(f(\o)-1)\lambda$ to $f(\o)$ with lower and upper bounds $\f{20}{21}$ and $\f{22}{21}$, respectively. } to
\begin{equation}\label{3.3}
\begin{split}
&\Delta'_M R(\o)-\f{1}{R(\o)}|\nab R(\o)|^2-\f{1}{R(\o)}+\f{\ub a f(\ub, \o)}{2R(\o)^2}\\
&+\f{\ub\at c_{1a}}{R(\o)^2}\nab^a R(\o)+\f{\ub\at c_{2bc}}{R(\o)^2}\nab^b R(\o)\nab^c R(\o)+\f{\ub\at}{R(\o)^2} c_3\\
=&0.
\end{split}
\end{equation}
Here $\f{20}{21} \leq f(\ub, \o) \leq \f{22}{21}$, $|c_1, c_2, c_3|\leq \bf$ and $c_1, c_2, c_3$ depend not on $\nab R(\omb), \nab^2 R(\omb)$ but only on $R(\omb)$.\\

We are ready to derive a priori estimates for $R(\o)$ in (\ref{LR=0}).

\subsection{$C^0$ Estimate}
Denote $M_0(\o):=\ub a f(\ub, \o).$ From (\ref{3.3}), we have
\begin{equation*}
\begin{split}
&\Delta'_M R(\o)-\f{1}{R(\o)}|\nab R(\o)|^2-\f{1}{R(\o)}+\f{M_0(\o)}{2R(\o)^2}\\
&+\f{\ub\at c_{1a}}{R(\o)^2}\nab^a R(\o)+\f{\ub\at c_{2bc}}{R(\o)^2}\nab^b R(\o)\nab^c R(\o)+\f{\ub\at}{R(\o)^2} c_3\\
=&0.
\end{split}
\end{equation*}

Let $R(\o_{max}):=\max_{\o\in S^2}R(\o)$ and $R(\o_{min}):=\min_{\o\in S^2}R(\o)$. Since $\nab R(\o_{max})=0,$ we arrive at
\begin{equation*}
\begin{split}
0=&\D'_M R(\o_{max})-\f{1}{R(\o_{max})}+\f{M_0(\o_{max})}{2R(\o_{max})^2}+\f{\ub\at}{R(\o_{max})^2}c_3\\
\leq& -\f{1}{R(\o_{max})}+\f{M_0(\o_{max})}{2R(\o_{max})^2}+\f{\ub\at}{R(\o_{max})^2}c_3\\
=& \f{-2R(\o_{max})+M_0(\o_{max})+2\ub\at c_3}{2R(\o_{max})^2},\\
\end{split}
\end{equation*} 
where $|c_3|\leq b^{\f14}\leq a^{\f18}$. {\color{black}
Note that $M_0(\o)=\ub a f(\o)$ and $20/21\leq f(\o)\leq 22/21$, the above inequality implies
\begin{equation*}
\begin{split}
R(\o_{max})\leq& \f12M_0(\o_{max})+\ub\at c_3\leq \f12M_0(\o_{max})+\ub a^{\f58}\\
=&[\f12+\f{\ub a^{\f58}}{\ub a f(\o_{max})}]M_0(\o_{max})=[\f12+O(\f{1}{a^{\f38}})]M_0(\o_{max})\\
=&[\f12+o(1)] \max_{\o}M_0(\o).
\end{split}
\end{equation*} 
}
Similarly, with $\nab R(\o_{min})=0$, we deduce
\begin{equation*}
\begin{split}
0=&\D'_M R(\o_{min})-\f{1}{R(\o_{min})}+\f{M_0(\o_{min})}{2R(\o_{min})^2}+\f{\ub\at}{R(\o_{min})^2}c_3\\
\geq& -\f{1}{R(\o_{min})}+\f{M_0(\o_{min})}{2R(\o_{min})^2}+\f{\ub\at}{R(\o_{min})^2}c_3\\
=& \f{-2R(\o_{min})+M_0(\o_{min})+2\ub\at c_3}{2R(\o_{min})^2},\\
\end{split}
\end{equation*} 
where $|c_3|\leq b^{\f14}\leq a^{\f18}$. 

Hence
\begin{equation*}
R(\o_{min})\geq [\f12+o(1)]M_0(\o_{min})\geq [\f12+o(1)]\min_{\o}M_0(\o). 
\end{equation*}
To conclude, we obtain
$$[\f12+o(1)]\f{20}{21} \ub a\leq R(\o)\leq [\f12+o(1)]\f{22}{21}\ub a.$$

\subsection{$W^{1,2}$ Estimate}
Integrating (\ref{3.3}) on $M$ yields 
\begin{equation*}
\begin{split}
&\int_M \Delta'_M R(\o)-\int_M \f{1}{R(\o)}|\nab R(\o)|^2-\int_M \f{1}{R(\o)}+\int_M \f{M_0(\o)}{2R(\o)^2}\\
&+\int_M \f{\ub\at c_{1a}}{R(\o)^2}\nab^a R(\o)+\int_M \f{\ub\at c_{2bc}}{R(\o)^2}\nab^b R(\o)\nab^c R(\o)+\int_M\f{\ub\at}{R(\o)^2} c_3\\
=&0.
\end{split}
\end{equation*}
Together with $\f38 \min_{\o}M_0(\o) \leq |R(\o)|\leq \f58 \max_{\o}M_0(\o)$, we have
\begin{equation}\label{W12}
\int_M |\nab R(\o)|^2 \lesssim \ub^{{\color{black}2}} a^{{\color{black}2}}. 
\end{equation}
{\color{black} 
\begin{remark}
For later application, let $k\in \mathbb{N}^{+}$ and $p\geq 1$ we then define \textit{scale-invariant} $\widetilde{W}^{k,p}$ norm. Recall $\ub a/4 \leq R(\o)\leq \ub a$. For $f=f(\o)$ with $\o\in \mathbb{S}^2$, on $M$ we define
\begin{equation}\label{scale invariant WKP}
\|f(\o)\|_{\widetilde{W}^{k,p}(M)}:=\sum_{0\leq i\leq k}\|R(\o)^{-\f{2}{p}}\cdot\big(R(\o)\nab\big)^if(\o)\|_{L^p(M)}.
\end{equation}
Thus, it's easy to check $\|1\|_{\widetilde{W}^{k,p}(M)}\approx 1$ and the above (\ref{W12}) and $C^0$ estimate imply
$$\|R(\o)\|_{\widetilde{W}^{1,2}(M)}\lesssim \ub a\lesssim R(\o).$$ 
 
\end{remark}
}

\subsection{$C^1$ Estimate}\label{C1 Estimate}
To achieve gradient estimate, we will use Bochner's formula:
$$\D'_M |\nab' R|^2=2|\nab'^2 R|^2+2{\color{black}\mbox{Ric}_{M}}(\nab' R, \nab' R)+2\nab'^a R \nab'_{a} (\D'_M R).$$
Here $\nab'$ denotes the covariant derivative on $M$. {\color{black} 
Note $\nab' R(\o)=\nab R(\o)$. When there is no danger of confusion, for simplicity, we replace $\nab' R(\o)$ by $\nab R(\o)$.}
In order to get a desired $L^{\infty}$ estimate for $\nab R$, we consider $\D'_M\l h(R)|\nab R|^2\r,$
where 
$$h(R)=1+\f{8}{\ub^2 a^2}(R-\f{\ub a}{2})^2.$$
Together with Bochner's formula, we have
\begin{equation}\label{weightedBochner}
\begin{split}
&\D'_M\l h(R)|\nab R|^2\r\\
=&h'(R)\D'_M R\cdot |\nab R|^2+h''(R)|\nab R|^4+\f{2h'(R)\nab'^a R}{h(R)}\cdot \nab'_{a}\l h(R)|\nab R|^2\r-\f{2h'(R)h'(R)}{h(R)}|\nab R|^4\\
&+h(R)\l 2|\nab{\color{black}'}^2 R|^2+2{\color{black}\mbox{Ric}_{M}}(\nab R, \nab R)+2\nab'^{a} R \nab'_{a}(\D'_M R)\r\\
=&h'(R)|\nab R|^2\cdot(-\f12\O \tr\chib |\nab R|^2+\f{\O^{-1}}{2}\tr\chi-2\eta^b \nab_b R-4\O \omb |\nab R|^2)\\
&+h''(R)|\nab R|^4+\f{2h'(R)\nab'^a R}{h(R)}\cdot \nab'_{a}\l h(R)|\nab R|^2\r-\f{2h'(R)h'(R)}{h(R)}|\nab R|^4\\
&+h(R)\l 2|\nab{\color{black}'}^2 R|^2+2{\color{black}\mbox{Ric}_{M}}(\nab R, \nab R)\r\\
&+h(R)\cdot 2\nab'^a R \nab'_a (-\f12\O \tr\chib |\nab R|^2+\f{\O^{-1}}{2}\tr\chi-2\eta^b \nab_b R-4\O \omb |\nab R|^2)\\
\end{split}
\end{equation}

We first focus on the term $h(R)\cdot2\nab^a R \nab'_a (-\f12 \O \tr\chib|\nab R|^2)$. Here
$$h(R)\cdot2\nab^a R \nab'_a (-\f12 \O \tr\chib|\nab R|^2)=2h(R)\nab^a R \nab'_a (-\f12 \O \tr\chib)|\nab R|^2-h(R)\O\tr\chib\nab'^a R \nab'_a(|\nab R|^2).$$
Employing $e'_a=e_a-\O e_a(R)e_3$ and the estimate for $\nab_3 (\O \tr\chib)$ in \cite{AL}, we derive
\begin{equation*}
\begin{split}
&2h(R)\nab^a R \nab'_a (-\f12 \O \tr\chib)|\nab R|^2\\
=&-2h(R)\nab^a R \nab_a (\f12 \O \tr\chib)|\nab R|^2+h(R)\O\nab_a R \nab^a R \nab_3 (\O\tr\chib)|\nab R|^2\\
=& -h(R)|\nab R|^4 (\f{2}{R^2}+\f{c}{R^2})-2h(R)\nab^a R \nab_a (\f12 \O \tr\chib)|\nab R|^2\\
=& -h(R)|\nab R|^4 (\f{2}{R^2}+\f{c}{R^2})-h(R)|\nab R|^2 \nab^a R \f{c_a}{R^2},
\end{split}
\end{equation*}
where $|c, c_a|\leq {\color{black}1/a^{\f14}}$.
And
\begin{equation*}
\begin{split}
&-h(R)\O\tr\chib\nab'^a R \nab'_a(|\nab R|^2)\\
=&-{\O\tr\chib\nab'^a R}\cdot\nab'_a(h(R)|\nab R|^2)+{\O\tr\chib h'(R)}|\nab R|^4.
\end{split}
\end{equation*}

Similarly, for $h(R)\cdot2\nab^a R \nab'_a (\f{\O^{-1}}{2}\tr\chi)$ we have
\begin{equation*}
\begin{split}
&h(R)\cdot2\nab^a R \nab'_a (\f{\O^{-1}}{2}\tr\chi)\\
=&2h(R)\nab^a R \nab_a (\f{\O^{-1}}{2}\tr\chi)-2h(R)\O\nab^a R \nab_a R \nab_3 (\f{\O^{-1}}{2}\tr\chi).\\
\end{split}
\end{equation*}
Define $\rhoc:=\rho-\f12\chih\cdot\chibh$, from \cite{AL} we have 
$$\nab_3 \tr\chi=-\f{1}{2} \tr\chib \tr\chi+2\check{\rho}+2\omb \tr\chi+2\div \eta+2|\eta|^2.$$
In \cite{AL}, we also have
\begin{equation*}
\begin{split}
&\nab_4\chih+\trch \chih=-2 \tilde{\omega} \chih-\alpha,\\
&\nab_4\chibh +\frac 1 2 \trch \chibh=\nab\widehat{\otimes} \etab+2\tilde{\omega} \chibh-\frac 12 \trchb \chih +\etab\widehat{\otimes} \etab,\\
&\nab_4\rho+\frac 32\trch\rho=\div\beta-\frac 12\chibh\cdot\alpha+\zeta\cdot\beta+2\etab\cdot\beta.
\end{split}
\end{equation*}
Therefore, for $\rhoc=\rho-\f12 \chih\cdot\chibh$, it satisfies
$$\nab_4 \check{\rho}=-\f32 \tr\chi \check{\rho}+\div \b+\zeta\cdot \b+2\etb\cdot \b-\f12 \chih\cdot\nab\hat{\otimes}\etb-\f12\chih\cdot(\etb\hat{\otimes}\etb)+\f14 \tr\chib |\chih|^2.$$

Employ the pointwise estimate in \cite{AL}, we derive
$$\rhoc |_{S_{R,\ub}}=-\f{\ub a f(\ub, \o)}{2R^3}+\f{\ub\at c_3}{R^3}$$
and 
\begin{equation}\label{trchi u}
\nab_3 \tr\chi |_{S_{1-R, \ub}}=\f{1}{R}(\f{2}{R}-\f{\ub a f(\ub, \o)}{R^2})-\f{\ub a f(\ub, \o)}{R^3}+\f{\ub\at c_3}{R^3}=\f{2}{R^2}-\f{2\ub a f(\ub, \o)}{R^3}+\f{\ub\at c_3}{R^3},
\end{equation}
with $|c_3|\leq b^{\f14}$. 

In later sections, it is crucial to obtain a lower bound for 
$$-\f{2}{R^2}+\f{2\ub a f(\ub, \o)}{R^3}=\f{-2R+2\ub a f(\ub, \o)}{R^3}.$$
Here we have {\color{black}a key estimate with positive lower bound}
$$-2R+2\ub a f(\ub, \o) \geq -2R+\f53 \ub a \geq -2R+\f{5}{3}\cdot \f{12}{7}\cdot [1+o(1)]\cdot R=\f{6}{7}\cdot [1+o(1)]\cdot R>\f12 R.$$

Therefore, 
\begin{equation*}
\begin{split}
&h(R)\cdot2\nab^a R \nab'_a (\f{\O^{-1}}{2}\tr\chi)\\
=&2h(R)\nab^a R \nab_a (\f{\O^{-1}}{2}\tr\chi)-2h(R)\O\nab^a R \nab_a R \nab_3 (\f{\O^{-1}}{2}\tr\chi)\\
=&h(R)\nab^a R \f{c_a}{R^2}-h(R)|\nab R|^2 (\f{2}{R^2}-\f{2\ub a f(\ub, \o)}{R^3}+\f{c}{R^2})\\
=&h(R)\nab^a R \f{c_a}{R^2}+h(R)|\nab R|^2 (\f{-2R+2\ub a f(\ub, \o)}{R^3}+\f{c}{R^2})\\
\geq&h(R)\nab^a R \f{c_a}{R^2}+h(R)|\nab R|^2 (\f{1}{2R^2}+\f{c}{R^2})\\
\end{split}
\end{equation*}
with $|c, c_a|\leq {\color{black}1/a^{\f14}}$. {\color{black}Notice the coefficient of $|\nab R|^2$ is positive!}

In the same manner, we deal with other terms. Together with (\ref{weightedBochner}), for $|c, c_a|\leq {\color{black}1/a^{\f14}}$ we obtain
\begin{equation*}
\begin{split}
&\D'_M\l h(R)|\nab R|^2\r\\
=&h'(R)|\nab R|^2\cdot(-\f12\O \tr\chib |\nab R|^2+\f{\O^{-1}}{2}\tr\chi-2\eta^b \nab_b R-4\O \omb |\nab R|^2)\\
&+h''(R)|\nab R|^4+\f{2h'(R)\nab'^a R}{h(R)}\cdot \nab'_{a}\l h(R)|\nab R|^2\r-\f{2h'(R)h'(R)}{h(R)}|\nab R|^4\\
&+h(R)\l 2|\nab{\color{black}'}^2 R|^2+2{\color{black}\mbox{Ric}_{M}}(\nab R, \nab R)\r\\
&+h(R)\cdot 2\nab'^a R \nab'_a (-\f12\O \tr\chib |\nab R|^2+\f{\O^{-1}}{2}\tr\chi-2\eta^b \nab_b R-4\O \omb |\nab R|^2)\\
\geq&h(R)\l 2|\nab{\color{black}'}^2 R|^2+2{\color{black}\mbox{Ric}_{M}}(\nab R, \nab R)\r\\
&+\l \f{h'(R)}{R}+h''(R)-\f{2h'(R)h'(R)}{h(R)}-\f{2h(R)}{R^2}-\f{2h'(R)}{R}\r \cdot (1+c) \cdot |\nab R|^4\\
&+\l h'(R)\cdot \f{\O^{-1}}{2}\tr\chi+h(R)(\f{1}{2R^2}+\f{c}{R^2})\r |\nab R|^2 \\
&+\f{2h'(R)\nab'^a R}{h(R)}\cdot \nab'_{a}\l h(R)|\nab R|^2\r-({\O\tr\chib+8\O\omb)\nab'^a R}\cdot\nab'_a(h(R)|\nab R|^2)\\
&-h(R)|\nab R|^2\nab^a R \f{c_a}{R^2}+h'(R)|\nab R|^2\nab^a R \f{c_a}{R}+h(R)\nab^a R \f{c_a}{R^2}-4h(R)\nab^a R \nab'_a\nab'_b R\,\eta^b.
\end{split}
\end{equation*}

Recall 
$$h(R)=1+\f{8}{\ub^2 a^2}(R-\f{\ub a}{2})^2.$$ 
With $C^0$ estimates $|R-\f{\ub a}{2}|\leq \f{\ub a}{20}$ and the estimates in \cite{AL}, it is straightforward to check
\begin{equation*}
\begin{split}
& h'(R)=\f{16}{\ub^2 a^2} (R-\f{\ub a}{2}), \quad \quad h''(R)=\f{16}{\ub^2 a^2}, \quad \quad |h(R)-1|\leq \f{1}{50},\\
&|h'(R)|\leq \f{4}{5\ub a}, \quad \quad |\O^{-1}\tr\chi|\leq \f{3\ub a}{20 R^2},\\
&\l \f{h'(R)}{R}+h''(R)-\f{2h'(R)h'(R)}{h(R)}-\f{2h(R)}{R^2}-\f{2h'(R)}{R}\r \cdot (1+c) \geq \f{2}{\ub^2 a^2},\\
& \l h'(R)\cdot \f{\O^{-1}}{2}\tr\chi+h(R)(\f{1}{2R^2}+\f{c}{R^2})\r\geq \f{1}{6R^2}.
\end{split}
\end{equation*}

For $2 {\color{black}\mbox{Ric}_{M}}(\nab R, \nab R)$ term, relying on an estimate (\ref{RicciBound}) in appendix we have
\begin{equation*}
\begin{split}
2{\color{black}\mbox{Ric}_{M}}(\nab R, \nab R)\geq& -\f{\ub \at}{R}\cdot\f{|\nab R|^4}{R^2}-\f{\ub \at}{R^2} \cdot|\nab^2 R|\cdot |\nab R|^2-\f{\ub a}{R^3}\cdot\f{\ub \at}{R}\cdot|\nab R|^3.
\end{split}
\end{equation*}
Here from
\begin{equation*}
\begin{split}
&\nab'_a \nab'_b R=\nab'_a \nab_b R\\
=&\nab_a\nab_b-\O\nab_a R \nab_3\nab_b R\\
=&\nab_a\nab_b R+\O\chibh_{bc}\nab_a R\nab^c R+\f12\O\tr\chib\nab_a R \nab_b R,
\end{split}
\end{equation*}
we have 
$$\nab_a\nab_b R=\nab'_a \nab'_b R-\O\chibh_{bc}\nab_a R\nab^c R-\f12\O\tr\chib\nab_a R \nab_b R.$$
Therefore, 
$$-\f{\ub \at}{R^2} \cdot|\nab^2 R|\cdot |\nab R|^2=-\f{\ub \at}{R^2} \cdot|\nab'^2 R|\cdot |\nab R|^2+\f{c}{R^2}|\nab R|^4,$$
{\color{black}
where
$$\f{|c|}{R^2}\leq |\f{\ub \at}{R^2}\O\tr\chib|+|\f{\ub\at}{R^2}\O\chibh|,$$
and
$$|c|\leq|\f{\ub\at}{R}|+|\ub\at\f{\ub\at b^{\f14}}{R^2}|\lesssim\f{1}{\at}+\f{b^{\f14}}{a}\lesssim \f{1}{\at}\ll 1.$$
}

To handle $\nab'^2 R$ terms, we notice that 
\begin{equation*}
\begin{split}
&|\nab'^2 R|^2-4 \nab^a R \nab'_a\nab'_b R \, \eta^b\\
\geq &|\nab'^2 R|^2-4 |\nab R| |\nab'^2 R| |\eta|\\
\geq & -4|\nab R|^2 |\eta|^2\\
=& -\f{c}{R^2}|\nab R|^2, 
\end{split}
\end{equation*}
and
\begin{equation*}
\begin{split}
&|\nab'^2 R|^2-\f{\ub\at}{R^2} |\nab R| |\nab'^2 R|\\
\geq & -\f{\ub^2 a}{4R^4}|\nab R|^2\\
=& -\f{c}{R^2}|\nab R|^2,
\end{split}
\end{equation*}
where $0<c\leq {\color{black}1/a^{\f14}}$.  {\color{black}Thus, all $\nab'^2 R$ terms are bounded from below by $|\nab R|^2$ terms.}

In sum, we hence have 
\begin{equation*}
\begin{split}
\D'_M \l h(R)|\nab R|^2\r\geq&|\nab R|^4\cdot\f{1}{\ub^2 a^2}+|\nab R|^2\cdot\f{1}{8R^2}-\f{1}{R^2}\cdot {\color{black}O\big(\f{1}{a^{\f12}}\big)}.
\end{split}
\end{equation*}

Denote $[h(R)|\nab R|^2](\tilde{\o}_{max}):=\max_{\o\in S^2} [h(R(\o))|\nab R(\o)|^2]$.  We hence derive 
$$0\geq \D'_{M}[h(R)|\nab R|^2](\tilde{\o}_{max})\geq |\nab R|^4(\tilde{\o}_{max})\f{1}{\ub^2 a^2}+|\nab R|^2(\tilde{\o}_{max})\f{1}{8R^2}-\f{1}{R^2}\cdot  {\color{black}O\big(\f{1}{a^{\f12}}\big)}.$$
This implies
$$|\nab R|^2(\tilde{\o}_{max})\lesssim {\color{black}\f{1}{a^{\f14}}}\ll1.$$
Let $|\nab R|(\o_{max}):=\max_{\o\in S^2} |\nab R|$.  We derive
\begin{equation*}
\begin{split}
|\nab R|^2 (\o_{max})=&\f{1}{[h(R)](\o_{max})}[h(R)|\nab R|^2](\o_{max})\\
\leq& \f{1}{[h(R)](\o_{max})}[h(R)|\nab R|^2](\tilde{\o}_{max})\\
\leq& \f{50}{49}[h(R)|\nab R|^2](\tilde{\o}_{max}){\color{black}\lesssim \f{1}{a^{\f14}}}\ll 1.
\end{split}
\end{equation*}

Therefore, we conclude 
\begin{equation}
|\nab R| (\o) {\color{black}\lesssim \f{1}{a^{\f18}}}\ll1 \quad \mbox{for all} \quad \o\in S^2. 
\end{equation}

\subsection{$W^{2,p}$ Estimate for $p<\infty$}

On $M$: $u=1-R(\theta_1, \theta_2)$, the induced metric reads
$$g'_{\theta_i \theta_j}=g_{\theta_i \theta_j}+\f{\partial (1-R)}{\partial \theta_i} \f{\partial{(1-R)}}{\partial \theta_j} g(\f{\partial}{\partial u},\f{\partial}{\partial u})=g_{\theta_i, \theta_j}.$$
Since
$$\D'_M R=\f{1}{\sqrt{|g'|}}\f{\partial}{\partial \theta_k} \l \sqrt{|g'|}g'^{\theta_k \theta_l}\f{\partial}{\partial \theta_l} R \r,$$
the leading term for $\D'_M R(\o)$ is $g^{\theta_i \theta_j} \f{\partial^2}{\partial \theta_i \partial \theta_j}R$.  

Let $\{j_{\alpha}\}$ be the partition of unity on $M$: $\sum_{\alpha}j_{\alpha}=1$. We deduce
\begin{equation*}
\begin{split}
&\D'_M (j_{\alpha}R)=\f{1}{\sqrt{|g'|}}\f{\partial}{\partial \theta_k} \l \sqrt{|g'|}g'^{\theta_k \theta_l}\f{\partial}{\partial \theta_l} (j_{\alpha} R) \r\\
=&(\D'_M j_{\alpha}) R+\f{1}{\sqrt{|g'|}}\f{\partial}{\partial \theta_k} (\sqrt{|g'|}g'^{\theta_k \theta_l} R)\f{\partial j_{\alpha}}{\partial \theta_l}+j_{\alpha} \D'_M R+\f{1}{\sqrt{|g'|}}\f{\partial}{\partial \theta_k} \l \sqrt{|g'|}g'^{\theta_k \theta_l} j_{\alpha} \r \f{\partial R}{\partial \theta_l}. 
\end{split}
\end{equation*}
{\color{black}
\noindent Hence, 
\begin{equation*}
\begin{split}
g'^{\theta_k \theta_l}&\f{\partial^2}{\partial \theta_k \partial \theta_l}(j_{\a}R)=(\D'_M j_{\alpha}) R+\f{1}{\sqrt{|g'|}}\f{\partial}{\partial \theta_k} (\sqrt{|g'|}g'^{\theta_k \theta_l} R)\f{\partial j_{\alpha}}{\partial \theta_l}\\
&+j_{\alpha} \D'_M R+\f{1}{\sqrt{|g'|}}\f{\partial}{\partial \theta_k} \l \sqrt{|g'|}g'^{\theta_k \theta_l} j_{\alpha} \r \f{\partial R}{\partial \theta_l}-\f{1}{\sqrt{|g'|}}\f{\partial}{\partial \theta_k} \l \sqrt{|g'|}g'^{\theta_k \theta_l}\r \f{\partial (j_{\alpha} R)}{\partial \theta_l}. 
\end{split}
\end{equation*}
Notice $g'^{\theta_k \theta_l}=g^{\theta_k \theta_l}\approx \ub^{-2}a^{-2}\approx R(\o)^{-2}$. To make sure that the coefficients of elliptic operator are uniformly bounded of size $1$, we use 
\begin{equation*}
\begin{split}
R&(\o)^2\cdot g'^{\theta_k \theta_l}\f{\partial^2}{\partial \theta_k \partial \theta_l}(j_{\a}R)=R(\o)^2\cdot(\D'_M j_{\alpha}) R+\f{R(\o)^2}{\sqrt{|g'|}}\f{\partial}{\partial \theta_k} (\sqrt{|g'|}g'^{\theta_k \theta_l} R)\f{\partial j_{\alpha}}{\partial \theta_l}\\
&+j_{\alpha}\cdot R(\o)^2 \cdot \D'_M R+\f{R(\o)^2}{\sqrt{|g'|}}\f{\partial}{\partial \theta_k} \l \sqrt{|g'|}g'^{\theta_k \theta_l} j_{\alpha} \r \f{\partial R}{\partial \theta_l}-\f{R(\o)^2}{\sqrt{|g'|}}\f{\partial}{\partial \theta_k} \l \sqrt{|g'|}g'^{\theta_k \theta_l}\r \f{\partial (j_{\alpha} R)}{\partial \theta_l}. 
\end{split}
\end{equation*}
Together with the fact $|\nab R(\o)|\lesssim 1/a^{\f18}\leq 1$, to use $W^{2,p}$
elliptic estimate, we only need to check that $R(\o)^2\cdot g'^{\theta_k \theta_l}\in C(M)$. We hence only need to verify $e_a'(g'_{\theta_k \theta_l})$ is in $L^{\infty}(M)$, which implies $g'^{\theta_k \theta_l}\in C(M)$ and $R(\o)^2\cdot g'^{\theta_k \theta_l}\in C(M)$.    
\noindent Since $e_a'=e_a-\O e_a(R)e_3$ and $g'_{\theta_k \theta_l}=g_{\theta_k \theta_l}$, we have
\begin{equation}
\begin{split}
e_a'(g'_{\theta_k \theta_l})=&\big(e_a-\O e_a(R)e_3\big)g_{\theta_k \theta_l}=e_a(g_{\theta_k \theta_l})-\O e_a(R)e_3(g_{\theta_k \theta_l})\\
=&e_a(g_{\theta_k \theta_l})-4\O e_a(R)\cdot\chib_{\theta_k \theta_l}.
\end{split}
\end{equation}
From the estimate in \cite{AL} and $|\nab R(\o)|$ is uniformly bounded, we have $e_a'(g'_{\theta_k \theta_l})\in L^{\infty}(M)$. 
In scale-invariant $\widetilde{W}^{2,p}(M)$ norms, we have 
\begin{equation*}
\begin{split}
\|j_{\alpha} R\|_{\widetilde{W}^{2,p}(M)}\lesssim& \|j_{\alpha} R\|_{\tilde{L}^p(M)}+\|j_{\alpha}\cdot R(\o)^2\cdot \D'_M R\|_{\tilde{L}^p(M)}+\|\f{R(\o)^2}{\sqrt{|g'|}}\f{\partial}{\partial \theta_k} (\sqrt{|g'|}g'^{\theta_k \theta_l} R)\f{\partial j_{\alpha}}{\partial \theta_l}\|_{\tilde{L}^p(M)}\\
&+\|R(\o)^2\cdot(\D'_M j_{\alpha}) R\|_{\tilde{L}^p(M)}+\|\f{R(\o)^2}{\sqrt{|g'|}}\f{\partial}{\partial \theta_k} \l \sqrt{|g'|}g'^{\theta_k \theta_l} j_{\alpha} \r \f{\partial R}{\partial \theta_l}\|_{\tilde{L}^p(M)}\\
&+\|\f{R(\o)^2}{\sqrt{|g'|}}\f{\partial}{\partial \theta_k} \l \sqrt{|g'|}g'^{\theta_k \theta_l}  \r \f{\partial (j_{\alpha} R)}{\partial \theta_l}\|_{\tilde{L}^p(M)}.
\end{split}
\end{equation*}
Together with (\ref{3.3}) and the derived $C^1$ estimate $\|\nab R\|_{L^{\infty}(M)}\leq 1$, we conclude
\begin{equation}\label{3.24new}
\begin{split}
&\|R(\o)\|_{\widetilde{W}^{2,p}(M)}=\|\bigg(\sum_{\a}j_{\a}R(\o)\bigg)\|_{\widetilde{W}^{2,p}(M)}\\
\lesssim& \|R(\o)^2
\cdot\D'_M R(\o)\|_{\tilde{L}^p(M)}+\|R(\o)\|_{\widetilde{W}^{1,p}(M)}\\
\approx& R(\o)^{-\f{2}{p}}\cdot\|R(\o)^2\cdot\D'_M R(\o)\|_{L^p(M)}+\|R(\o)\|_{\widetilde{W}^{1,p}(M)}\\
\lesssim& \f{R(\o)^{-\f{2}{p}+2}}{R(\o)}\|\nab R(\o)\|^2_{L^{2p}(M)}+R(\o)^{-\f{2}{p}+2}\cdot\|\f{1}{R(\o)}\|_{L^p(M)}+\f{R(\o)^{-\f{2}{p}+2}}{R(\o)}\|\nab R(\o)\|_{L^p(M)}+R(\o)\\
\lesssim& R(\o). 
\end{split}
\end{equation}
}

\subsection{$C^{1,q}$ Estimate for $0<q<1$}
Let's first focusing on finding a MOTS $M:  u=1-R(\o)$ among 2-spheres with positive injectivity radius and bounded curvatures\footnote{In Section 3.5, we prove that the solution we construct through the method of continuity
satisfies these requirements.}. Under these restrictions, for any $q\in (0,1)$ Sobolev's inequality (Theorem 2.21 in \cite{Au})  implies
\begin{equation}\label{3.21}
\begin{split}
\|R(\o)\|_{C^{1,q}(M)}\lesssim {\color{black}R(\o)^{-1-q}\cdot \|R(\o)\|_{\widetilde{W}^{2,\f{2}{1-q}}(M)}}\lesssim R(\o)^{-q}.  
\end{split}
\end{equation}

\subsection{$C^{2,q}$ Estimate for $0<q<1$}
By Morrey's inequality (Lemma 2.22 in \cite{Au}) and Sobolev inequality (Theorem 2.21 in \cite{Au}), we have
$$\|j_{\alpha}R\|_{C^{0, q}(M)}\leq \|\f{\partial}{\partial \theta_i} (j_{\alpha}R)\|_{L^s(M)}, \quad \mbox{for} \quad s=\f{2}{1-q},$$
$$\|\f{\partial}{\partial \theta_i} (j_{\alpha} R)\|_{C^{0, q}}\leq \|\f{\partial^2}{\partial \theta_j \partial \theta_k} (j_{\alpha}R)\|_{L^s(M)}, \quad \mbox{for} \quad s=\f{2}{1-q},$$
$$\|\f{\partial}{\partial \theta_i} (j_{\alpha}R)\|_{C^0(M)}\leq \|\f{\partial^2}{\partial \theta_j \partial \theta_k} (j_{\alpha}R)\|_{L^s(M)}, \quad \mbox{for} \quad s>n.$$
Together with the bound (\ref{3.24new}) for $W^{2,p}$ estimate, we have $R$ is bounded in $C^{1,q}(M)$ norm.

For (\ref{LR=0}):
\begin{equation*}
\begin{split}
&\D'_M R(\o)+2\eta_b\nab^b R(\o)+\f12 \O \tr\chib |\nab R(\o)|^2\\
&+4\O \omb |\nab R(\o)|^2-\f{\O^{-1}}{2} \tr\chi\\
=&0.
\end{split}
\end{equation*}  
rewrite this equation in coordinate. Since we have derived the $C^{1, q}(M)$ bound of $R(\o)$, $C^{2,q}$ estimate for $\D'_M$ on $M$ follows from partition of unity and the standard Schauder's estimate.  
Together with (\ref{3.21}) and regularity theory of elliptic equations, we conclude
\begin{equation}
\|R(\o)\|_{C^{2,q}(M)}\lesssim R(\o)^{-1-q},
\end{equation}
and 
\begin{equation}\label{higher regularity}
\|R(\o)\|_{C^{k,q}(M)}\lesssim R(\o)^{1-k-q}, \quad \mbox{for all} \quad k\in \mathbb{N}.
\end{equation}

\begin{remark}
(\ref{higher regularity}) implies 
$$\|\nab^3 R\|_{C^0(M)}\leq {1}/{R^2} \quad \quad \mbox{and} \quad \quad \|\nab^4 R\|_{C^0(M)}\leq {1}/{R^3}.$$
Since $\|\nab R\|_{C^0(M)}\ll 1$, by interpolation inequalities we get
\begin{equation*}
\|\nab^2 R\|_{C^0(M)}\ll {1}/{R}.
\end{equation*}
Recall that $\f13 \ub a\leq R(\o)\leq \f23 \ub a$. In coordinates for $i,j=1,2$, we have
\begin{equation}\label{nab nab R}
|\f{\partial^2}{\partial \theta_i \partial \theta_j} R|\ll \ub a.
\end{equation}
\end{remark}

\section{Continuity Argument I} \label{Continuity1}
In this paper, we will apply the method of continuity twice. For this section, we first construct a solution to the following equation:
\begin{equation}
\D'_M R(\o)+\f12\O\tr\chib|\nab R(\o)|^2-\f{1}{R(\o)}+\f{\ub a f(\ub, \o)}{2 R(\o)^2}=0.\\
\end{equation}

We introduce $\lambda$ and let 
$$F(R(\o),\lambda):=\D'_M R(\o)+\f12\O\tr\chib|\nab R(\o)|^2-\f{1}{R(\o)}+\f{\ub a}{2 R(\o)^2}[1+(f(\ub, \o)-1)\lambda].$$
When $\lambda=0$, we have that $R(\o)={\ub a}/{2}$ is a solution to 
$$F(R(\o),0)=\D'_M R(\o)+\f12\O\tr\chib|\nab R(\o)|^2-\f{1}{R(\o)}+\f{\ub a}{2 R(\o)^2}=0.$$
For $0\leq \tilde{\lambda} \leq 1$, we assume that $\tilde{R}(\o)$ is a solution to 
$$\D'_{S_{1-\tR,\ub}}\tR+\f12\l\O\tr\chib|_{1-\tR,\ub}\r|\nab \tR(\o)|^2-\f{1}{\tR(\o)}+\f{\ub a}{2\tR^2(\o)}[1+(f(\ub, \o)-1)\tilde{\lambda}]=0. $$
Therefore, 
\begin{equation}
\begin{split}
&F_R(\tR(\o),\lambda)[W]\\
=&\lim_{\epsilon\rightarrow 0}\f{1}{\epsilon}\l F(\tR+\epsilon W, \lambda)-F(\tR, \lambda) \r\\
=&\lim_{\epsilon\rightarrow 0}\f{1}{\epsilon}\l \D'_{S_{1-(\tR+\epsilon W),\ub}}(\tR+\epsilon W)+\f12 \big(\O\tr\chib|_{1-(\tR+\epsilon W), \ub}\big) |\nab (\tR+\epsilon W)|^2\\
&-\f{1}{\tR+\epsilon W}+\f{\ub a}{2(\tR+\epsilon W)^2}[1+(f(\ub, \o)-1){\lambda}] \\
&\quad \quad \quad -\D'_{S_{1-\tR,\ub}}\tR -\f12 \big(\O\tr\chib|_{1-\tR, \ub}\big)|\nab \tR|^2+\f{1}{\tR}-\f{\ub a}{2\tR^2}[1+(f(\ub, \o)-1){\lambda}] \r\\
=&\lim_{\epsilon\rightarrow 0}\f{1}{\epsilon}\l \D'_{S_{1-(\tR+\epsilon W),\ub}}(\tR+\epsilon W)-\D'_{S_{1-\tR,\ub}} \tR\r\\
&+\lim_{\epsilon\rightarrow 0}\f{1}{\epsilon} \l \f12 \big(\O\tr\chib|_{1-(\tR+\epsilon W), \ub}\big)|\nab(\tR+\epsilon W)|^2-\f12 \big(\O\tr\chib|_{1-\tR, \ub}\big)|\nab \tR|^2\r\\
&+\l \f{W}{\tR^2}-\f{W}{\tR^3} \ub a [1+(f(\ub, \o)-1)\lambda]\r\\
=&I_1+I_2+I_3.  
\end{split}
\end{equation}

Before we list the detailed description of $I_1, I_2$ and $I_3$, let's first state three useful propositions. 
\begin{proposition}\label{commute}
With commutator formulas in \cite{KR:Trapped}, for $U$ a vector field and $f$ a scalar function, we have
\begin{equation}
\begin{split}
[\nab_3, \div ]U=&-\f12 \tr\chib \div U-\chibh\cdot\nab U+\beb\cdot U+\f12(\eta+\etb)\cdot \nab_3 U\\
&-\eta\cdot\chibh\cdot U+\f12 \tr\chib \eta \cdot U,
\end{split}
\end{equation}
and
\begin{equation}
\begin{split}
[\nab_3, \nab]f=\f12(\eta+\etb)D_3 f-\chib\cdot\nab f. 
\end{split}
\end{equation}
\end{proposition}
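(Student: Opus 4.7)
The plan is to verify both commutator identities by direct computation from the definitions, reducing everything to applications of the spacetime covariant derivative $D$ together with the Ricci coefficient decompositions of $D_A e_3$ and $D_3 e_A$. Since these identities are the standard null-frame commutator formulas, the role of the argument is to bookkeep carefully rather than to introduce new ideas; the paper's reference to \cite{KR:Trapped} confirms this.

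First I would handle the scalar formula. For $f$ a function, $\nab_A f = e_A(f)$ and $\nab_3 f = e_3(f)$. The projected second derivatives are
\begin{equation*}
(\nab_3 \nab f)_A = e_3(e_A f) - (\nab f)_B \, g(\nab_3 e_A, e_B), \qquad (\nab \nab_3 f)_A = e_A(e_3 f) - (\nab f)_B \, g(D_A e_3, e_B),
\end{equation*}
where the correction terms come from the definition of the projected covariant derivative acting on tangential 1-forms. Subtracting and using $[e_3, e_A] f = e_3 e_A f - e_A e_3 f = (D_3 e_A - D_A e_3) f$, one decomposes
\begin{equation*}
D_3 e_A = \nab_3 e_A + \tfrac{1}{2} g(D_3 e_A, e_3)\, e_4 + \tfrac{1}{2} g(D_3 e_A, e_4)\, e_3,
\end{equation*}
using the conventions from Section \ref{seceqn} so that $g(D_3 e_A, e_4) = -2\eta_A$, and similarly $D_A e_3 = \chib_A{}^B e_B - \zeta_A e_3 + \text{(e_4 piece giving $\etab$)}$. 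The tangential pieces cancel against the correction terms, leaving exactly the normal contributions $\tfrac{1}{2}(\eta+\etab)_A D_3 f$ and the Weingarten term $-\chib_A{}^B \nab_B f$, which matches the stated formula.

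Next I would treat the divergence formula by applying the scalar-type commutator to the components of $U$ and then tracing. Writing $\div U = \nab^A U_A$, I would compute $\nab_3(\nab^A U_A) - \nab^A(\nab_3 U)_A$ by commuting $\nab_3$ past both $\nab$ and the induced metric $\sg^{AB}$ used to raise the index. The metric commutator produces $\nab_3 \sg^{AB} = -\chib^{AB} - \text{(anomaly)}$ which accounts for both the $-\tfrac{1}{2}\tr\chib\, \div U$ and $-\chibh\cdot\nab U$ contributions once split into trace and traceless parts. Commuting $\nab_3$ with $\nab$ acting on the vector field $U$ produces, in addition to the scalar-type terms $\tfrac{1}{2}(\eta+\etab)\cdot\nab_3 U$ and $-\eta\cdot\chibh\cdot U + \tfrac{1}{2}\tr\chib\, \eta\cdot U$, a genuine curvature contribution of the form $R(e_3, e_A, e_B, \cdot) U^B$; by the Bianchi identity and the definitions in Section \ref{seceqn} this contracts to $\betab \cdot U$.

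The main obstacle is purely computational and lies in the vector-field identity: one has to track every tangential/null component of $D_3 e_A$, $D_A e_3$, $D_3 e_4$, and in particular translate the Riemann tensor contribution $[\nab_3, \nab_A] U_B$ into the null curvature component $\betab_A$ using the paper's sign conventions for $\betab = \tfrac{1}{2} R(e_A, e_3, e_3, e_4)$. Once those translations are fixed, the rest is routine and produces precisely the two formulas stated; I would cross-check against Chapter 7 of \cite{KR:Trapped} to confirm sign conventions match.
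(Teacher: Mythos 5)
You are re-deriving a statement that the paper itself does not prove: Proposition \ref{commute} is simply quoted from \cite{KR:Trapped}, so a direct verification from the frame decompositions is a legitimate route and is indeed the standard one. However, several of your intermediate formulas are wrong as written, and followed literally they do not yield the stated identities. First, $\nab_3 f=e_3(f)$ is a scalar, so $\nab_A(\nab_3 f)=e_A(e_3 f)$ with no correction term; the extra $-(\nab f)_B\,g(D_Ae_3,e_B)$ you insert would cancel the Weingarten contribution and you would lose the $-\chib\cdot\nab f$ term from the final answer. Second, since $g(e_3,e_4)=-2$, the null decomposition of a vector is $X=\Pi X-\f12 g(X,e_4)\,e_3-\f12 g(X,e_3)\,e_4$; with your plus signs and $g(D_3e_A,e_4)=-2\eta_A$ you would get $-\eta_A e_3$ rather than $+\eta_A e_3$, flipping the sign of the $\f12(\eta+\etab)\nab_3 f$ term. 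Third, in this gauge $D_Ae_3=\chib_A{}^B e_B+\zeta_A e_3$ has no $e_4$ component at all (because $g(D_Ae_3,e_3)=0$), and $D_3e_A=\nab_3e_A+\eta_Ae_3$ has none either (because $D_3e_3=-2\omb\, e_3$), so there is no ``$e_4$ piece giving $\etab$''. The $\etab$ enters only through the identity $\eta_A-\zeta_A=\nab_A\log\Omega=\f12(\eta_A+\etab_A)$ from Section \ref{seceqn}; this is the step your sketch actually needs and omits.

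For the divergence identity, the claim that the terms $-\f12\tr\chib\,\div U-\chibh\cdot\nab U$ arise from ``$\nab_3\sg^{AB}=-\chib^{AB}$'' is also off: the projected connection satisfies $\nab_3\sg=0$, so $\nab_3$ commutes with raising indices, and those terms come from tracing the $-\chib_{AC}\nab^C U_B$ part of the tensorial commutation formula $[\nab_3,\nab_A]U_B$. Likewise the pair $-\eta\cdot\chibh\cdot U+\f12\tr\chib\,\eta\cdot U$ is the trace of the zeroth-order terms $\chib_{AB}\,\eta^C U_C-\chib_A{}^C\eta_B U_C$, not a ``scalar-type'' contribution, and the curvature term contracts to $\beb\cdot U$ by the definition of the null curvature components (together with the vacuum equations), not by the Bianchi identity. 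So your overall strategy — decompose $D_3e_A$, $D_Ae_3$, use the tensorial commutator, then trace — is the right one and is what \cite{KR:Trapped} does, but the bookkeeping as written produces wrong signs and a missing term; it needs to be redone carefully before it constitutes a proof.
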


If we let $U=\nab f$, it follows
\begin{proposition}\label{commute laplacian}
\begin{equation}
\begin{split}
[D_3, \D]f=&-\f12 \tr\chib \D f-\chibh\cdot\nab^2 f+\beb\cdot\nab f+\f12(\eta+\etb)\cdot \nab_3 \nab f-\eta\cdot\chibh\cdot \nab f\\
&+\f12 \tr\chib\eta\cdot\nab f+\div\l\f12(\eta+\etb)D_3 f\r-\div(\chib\cdot\nab f)\\
=&-\tr\chib \D f-2\chibh\cdot\nab^2 f+\beb\cdot\nab f+\f12(\eta+\etb)\cdot \nab_3 \nab f-\eta\cdot\chibh\cdot \nab f\\
&+\f12 \tr\chib\eta\cdot\nab f+\div\l\f12(\eta+\etb)D_3 f\r-\div \chib\cdot \nab f.
\end{split}
\end{equation}
\end{proposition}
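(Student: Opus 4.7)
The plan is to derive this identity as a direct consequence of Proposition \ref{commute}, by taking $U = \nab f$ in its first formula and using its second (scalar) formula to convert $\nab D_3 f$ into $\nab_3 \nab f$. Writing $\D f = \div(\nab f)$, I would expand
\[
[D_3,\D]f \;=\; \nab_3\div(\nab f) \;-\; \div(\nab D_3 f),
\]
and handle the two summands separately.

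For the first summand, Proposition \ref{commute} applied with $U = \nab f$ gives
\[
\nab_3\div(\nab f) = \div(\nab_3\nab f) -\tfrac12\trchb\,\D f-\chibh\cdot\nab^2 f+\beb\cdot\nab f+\tfrac12(\eta+\etb)\cdot\nab_3\nab f-\eta\cdot\chibh\cdot\nab f+\tfrac12\trchb\,\eta\cdot\nab f.
\]
For the second summand, the scalar commutator in Proposition \ref{commute} rearranges to $\nab D_3 f = \nab_3\nab f - \tfrac12(\eta+\etb)D_3 f + \chib\cdot\nab f$, so taking divergence yields $\div(\nab D_3 f) = \div(\nab_3\nab f) - \div\bigl(\tfrac12(\eta+\etb)D_3 f\bigr) + \div(\chib\cdot\nab f)$. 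Subtracting, the two occurrences of $\div(\nab_3\nab f)$ cancel and what remains is exactly the first displayed form of $[D_3,\D]f$ in the proposition.

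The second form then follows from the first by splitting $\chib = \chibh + \tfrac12\trchb\, g$ and applying Leibniz to expand $\div(\chib\cdot\nab f) = \chibh\cdot\nab^2 f + \tfrac12\trchb\,\D f + \div\chib\cdot\nab f$. Folding this into the already-present $-\tfrac12\trchb\,\D f$ and $-\chibh\cdot\nab^2 f$ from the first form recombines them into $-\trchb\,\D f-2\chibh\cdot\nab^2 f$ while leaving $-\div\chib\cdot\nab f$ isolated, as claimed. The whole argument is bookkeeping downstream of Proposition \ref{commute}; the only place requiring care is sign-tracking through the two cancellations, namely the cancellation of $\div(\nab_3\nab f)$ and the Leibniz expansion splitting $\chib$ into its trace and trace-free parts.
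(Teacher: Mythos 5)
Your argument is correct and is essentially the paper's own proof: the paper obtains the identity precisely by setting $U=\nab f$ in Proposition \ref{commute}, and your use of the scalar commutation formula to replace $\nab D_3 f$ by $\nab_3\nab f$ plus lower-order terms, followed by the Leibniz expansion of $\div(\chib\cdot\nab f)$ via $\chib=\chibh+\f12\trchb\, g$ to pass to the second displayed form, is exactly the intended bookkeeping. No gaps.
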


With the help of these two commutation lemmas, we further have
\begin{proposition}\label{limit laplacian}
\begin{equation}
\begin{split}
&\lim_{\epsilon\rightarrow 0}\f{1}{\epsilon} \l \D'_{S_{1-\tR-\epsilon W, \ub}}\tR-\D'_{S_{1-\tR, \ub}}\tR \r\\
=&-\O\nab_3\l\D_{S_{1-\tR, \ub}}\tR\r W\\
&-\O\nab_3\l 2\O \chibh_{ab}\nab^a \tR \nab_b \tR\r W.
\end{split}
\end{equation}
\end{proposition}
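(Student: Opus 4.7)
The plan is to realize the intrinsic Laplacian on $S_{1-\tR-\epsilon W,\ub}$ as the restriction of a smooth spacetime function to the family of surfaces, and then differentiate in $\epsilon$ by the chain rule. Since $\tR$ and $W$ depend only on $\omega$, the $\epsilon$-derivative at $\epsilon=0$ of any spacetime scalar $F(u,\ub,\omega)$ evaluated along the family $u=1-\tR-\epsilon W$ is $-W\,(\partial F/\partial u)|_{u=1-\tR}$, and by the double-null identity $\partial/\partial u = \Omega e_3$ together with the fact that $\nab_3 = e_3$ on scalars, this equals $-\Omega\,W\,\nab_3 F$ restricted to $S_{1-\tR,\ub}$.

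To produce such a representation I would first apply Proposition \ref{change laplacian} with defining function $\tR+\epsilon W$ on $S_{1-\tR-\epsilon W,\ub}$ to obtain
\begin{equation*}
\D'_{S_{1-\tR-\epsilon W,\ub}}(\tR+\epsilon W) = \bigl[e^a e_a(\tR+\epsilon W) + 2\O\chibh_{ab}\nab^a(\tR+\epsilon W)\nab^b(\tR+\epsilon W)\bigr]\bigg|_{u=1-\tR-\epsilon W}.
\end{equation*}
By linearity of the Laplacian, $\D'_{S_{1-\tR-\epsilon W,\ub}}\tR = \D'_{S_{1-\tR-\epsilon W,\ub}}(\tR+\epsilon W) - \epsilon\,\D'_{S_{1-\tR-\epsilon W,\ub}}W$. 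Expanding the right-hand side in $\epsilon$ produces two kinds of linear-in-$\epsilon$ contributions: (i) perturbation of the defining function, which gives $\epsilon\bigl(e^a e_a W + 4\O\chibh_{ab}\nab^a\tR\,\nab^b W\bigr)\big|_{S_{1-\tR,\ub}}$ plus higher order terms; and (ii) the $u$-shift of the evaluation point, which by the chain-rule identity above contributes $-\epsilon\,\Omega W\,\nab_3\bigl(e^a e_a\tR + 2\O\chibh_{ab}\nab^a\tR\,\nab^b\tR\bigr)\big|_{S_{1-\tR,\ub}}$ plus higher order terms.

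Subtracting $\epsilon\,\D'_{S_{1-\tR-\epsilon W,\ub}}W$, dividing by $\epsilon$, and sending $\epsilon\to 0$, the contribution (i) cancels against $\D'_{S_{1-\tR,\ub}}W$ by a non-defining-function analogue of Proposition \ref{change laplacian}, obtained by repeating its proof with test function $W$ in place of the defining function. The remaining term is exactly contribution (ii), which upon splitting the expression inside $\nab_3$ into its two pieces and identifying $\D_{S_{1-\tR,\ub}}\tR$ with $e^a e_a\tR|_{S_{1-\tR,\ub}}$ (the round-sphere Laplacian of $\tR$ on $S_{1-\tR,\ub}$ under the natural notation of the statement) reproduces exactly the two-term right-hand side of the proposition. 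The main obstacle is justifying the cancellation in step (i): since $W$ is not the defining function of $S_{1-\tR,\ub}$, one must repeat the frame computation $e_a = e'_a + \O e_a(\tR)e_3$ with $W$ as the test function and carefully track the extra $e'_4 W$-type corrections that vanished in the original Proposition \ref{change laplacian} only because there $R$ was simultaneously the test function and the defining function. Once this cancellation is verified, the remainder of the argument is a routine Taylor expansion and chain rule.
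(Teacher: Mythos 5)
Your overall strategy --- rewrite the primed Laplacian via Proposition \ref{change laplacian} and then differentiate in $\epsilon$ using $\frac{d}{d\epsilon}\big|_{\epsilon=0}F(1-\tR-\epsilon W,\ub,\o)=-\O W\,\nab_3F$ --- is exactly the paper's. The paper, however, applies the change-of-Laplacian identity directly to $\D'_{S_{1-\tR-\epsilon W,\ub}}\tR$ with the argument $\tR$ held fixed, so that the only remaining $\epsilon$-dependence is the evaluation surface and the limit is immediate. Your detour through $\tR+\epsilon W$ (taken so that Proposition \ref{change laplacian} applies verbatim, with test function equal to defining function) forces you to establish the cancellation in your step (i), and that cancellation is where the proof breaks.

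Concretely: the correct non-defining-function analogue of Proposition \ref{change laplacian} is \emph{linear} in the test function,
\begin{equation*}
\D'_{S_{1-\tR,\ub}}W=\D_{S_{1-\tR,\ub}}W+2\O\chibh_{ab}\nab^a\tR\,\nab^bW,
\end{equation*}
with one factor of the gradient of the defining function and one of the test function; this follows from the coordinate formula (\ref{3.39}), since the only difference between $\D'$ and $\D$ is the chain-rule term $-\partial_u(\cdot)\,\partial_{\theta_i}\tR$ wherever $\partial_{\theta_i}$ hits a metric coefficient, together with $-\partial_ug^{ij}=2\O\chib^{ij}$. By contrast, polarizing the quadratic term of Proposition \ref{change laplacian} in your contribution (i) gives $4\O\chibh_{ab}\nab^a\tR\,\nab^bW$. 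The difference, $2\O\chibh_{ab}\nab^a\tR\,\nab^bW$, does not cancel, so your argument yields the stated right-hand side plus this extra first-order term in $W$. (The same term is produced by any fully careful computation of the left-hand side, because the defining function of $S_{1-\tR-\epsilon W,\ub}$ is $\tR+\epsilon W$ rather than $\tR$; the first equality of the paper's own proof elides it as well. Downstream it is harmless --- it is one of the $\chibh\cdot\nab\tR$ contributions absorbed into the lower-order terms of the linearized operators in Sections \ref{Continuity1} and \ref{Continuity2} --- but as a proof of the identity exactly as stated, your step (i) cannot be closed.)
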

\begin{proof}
Set $g$ and $\theta_1, \theta_2$ be the induced metric and independent angular variables on $S_{u,\ub}$. Give a function $f$, we have
\begin{equation}\label{3.39}
\begin{split}
\D_{S_{u,\ub}} f=&\f{1}{\sqrt{\det g}}\f{\partial}{\partial \theta_i} (\sqrt{\det g}\,g^{\theta_i \theta_l}\f{\partial f}{\partial \theta_l})\\
=&g^{\theta_1 \theta_1}\f{\partial^2 f}{\partial \theta_1 \partial \theta_1}+g^{\theta_2 \theta_2}\f{\partial^2 f}{\partial \theta_2 \partial \theta_2}+2 g^{\theta_1 \theta_2}\f{\partial^2 f}{\partial \theta_1 \partial \theta_2}\\
&+\f{\partial}{\partial \theta_1} (g^{\theta_1 \theta_1})\f{\partial f}{\partial \theta_1}+\f{\partial}{\partial \theta_1} (g^{\theta_1 \theta_2})\f{\partial f}{\partial \theta_2}\\
&+\f{\partial}{\partial \theta_2} (g^{\theta_2 \theta_1})\f{\partial f}{\partial \theta_1}+\f{\partial}{\partial \theta_2} (g^{\theta_2 \theta_2})\f{\partial f}{\partial \theta_2}\\
&+\f12 g^{\theta_k \theta_j}\f{\partial g_{\theta_j \theta_k}}{\partial \theta_i} g^{\theta_i \theta_l}\f{\partial f}{\partial \theta_l},
\end{split}
\end{equation}
where $i,j,k,l=1,2$ and $g^{\theta_k \theta_j}$ depends on $u$. Here we also use the formula
$$\f{\partial}{\partial \theta_i}\det g=\det g \cdot g^{\theta_k \theta_j} \cdot \f{\partial g_{\theta_j \theta_k}}{\partial \theta_i}.$$
Combining $\partial \tR/\partial u=0$ and Proposition \ref{change laplacian}, we deduce
\begin{equation*}
\begin{split}
&\lim_{\epsilon\rightarrow 0}\f{1}{\epsilon} \l \D'_{S_{1-\tR-\epsilon W, \ub}}\tR-\D'_{S_{1-\tR, \ub}}\tR \r \\
=&\lim_{\epsilon\rightarrow 0}\f{1}{\epsilon} \l \D_{S_{1-\tR-\epsilon W, \ub}}\tR-\D_{S_{1-\tR, \ub}}\tR \r \\
&+\lim_{\epsilon\rightarrow 0}\f{1}{\epsilon} \l 2\O \chibh_{ab}  \nab^a \tR\nab^b \tR|_{S_{1-\tR-\epsilon W, \ub}}- 2\O \chibh_{ab}  \nab^a \tR\nab^b\tR|_{S_{1-\tR, \ub}} \r\\
=&-\f{\partial}{\partial u}\l\D_{S_{1-\tR, \ub}}\tR\r W+\D_{S_{1-\tR, \ub}}\l\f{\partial}{\partial u}\tR\r W\\
&-\O\nab_3 \l 2\O \chibh_{ab}\nab^a \tR\nab^b \tR \r W\\
=&-\O\nab_3\l\D_{S_{1-\tR, \ub}}\tR\r W\\
&-\O\nab_3\l 2\O \chibh_{ab}\nab^a \tR \nab^b \tR\r W.\\
\end{split}
\end{equation*}
\end{proof}

Together with Propositions \ref{commute} and \ref{commute laplacian}, we thus have
\begin{equation*}
\begin{split}
I_1=&\lim_{\epsilon\rightarrow 0}\f{1}{\epsilon}\l \D'_{S_{1-(\tR+\epsilon W),\ub}}(\tR+\epsilon W)-\D'_{S_{1-\tR,\ub}} \tR\r\\
=&-\O \nab_3 (\D_{S_{1-\tR, \ub}}\tR)W-\O\nab_3(2\O\chibh_{ab}\nab^a \tR \nab^b \tR)W+\D'_{S_{1-\tR, \ub}}W\\
=&\O ([\D_{S_{1-\tR, \ub}}, \nab_3]\tR)W-\O\nab_3(2\O\chibh_{ab}\nab^a \tR \nab^b \tR)W+\D'_{S_{1-\tR, \ub}}W\\
=&(\O\tr\chib\D_{S_{1-\tR,\ub}}\tR)W+2(\O\chibh\cdot\nab^2 \tR)W-(\O\beb\cdot\nab\tR)W\\
&-\f12\O(\eta+\etb)\cdot(-\f12\tr\chib\nab \tR-\chibh\cdot\nab\tR)W+(\O\eta\cdot\chibh\nab\tR-\f12\O\tr\chib\eta\cdot\nab\tR+\O\div\chib\cdot\nab\tR)W\\
&-\O\nab_3(2\O\chibh_{ab}\nab^a \tR \nab^b \tR)W+\D'_{S_{1-\tR, \ub}}W,\\
&\\
I_2=&\lim_{\epsilon\rightarrow 0}\f{1}{\epsilon} \l \f12\big(\O\tr\chib|_{1-(\tR+\epsilon W),\ub} \big)|\nab(\tR+\epsilon W)|^2-\f12\big(\O\tr\chib|_{1-\tR,\ub} \big)|\nab \tR|^2\r\\
=&(\O\tr\chib)|_{1-\tR,\ub}\cdot\nab\tR\cdot\nab W-\f12\f{\partial}{\partial u}(\O\tr\chib)|_{1-\tR,\ub}\cdot|\nab \tR|^2 W\\
&+{\color{black}\f12\O\tr\chib|_{1-\tR,\ub}\cdot\f{\partial g^{{\color{black}ab}}}{\partial u}|_{1-\tR,\ub}\cdot\nab_{{\color{black}a}} \tR \cdot\nab_{{\color{black}b}} \tR\cdot (-W)}\\
=&(\O\tr\chib)|_{1-\tR,\ub}\cdot\nab\tR\cdot\nab W-\f12\f{\partial}{\partial u}(\O\tr\chib)|_{1-\tR,\ub}\cdot|\nab \tR|^2 W\\
&{\color{black}+}\f12(\O\tr\chib)^2|_{1-\tR,\ub}\cdot|\nab \tR|^2 W{\color{black}+}\O^2\tr\chib\,\chibh^{{\color{black}ab}}|_{1-\tR,\ub}\cdot \nab_{{\color{black}a}} \tR \cdot \nab_{{\color{black}b}} \tR \cdot W,\\
&\\
I_3=&\f{W}{\tR^2}-\f{W}{\tR^3}\ub a[1+(f(\ub, \o)-1)\lambda].
\end{split}
\end{equation*}
{\color{black}In $I_2$, we use $\partial g^{ab}/\partial u=-2\O \chib^{ab}$. And in the following, we will} focus on the coefficients in front of $W$. Applying the estimates in \cite{AL}, in $I_1$ all the terms containing $\chibh, \beb, \eta, \etb, \div\chib$ {\color{black}are much smaller than $1/\tR^2$, which is the coefficient of $W$ in the first term of $I_3$. And hence these terms} could be considered as lower order terms (l.o.t.). Thus, we have
$$I_1=(\O\tr\chib\D_{S_{1-\tR,\ub}}\tR)W+\D'_{S_{1-\tR, \ub}}W+\mbox{l.o.t.}.$$
Since
 $$\O=1+\mbox{l.o.t.}, \quad \quad \tr\chib=-\f{2}{R}+\mbox{l.o.t.}, \quad \quad -\f12\f{\partial}{\partial u}(\O\tr\chib)=\f14 (\tr\chib)^2+\mbox{l.o.t.},$$
{\color{black} and $\chibh_{ab}$ could be treated as l.o.t.}, then for $I_1+I_2+I_3$ the coefficients in front of $W$ are 
$$\f{1}{\tR^3}\l-2\tR^2 \D_{S_{1-\tR,\ub}}\tR+\tR|\nab\tR|^2{\color{black}+}2\tR|\nab\tR|^2+\tR-\ub a[1+(f(\ub, \o)-1)\lambda] \r\cdot [1+o(1)].$$
Recall that 
$$0=\D_{S_{1-\tR, \ub}}\tR+\O\tr\chib|\nab \tR|^2-\f12(\O\tr\chib)|\nab \tR|^2+2\O\chibh_{ab}\nab^a\tR \nab^b \tR-\f{1}{\tR}+\f{\ub a}{2\tR^2}[1+(f(\ub, \o)-1)\tilde{\lambda}].$$
It follows
\begin{equation*}
\begin{split}
&\tR|\nab\tR|^2-2\tR^2\D_{S_{1-\tR,\ub}}\tR\\
=&-\tR|\nab\tR|^2\cdot[1+o(1)]+4\tR^2\O \chibh_{ab}\nab^a\tR\nab^b\tR-2\tR+\ub a[1+(f(\ub, \o)-1)\tilde{\lambda}].
\end{split}
\end{equation*}
Therefore, for $I_1+I_2+I_3$, the coefficient in front of $W$ are 
$$\f{1}{\tR^3}\l -\tR-\tR|\nab \tR|^2{\color{black}+}2\tR|\nab\tR|^2+4\O\tR^2\chibh_{ab}\nab^a \tR \nab^b \tR+\ub a (f(\ub, \o)-1)(\tilde{\lambda}-\lambda) \r \cdot [1+o(1)].$$
By $C^1$ a priori estimate, we have proved $|\nab \tR|\ll 1$. For $|f(\ub, \o)-1|\leq 1/21$, when $\tilde{\lambda}$ and $\lambda$ are close, the coefficients in front of $W$ is negative. {\color{black} For $q\in (0,1)$, by a priori estimates in Section \ref{a priori estimates section}, it holds that $\tR(\o)\in C^{2,q}(\mathbb{S}^2)$. According to the solvability condition of elliptic equations
($W$ has negative coefficient), we further have} operator $F_R(\tR(\o),\lambda)[W]{\color{black}: C^{2, q}(\mathbb{S}^2)\rightarrow C^{{\color{black}0},q}(\mathbb{S}^2)}$ is invertible for $W$ when $\lambda$ close to $\tilde{\lambda}$.
Together with continuity argument for $0\leq \lambda\leq 1$, when $\lambda=1$ we thus obtain a solution for
$$\D'_M R(\o)+\f12\O\tr\chib|\nab R(\o)|^2-\f{1}{R(\o)}+\f{\ub a f(\ub, \o)}{2 R(\o)^2}=0.$$

\section{Continuity Argument II}\label{Continuity2}
In Section \ref{Continuity1}, we have constructed one solution for
\begin{equation*}
\D'_M R(\o)+\f12 \O\tr\chib|\nab R(\o)|^2-\f{1}{R(\o)}+\f{\ub a f(\ub, \o)}{2R(\o)^2}=0.
\end{equation*}
Based on it, in this section we employ the method of continuity for one more time and we solve equation (\ref{LR=0}):
\begin{equation*}
\begin{split}
&\D'_M R(\o)+2\eta_b\nab^b R(\o)+\f12 \O \tr\chib |\nab R(\o)|^2\\
&+4\O \omb |\nab R(\o)|^2-\f{\O^{-1}}{2} \tr\chi\\
=&0.
\end{split}
\end{equation*}  
We construct $G(R(\o), \lambda)$ through
\begin{equation*}
\begin{split}
G(R(\o), \lambda)=&\D'_M R(\o)+\f12 \O\tr\chib|\nab R(\o)|^2-\f{1}{R(\o)}+\f{\ub a f(\ub, \o)}{2R(\o)^2}\\
&+\lambda \l 2\eta_b\nab^b R(\o)+4\O \omb |\nab R(\o)|^2-\f{\O^{-1}}{2} \tr\chi+\f{1}{R(\o)}-\f{\ub a f(\ub, \o)}{2R(\o)^2}\r.
\end{split}
\end{equation*}
We already have a solution to $G(R(\o), 0)=0$. And a solution $R=R(\o)$ to $G(R(\o), 1)=0$ will solve (\ref{LR=0}). 

Assume $\tR(\o)$ solving 
\begin{equation*}
\begin{split}
G(\tR(\o), \tilde{\lambda})=&\D'_M \tR(\o)+\f12 \O\tr\chib|\nab \tR(\o)|^2-\f{1}{\tR(\o)}+\f{\ub a f(\ub, \o)}{2\tR(\o)^2}\\
&+\tilde{\lambda} \l 2\eta_b\nab^b \tR(\o)+4\O \omb |\nab \tR(\o)|^2-\f{\O^{-1}}{2} \tr\chi+\f{1}{\tR(\o)}-\f{\ub a f(\ub, \o)}{2\tR(\o)^2}\r\\
=&0.
\end{split}
\end{equation*}
Therefore, 
\begin{equation}
\begin{split}
&G_R(\tR(\o),\lambda)[W]\\
=&\lim_{\epsilon\rightarrow 0}\f{1}{\epsilon}\l G(\tR+\epsilon W, \lambda)-G(\tR, \lambda) \r\\
=&\lim_{\epsilon\rightarrow 0}\f{1}{\epsilon}\l \D'_{S_{1-(\tR+\epsilon W),\ub}}(\tR+\epsilon W)-\D'_{S_{1-\tR,\ub}} \tR\r\\
&+\lim_{\epsilon\rightarrow 0}\f{1}{\epsilon} \l\f12 (\O\tr\chib)|_{1-(\tR+\epsilon W), \ub}|\nab(\tR+\epsilon W)|^2-\f12 (\O\tr\chib)|_{1-\tR,\ub} |\nab \tR|^2\r\\
&+\lim_{\epsilon\rightarrow 0}\f{1}{\epsilon}\l -\f{1}{\tR+\epsilon W}+\f{\ub a f(\ub, \o)}{2(\tR+\epsilon W)^2}+\f{1}{\tR}-\f{\ub a f(\ub, \o)}{2\tR^2} \r \\
&+\lim_{\epsilon\rightarrow 0}\f{\lambda}{\epsilon}\l 2\eta_b|_{1-(\tR+\epsilon W), \ub} \nab^b (\tR+\epsilon W)-2\eta_b|_{1-\tR, \ub} \nab^b \tR \r \\
&+\lim_{\epsilon\rightarrow 0}\f{\lambda}{\epsilon}\l 4(\O\omb)|_{1-(\tR+\epsilon W), \ub} |\nab(\tR+\epsilon W)|^2-4(\O\omb)|_{1-\tR, \ub} |\nab \tR|^2 \r \\
&+\lim_{\epsilon\rightarrow 0}\f{\lambda}{\epsilon}\l -(\f{\O^{-1}}{2} \tr\chi)|_{1-(R+\epsilon W, \ub)}+\f{1}{\tR+\epsilon W}-\f{\ub a f(\ub, \o)}{2(\tR+\epsilon W)^2} \\
&\quad\quad\quad +(\f{\O^{-1}}{2} \tr\chi)|_{1-\tR,\ub}-\f{1}{\tR(\o)}+\f{\ub a f(\ub, \o)}{2\tR(\o)^2}\r \\
=&I_1+I_2+I_3+I_4+I_5+I_6.  
\end{split}
\end{equation}
{\color{black}
Here $I_1$ and $I_2$ are the same as they are in Section \ref{Continuity1}
\begin{equation}
\begin{split}
I_1=&(\O\tr\chib\D_{S_{1-\tR,\ub}}\tR)W+2(\O\chibh\cdot\nab^2 \tR)W-(\O\beb\cdot\nab\tR)W\\
&-\f12\O(\eta+\etb)\cdot(-\f12\tr\chib\nab \tR-\chibh\cdot\nab\tR)W+(\O\eta\cdot\chibh\nab\tR-\f12\O\tr\chib\eta\cdot\nab\tR+\O\div\chib\cdot\nab\tR)W\\
&-\O\nab_3(2\O\chibh_{ab}\nab^a \tR \nab^b \tR)W+\D'_{S_{1-\tR, \ub}}W,\\
\end{split}
\end{equation}
\noindent $I_2=\l -\f{2}{\tR}\nab\tR\cdot \nab W{\color{black}+3}\f{|\nab \tR|^2}{\tR^2}W \r \cdot [1+o(1)].$ 

\noindent For $I_3-I_6$, we have
}
\begin{equation*}
\begin{split}
&I_3=\f{W}{\tR^2}-\f{W}{\tR^3} \ub a f(\o),\\
&I_4+I_5+I_6=\lambda\l 2\eta_b |_{1-\tR,\ub} \nab^b W-2(\f{\partial}{\partial u}\eta_b |_{1-\tR,\ub} \nab^b R)  W\r\\
&\quad\quad\quad\quad\quad\quad+\lambda\l 8 (\O \omb) |_{1-\tR,\ub} \nab^b \tR \nab_b W-4\f{\partial}{\partial u}(\O \omb) |_{1-\tR, \ub} |\nab \tR|^2 W\r\\
&\quad\quad\quad\quad\quad\quad+\lambda\l \f{\partial}{\partial u}(\f{\O^{-1}}{2} \tr\chi) |_{1-\tR,\ub}W-\f{1}{\tR^2}W+\f{\ub a f(\o)}{\tR^3}W\r.
\end{split}
\end{equation*}
Notice that, applying (\ref{trchi u}), we have
$$\lambda\l \f{\partial}{\partial u}(\f{\O^{-1}}{2} \tr\chi) |_{1-\tR,\ub}W-\f{1}{\tR^2}W+\f{\ub a f(\o)}{\tR^3}W\r=\f{\lambda \ub \at c_3}{R^3}{\color{black}W},$$
with $|c_3|\leq b^{\f14}$. \\

Here, we focus on the coefficients in front of $W$. Applying the estimates in \cite{AL}, all the terms containing $\chibh, \beb, \eta, \etb, \div\chib, \omb$ could be considered as lower order terms (l.o.t.). Thus, we have
$$I_1=(\O\tr\chib\D_{S_{1-\tR,\ub}}\tR)W+\D'_{S_{1-\tR, \ub}}W+\mbox{l.o.t.},$$
$$I_4+I_5+I_6=\mbox{l.o.t.}.$$
Since
$$\nab_3 \O=\mbox{l.o.t.}, \quad \nab_3 \tr\chib=-\f12 (\tr\chib)^2+\mbox{l.o.t.}, \quad \nab_3 \nab R=-\f12 \tr\chib \nab R+\mbox{l.o.t.}, \quad \tr\chib=-\f{2}{R}+\mbox{l.o.t.},$$
then, for $I_1+I_2+I_3+I_4+I_5+I_6$, the coefficients in front of $W$ are 
$$\f{1}{\tR^3}\l\tR-\ub a f(\o){\color{black}{\color{black}+}2\tR|\nab\tR|^2}+\tR|\nab\tR|^2-2\tR^2 \D_{S_{1-\tR,\ub}}\tR\r\cdot [1+o(1)].$$
Recall that 
\begin{equation*}
\begin{split}
0=&\D_{S_{1-\tR, \ub}}\tR+\O\tr\chib|\nab \tR|^2-\f12\O\tr\chib|\nab \tR|^2+2\O\chibh_{ab}\nab^a\tR \nab^b \tR-\f{1}{\tR}+\f{\ub a f(\ub, \o)}{2\tR^2}\\
&+\tilde{\lambda} \l 2\eta_b\nab^b \tR(\o)+4\O \omb |\nab \tR(\o)|^2-\f{\O^{-1}}{2} \tr\chi+\f{1}{\tR(\o)}-\f{\ub a f(\ub, \o)}{2\tR(\o)^2}\r.
\end{split}
\end{equation*}
It follows
\begin{equation*}
\begin{split}
&\tR|\nab\tR|^2-2\tR^2\D_{S_{1-\tR,\ub}}\tR\\
=&-\tR|\nab\tR|^2\cdot[1+o(1)]+4\tR^2\O \chibh_{ab}\nab^a\tR\nab^b\tR-2\tR+\ub a f(\ub, \o)\\
&+2\tilde{\lambda} \tR^2 \l 2\eta_b\nab^b \tR(\o)+4\O \omb |\nab \tR(\o)|^2-\f{\O^{-1}}{2} \tr\chi+\f{1}{\tR(\o)}-\f{\ub a f(\ub, \o)}{2\tR(\o)^2}\r.
\end{split}
\end{equation*}
Recall that 
\begin{equation*}
-\f{\O^{-1}}{2} \tr\chi+\f{1}{\tR(\o)}-\f{\ub a f(\ub, \o)}{2\tR(\o)^2}=\f{c}{\tR(\o)},
\end{equation*}
with $|c|\leq \f{1}{\at}$.  

Therefore, for $I_1+I_2+I_3+I_4+I_5+I_6$, the coefficients in front of $W$ are 
$$\f{1}{\tR^3}\l -\tR{\color{black}{\color{black}+}2\tR|\nab\tR|^2}-\tR|\nab \tR|^2+4\O\tR^2\chibh_{ab}\nab^a \tR \nab^b \tR \r \cdot [1+o(1)].$$
Since $$\|{\color{black}\tR} |\nab\tR|^2+4\O\tR^2\chibh_{ab}\nab^a \tR \nab^b \tR\|_{L^{\infty}(\mathbb{S}^2)}\ll \tR,$$ for $0\leq \lambda, \tilde{\lambda}\leq 1$ and $\lambda$ close to $\tilde{\lambda}$, we have
$$G_R(\tR(\o),\lambda)[W]=\D'_{1-\tR,\ub}W-\f{1}{\tR^2}\cdot [1+o(1)]\cdot W.$$
This operator is invertible for $W$. Hence, there exists a solution $R(\o)$ for $$G(R(\o),1)=0,$$ which satisfies 
$$\tr\chi'=0.$$

\section{Bounds of Injectivity Radius for $M$} 
Here we use Cheeger's lemma to show that $M$ has an injectivity radius larger than $0$. 
\begin{lemma}[Lemma 51 in \cite{Pe}]
Given $n\geq 2$ and $v, K \in (0, \infty)$. If a compact $n$-dimensional manifold $(M, g)$ satisfies 
$$|\mbox{sec}|\leq K,$$
$$\mbox{vol}B(p,1)\geq v,$$
for all $p\in M$, then $\mbox{inj}M\geq i_0>0$, where $i_0$ depends only on $n, K$ and $v$. 
\end{lemma}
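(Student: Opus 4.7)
The natural plan is to argue by contradiction using a compactness and rescaling argument. Suppose the conclusion fails. Then there exists a sequence of compact $n$-dimensional Riemannian manifolds $(M_k, g_k)$ and points $p_k \in M_k$ satisfying the two hypotheses with the same constants $K$ and $v$, yet with $i_k := \mathrm{inj}_{p_k}(M_k, g_k) \to 0$. The strategy is to rescale each $M_k$ by a factor $1/i_k$, obtaining pointed manifolds $(\tilde{M}_k, \tilde{g}_k, p_k)$ in which $\mathrm{inj}_{p_k} = 1$ and $|\widetilde{\sec}| \leq K\, i_k^2 \to 0$.

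First I would invoke Klingenberg's lemma: on each $M_k$ at the point $p_k$, either $\mathrm{inj}_{p_k} \geq \pi/\sqrt{K}$ (which is impossible for large $k$ since $i_k \to 0$), or there is a closed geodesic loop $\gamma_k$ through $p_k$ of length exactly $2 i_k$. Thus for large $k$ we are in the second case. In the rescaled manifolds, $\tilde{\gamma}_k$ is a closed geodesic of length $2$ based at $p_k$. Next I would invoke the Cheeger--Gromov compactness theorem: having uniformly bounded sectional curvature $|\widetilde{\sec}| \to 0$, uniform lower bound on volumes of unit balls (this is where the hypothesis $\mathrm{vol}\, B(p,1) \geq v$ enters, after verifying it transfers through the rescaling on balls whose rescaled radius is at most $1/i_k \to \infty$), and hence a uniform lower bound on $\mathrm{inj}_{p_k} = 1$, a subsequence converges in the pointed $C^{1,\alpha}$ Cheeger--Gromov sense to a smooth pointed manifold $(\tilde{M}_\infty, \tilde{g}_\infty, p_\infty)$.

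The limit manifold is flat (because $|\widetilde{\sec}_k| \to 0$), has $\mathrm{inj}_{p_\infty} \geq 1$ by lower semicontinuity, and carries a closed geodesic loop of length $2$ through $p_\infty$ as the limit of $\tilde{\gamma}_k$. But on a flat manifold, a closed geodesic loop of length $2$ based at $p_\infty$ forces $\mathrm{inj}_{p_\infty} \leq 1$ with equality only if the loop is smoothly closed; in either case combined with the lower bound this pins $\mathrm{inj}_{p_\infty} = 1$. A finer inspection using the first variation formula shows the loop cannot be smoothly closed (otherwise there would be two distinct minimizing geodesics from $p_\infty$ to the midpoint, contradicting $\mathrm{inj}_{p_\infty} \geq 1$ in the opposite direction), producing the required contradiction.

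The main obstacle is the volume transfer step: the hypothesis gives a lower bound on $\mathrm{vol}\, B(p,1)$ in the unrescaled metric, but under the rescaling by $1/i_k$ this becomes a lower bound $v\, i_k^{-n}$ on the volume of balls of radius $1/i_k$, which is the wrong scale for applying Cheeger--Gromov compactness directly around $p_k$. To circumvent this one needs the Bishop--Gromov inequality (valid under a two-sided sectional curvature bound, hence in particular under the Ricci bound implied by $|\sec| \leq K$) to extract from $\mathrm{vol}\, B(p,1) \geq v$ a lower bound on $\mathrm{vol}\, B(p, r)$ for smaller $r$ comparable to $i_k$, which after rescaling gives the uniform lower bound on unit-ball volumes required for compactness. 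Once this technical ingredient is in place, the contradiction argument above runs cleanly.
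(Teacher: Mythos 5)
The paper does not actually prove this statement: it is quoted verbatim as Lemma 51 of Petersen's book \cite{Pe} (Cheeger's lemma) and used as a black box, so your sketch can only be measured against the standard argument there. Your overall plan --- contradiction, rescaling by the inverse injectivity radius, Klingenberg's lemma to produce geodesic loops of rescaled length $2$, and pointed $C^{1,\alpha}$ convergence to a flat limit --- is indeed that standard route.

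However, the endgame has a genuine gap. From a flat limit with $\mathrm{inj}_{p_\infty}=1$ and a geodesic loop of length $2$ through $p_\infty$ you cannot extract a contradiction: the flat cylinder $S^1\times\mathbb{R}^{n-1}$ with circumference $2$ has injectivity radius exactly $1$ everywhere and carries smoothly closed geodesics of length $2$ through every point; the two halves of such a loop are two distinct minimizing geodesics of length $1$ to the antipodal point, which is perfectly consistent with $\mathrm{inj}\geq 1$ (it only shows $\mathrm{inj}\leq 1$ there), so your ``finer inspection using the first variation formula'' does not produce a contradiction. The volume hypothesis must be used again at the end, not only to feed the compactness theorem. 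The correct use is: Bishop--Gromov applied downward from radius $1$ in the unrescaled metric gives, for every fixed $R$, a bound $\mathrm{vol}\,B(p_k,R)\geq c(n,K,v)\,R^n$ in the rescaled metric once $k$ is large, so the limit flat manifold has Euclidean volume growth at all scales; a complete flat manifold $\mathbb{R}^n/\Gamma$ with Euclidean volume growth must have $\Gamma$ trivial, i.e.\ the limit is $\mathbb{R}^n$, which carries no geodesic loop --- that is the contradiction. Two further points need care: (i) you should take $p_k$ to be a point where the injectivity radius attains its minimum on the compact $M_k$, so that the rescaled manifolds satisfy $\mathrm{inj}\geq 1$ everywhere; a bound at the single basepoint is not enough for pointed $C^{1,\alpha}$ compactness; (ii) if you instead invoke the ``bounded curvature plus unit-ball volume'' form of Cheeger--Gromov compactness, you are implicitly using Cheeger's lemma itself, and the argument becomes circular.
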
 

With $C^{2,q}$ estimates for $R(\ub, \o)$, it is straightforward that the curvatures for induced metric on $M$ is bounded and $\mbox{vol}B(p,1)$ is positive. Cheeger's lemma implies that $M$ has injectivity radius with a positive lower bound.

\section{On the Uniqueness of Apparent Horizon}\label{uniqueness}
In this section, we prove that along each $\Hb_{\ub}$ the solution to (\ref{LR=0}) is unique. \\

Let's assume we have two solutions satisfying (\ref{LR=0})
\begin{equation*}
\begin{split}
0=&\D_{\tR(\o)} \tR(\o)+2\eta_b(1-\tR,\o)\nab^b \tR(\o)+\f12 (\O \tr\chib)(1-\tR,\o) |\nab \tR(\o)|^2\\
&+2(\O\chibh_{ab})(1-\tR,\o)\nab^a \tR(\o) \nab^b \tR(\o)+4(\O \omb)(1-\tR,\o) |\nab \tR(\o)|^2\\
&-\f12 (\O^{-1} \tr\chi)(1-\tR,\o),
\end{split}
\end{equation*}
and
\begin{equation*}
\begin{split}
0=&\D_{R(\o)} R(\o)+2\eta_b(1-R,\o)\nab^b R(\o)+\f12 (\O \tr\chib)(1-R,\o) |\nab R(\o)|^2\\
&+2(\O\chibh_{ab})(1-R,\o)\nab^a R(\o) \nab^b R(\o)+4(\O \omb)(1-R,\o) |\nab R(\o)|^2\\
&-\f12 (\O^{-1} \tr\chi)(1-R,\o).
\end{split}
\end{equation*}
Notice
\begin{equation}\label{3.63}
\begin{split}
&\Delta_{R(\o)}(\tR(\o)-R(\o))\\
=&-({\Delta_{\tR(\o)}-\Delta_{R(\o)}}) \tR(\o)+\l{\Delta_{\tR(\o)} \tR(\o)-\Delta_{R(\o)}R(\o)}\r\\
=& I+II.
\end{split}
\end{equation}

To deduce the expression for $I$, recall (\ref{3.39})
\begin{equation*}
\begin{split}
\D_M f=&g^{\theta_1 \theta_1}\f{\partial^2 f}{\partial \theta_1 \partial \theta_1}+g^{\theta_2 \theta_2}\f{\partial^2 f}{\partial \theta_2 \partial \theta_2}+2 g^{\theta_1 \theta_2}\f{\partial^2 f}{\partial \theta_1 \partial \theta_2}\\
&+\f{\partial}{\partial \theta_1} (g^{\theta_1 \theta_1})\f{\partial f}{\partial \theta_1}+\f{\partial}{\partial \theta_1} (g^{\theta_1 \theta_2})\f{\partial f}{\partial \theta_2}\\
&+\f{\partial}{\partial \theta_2} (g^{\theta_2 \theta_1})\f{\partial f}{\partial \theta_1}+\f{\partial}{\partial \theta_2} (g^{\theta_2 \theta_2})\f{\partial f}{\partial \theta_2}\\
&+\f12 g^{\theta_k \theta_j}\f{\partial g_{\theta_j \theta_k}}{\partial \theta_i} g^{\theta_i \theta_l}\f{\partial f}{\partial \theta_l},
\end{split}
\end{equation*}
where $i,j,k=1,2$ and $g^{\theta_k \theta_j}$ depends on $u$. 

Based on the formula above, we decompose $I$ into
$$I=I_1+I_2+...+I_8,$$ 
where
$$I_1=-\l{g^{\theta_1 \theta_1}(1-\tR,\o)-g^{\theta_1 \theta_1}(1-R,\o)}\r \f{\partial^2}{\partial \theta_1 \partial \theta_1} \tR(\o),$$
$$I_2=-\l{g^{\theta_2 \theta_2}(1-\tR,\o)-g^{\theta_2 \theta_2}(1-R,\o)}\r\f{\partial^2}{\partial \theta_2 \partial \theta_2} \tR(\o),$$
$$I_3=-2\l{g^{\theta_1 \theta_2}(1-\tR,\o)-g^{\theta_1 \theta_2}(1-R,\o)}\r\f{\partial^2}{\partial \theta_1 \partial \theta_2} \tR(\o),$$
$$I_4=-\l{\f{\partial g^{\theta_1 \theta_1}}{\partial \theta_1}(1-\tR,\o)-\f{\partial g^{\theta_1 \theta_1}}{\partial \theta_1}(1-R,\o)}\r \f{\partial}{\partial \theta_1} \tR(\o),$$
$$I_5=-\l{\f{\partial g^{\theta_1 \theta_2}}{\partial \theta_1}(1-\tR,\o)-\f{\partial g^{\theta_1 \theta_2}}{\partial \theta_1}(1-R,\o)}\r \f{\partial}{\partial \theta_2} \tR(\o),$$
$$I_6=-\l{\f{\partial g^{\theta_2 \theta_1}}{\partial \theta_2}(1-\tR,\o)-\f{\partial g^{\theta_2 \theta_1}}{\partial \theta_2}(1-R,\o)}\r \f{\partial}{\partial \theta_1} \tR(\o),$$
$$I_7=-\l{\f{\partial g^{\theta_2 \theta_2}}{\partial \theta_2}(1-\tR,\o)-\f{\partial g^{\theta_2 \theta_2}}{\partial \theta_2}(1-R,\o)}\r \f{\partial}{\partial \theta_2} \tR(\o),$$
$$I_8=-\f12\l{( g^{\theta_k \theta_j} \f{\partial g_{\theta_j \theta_k}}{\partial \theta_i} g^{\theta_i \theta_l})(1-\tR,\o)-(g^{\theta_k \theta_j} \f{\partial g_{\theta_j \theta_k}}{\partial \theta_i} g^{\theta_i \theta_l})(1-R,\o)}\r \cdot \f{\partial \tR}{\partial \theta_l}(\o). $$
For $I_1$, we have
\begin{equation*}
\begin{split}
I_1=&-\l{g^{\theta_1 \theta_1}(1-\tR,\o)-g^{\theta_1 \theta_1}(1-R,\o)}\r \f{\partial^2}{\partial \theta_1 \partial \theta_1} \tR(\o)\\
=&\int_0^1\f{\partial g^{\theta_1 \theta_1}}{\partial u}\l1-\tau \tR-(1-\tau) R\r d\tau \\
&\quad \quad \quad \times\l{\tR(\o)-R(\o)}\r\cdot \f{\partial^2}{\partial \theta_1 \partial \theta_1} \tR(\o).\\
\end{split}
\end{equation*}
For $0\leq \tau \leq 1$, we utilize the estimates (\ref{nab nab R})
$$\f{\partial^2}{\partial \theta_1 \partial \theta_1} \tR(\o)\approx o(1) \cdot \ub a,$$ 
$$\f{\partial g^{\theta_1 \theta_1}}{\partial u}\l1-\tau \tR-(1-\tau) R \r \approx \f{1}{\ub^3 a^3},$$
and get
$$I_1= \f{1}{\ub^2 a^2} \cdot ({\tR-R}) \cdot o(1).$$
In the same fashion, for $2\leq i \leq 7$ we have
$$I_i= \f{1}{\ub^2 a^2} \cdot (\tR-R) \cdot o(1).$$

The last term $I_8$ is
\begin{equation*}
\begin{split}
I_8=-\f12\l{( g^{\theta_k \theta_j} \f{\partial g_{\theta_j \theta_k}}{\partial \theta_i} g^{\theta_i \theta_l})(1-\tR,\o)-( g^{\theta_k \theta_j} \f{\partial g_{\theta_j \theta_k}}{\partial \theta_i} g^{\theta_i \theta_l})(1-R,\o)}\r \cdot \f{\partial \tR}{\partial \theta_l}(\o). 
\end{split}
\end{equation*}
Thanks to identity
\begin{equation*}
\begin{split}
&{f(u')g(u')h(u')-f(u)g(u)h(u)}\\
=&f(u')g(u')\l{h(u')-h(u)}\r+f(u')\l{g(u')-g(u)}\r h(u)+\l{f(u')-f(u)}\r g(u)h(u), 
\end{split}
\end{equation*}
we conclude that 
\begin{equation*}
\begin{split}
I_8
=&\f12 g^{\theta_k \theta_j}(1-\tR,\o) \cdot \f{\partial g_{\theta_j \theta_k}}{\partial \theta_i} (1-\tR,\o)\cdot \l{\tR-R}\r \cdot \f{\partial \tR}{\partial \theta_l} (\o)\\
&\quad\quad\quad\times \int_0^1\f{\partial}{\partial u} g^{\theta_i \theta_l} (1-\tau \tR-(1-\tau) R, \o)d\tau\\
&+\f12 g^{\theta_k \theta_j}(1-\tR,\o) \cdot \l{\tR-R}\r \cdot \f{\partial \tR}{\partial \theta_l} (\o)\\
&\quad\quad\quad \times \int_0^1 \f{\partial^2 g_{\theta_j \theta_k}}{\partial u \partial \theta_i} (1-\tau \tR-(1-\tau) R,\o) d\tau \cdot  g^{\theta_i \theta_l} (1-R, \o)\\
&+\f12\f{\partial g_{\theta_j \theta_k}}{\partial \theta_i} (1-R,\o) \cdot g^{\theta_i \theta_l} (1-R,\o)\cdot \l{\tR-R}\r \cdot \f{\partial \tR}{\partial \theta_l} (\o)\\
&\quad\quad\quad \times \int_0^1\f{\partial}{\partial u}g^{\theta_k \theta_j}(1-\tau \tR-(1-\tau) R,\o)d\tau\\
=& \f{1}{\ub^2 a^2} \cdot ({\tR-R}) \cdot o(1).
\end{split}
\end{equation*}

For $II$, we first recall (\ref{trchi u}) $$\nab_3 \tr\chi|_{S_{1-R,\ub}}=\f{2}{R^2}-\f{2\ub a f(\ub, \o)}{R^3}+\f{\ub\at c_3}{R^3},$$
with $|c_3|\leq b^{\f14}$. We then analyze the leading term $II_1$ in $II$
\begin{equation*}
\begin{split}
II_1=&\f12 \l{(\O^{-1}\tr\chi)(1-\tR,\o)-(\O^{-1}\tr\chi)(1-R,\o)}\r\\
=&-\f12\int_0^1 \f{\partial}{\partial u}(\O^{-1}\tr\chi)(1-\tau \tR-(1-\tau)R,\o)d\tau \cdot \l{\tR-R}\r\\
=&\int_0^1\l-\f{1}{(\tau \tR+(1-\tau)R)^2}+\f{\ub a f(\ub, \o)}{(\tau \tR+(1-\tau)R)^3}+\f{\ub\at c_3}{(\tau \tR+(1-\tau)R)^3}\r d\tau \cdot(\tR-R)\\
=&\int_0^1 \l \f{-\tau \tR-(1-\tau)R+\ub a f(\ub, \o)+\ub\at c_3}{(\tau \tR+(1-\tau)R)^3} \r d\tau \cdot (\tR-R).
\end{split}
\end{equation*}  
Using $C^0$ estimate 
$$\f38 \ub a\leq R(\o), \tR(\o) \leq \f58\ub a,$$
here we obtain 
$$-\tau \tR-(1-\tau)R+\ub a f(\ub, \o)+\ub\at c_3\leq -\tau \f{3 \ub a}{8}-(1-\tau) \f{3\ub a}{8}+\f{15\ub a}{12}=\f{7\ub a}{8}.$$
$$-\tau \tR-(1-\tau)R+\ub a f(\ub, \o)+\ub\at c_3\geq -\tau \f{5 \ub a}{8}-(1-\tau) \f{5\ub a}{8}+\f{9\ub a}{12}=\f{\ub a}{8}.$$
Therefore,
\begin{equation*}
\begin{split}
&\f12 \l{(\O^{-1}\tr\chi)(1-\tR,\o)-(\O^{-1}\tr\chi)(1-R,\o)}\r\\
=&\int_0^1 \l \f{-\tau \tR-(1-\tau)R+\ub a f(\ub, \o)+\ub\at c_3}{(\tau \tR+(1-\tau)R)^3} \r d\tau \cdot (\tR-R)\\
=&\nu(\o)\cdot (\tR-R).\\
\end{split}
\end{equation*}  
We further have
$$\nu(\o)\geq \f{\ub a}{8} \f{1}{(\f{3\ub a}{8})^3}=\f{64}{81\ub^2 a^2}. $$

Similarly, we control $II_2$ (the other terms in $II$) and conclude
\begin{equation*}
\begin{split}
|II_2|\leq &\f{1}{\ub^2 a^2} \cdot \l{\tR-R}\r \cdot \mbox{o(1)}\\
&+\f{1}{\ub^2 a^2} \cdot \f{\partial}{\partial \theta_i}\l\tR-R\r \cdot \mbox{o(1)}.
\end{split}
\end{equation*}
Back to (\ref{3.63}), we arrive at 
\begin{equation}\label{3.77}
\begin{split}
&\Delta_{R(\o)} (\tR-R)(\o)-\nu(\o) (\tR-R)(\o) \\
&+\f{1}{\ub^2 a^2} \cdot (\tR-R)(\o) \cdot o(1)\\
&+\f{1}{\ub^2 a^2} \cdot \f{\partial}{\partial \theta_i}(\tR-R)(\o) \cdot o(1)\\
=&0,
\end{split}
\end{equation}
with
$$\nu(\o)\geq \f{64}{81\ub^2 a^2}. $$
With maximal principle, we conclude that 
$$\tR(\o)=R(\o) \quad \mbox{for} \quad \o\in \mathbb{S}^2.$$

\section{Regularity of Apparent Horizon}\label{regularity of horizon}
According to the previous section, along each $\Hb_{\ub}$ we have obtained a unique solution for (\ref{LR=0}). This solution corresponds to a unique MOTS on $\underline{H}_{\ub}$. 
For different $\ub$, collecting these MOTSs together yields an apparent horizon. In this section, we study its regularity. \\ 

By (\ref{LR=0}), for different $\ub'$ and $\ub$ we have
\begin{equation}\label{3.61}
\begin{split}
0=&\D_{R(\ub', \o)} R(\ub', \o)+2\eta_b(1-R(\ub',\o), \ub',\o)\nab^b R(\ub', \o)\\
&+\f12 (\O \tr\chib)(1-R(\ub',\o), \ub',\o) |\nab R(\ub', \o)|^2\\
&+2(\O\chibh_{ab})(1-R(\ub',\o), \ub',\o)\nab^a R(\ub', \o) \nab^b R(\ub', \o)\\
&+4(\O \omb)(1-R(\ub',\o), \ub',\o) |\nab R(\ub', \o)|^2\\
&-\f12 (\O^{-1} \tr\chi)(1-R(\ub',\o), \ub',\o),
\end{split}
\end{equation}
and
\begin{equation}\label{3.62}
\begin{split}
0=&\D_{R(\ub, \o)} R(\ub, \o)+2\eta_b(1-R(\ub,\o), \ub,\o)\nab^b R(\ub, \o)\\
&+\f12 (\O \tr\chib)(1-R(\ub,\o), \ub,\o) |\nab R(\ub, \o)|^2\\
&+2(\O\chibh_{ab})(1-R(\ub,\o), \ub,\o)\nab^a R(\ub, \o) \nab^b R(\ub, \o)\\
&+4(\O \omb)(1-R(\ub,\o), \ub,\o) |\nab R(\ub, \o)|^2\\
&-\f12 (\O^{-1} \tr\chi)(1-R(\ub,\o), \ub,\o).
\end{split}
\end{equation}
Notice
\begin{equation}\label{3.63}
\begin{split}
&\Delta_{R(\ub, \o)}\f{R(\ub', \o)-R(\ub, \o)}{\ub'-\ub}\\
=&-\f{\Delta_{R(\ub',\o)}-\Delta_{R(\ub,\o)}}{\ub'-\ub} R(\ub', \o)+\f{\Delta_{R(\ub',\o)} R(\ub',\o)-\Delta_{R(\ub, \o)}R(\ub, \o)}{\ub'-\ub}\\
=& I+II.
\end{split}
\end{equation}
The expression for $II$ follows from (\ref{3.61}) and (\ref{3.62}). 
To deduce the expression for $I$, recall (\ref{3.39})
\begin{equation*}
\begin{split}
\D_M f=&g^{\theta_1 \theta_1}\f{\partial^2 f}{\partial \theta_1 \partial \theta_1}+g^{\theta_2 \theta_2}\f{\partial^2 f}{\partial \theta_2 \partial \theta_2}+2 g^{\theta_1 \theta_2}\f{\partial^2 f}{\partial \theta_1 \partial \theta_2}\\
&+\f{\partial}{\partial \theta_1} (g^{\theta_1 \theta_1})\f{\partial f}{\partial \theta_1}+\f{\partial}{\partial \theta_1} (g^{\theta_1 \theta_2})\f{\partial f}{\partial \theta_2}\\
&+\f{\partial}{\partial \theta_2} (g^{\theta_2 \theta_1})\f{\partial f}{\partial \theta_1}+\f{\partial}{\partial \theta_2} (g^{\theta_2 \theta_2})\f{\partial f}{\partial \theta_2}\\
&+\f12 g^{\theta_k \theta_j}\f{\partial g_{\theta_j \theta_k}}{\partial \theta_i} g^{\theta_i \theta_l}\f{\partial f}{\partial \theta_l},
\end{split}
\end{equation*}
where $i,j,k=1,2$ and $g^{\theta_k \theta_j}$ depends on $u$. 

Based on the formula above, we decompose $I$ into
$$I=I_1+I_2+...+I_8,$$ 
where
$$I_1=-\f{g^{\theta_1 \theta_1}(1-R(\ub',\o), \ub', \o)-g^{\theta_1 \theta_1}(1-R(\ub,\o), \ub, \o)}{\ub'-\ub} \f{\partial^2}{\partial \theta_1 \partial \theta_1} R(\ub',\o),$$
$$I_2=-\f{g^{\theta_2 \theta_2}(1-R(\ub',\o),\ub', \o)-g^{\theta_2 \theta_2}(1-R(\ub,\o), \ub, \o)}{\ub'-\ub} \f{\partial^2}{\partial \theta_2 \partial \theta_2} R(\ub',\o),$$
$$I_3=-2\cdot\f{g^{\theta_1 \theta_2}(1-R(\ub',\o), \ub', \o)-g^{\theta_1 \theta_2}(1-R(\ub,\o), \ub, \o)}{\ub'-\ub} \f{\partial^2}{\partial \theta_1 \partial \theta_2} R(\ub',\o),$$
$$I_4=-\f{\f{\partial g^{\theta_1 \theta_1}}{\partial \theta_1}(1-R(\ub',\o), \ub', \o)-\f{\partial g^{\theta_1 \theta_1}}{\partial \theta_1}(1-R(\ub,\o), \ub, \o)}{\ub'-\ub} \f{\partial}{\partial \theta_1} R(\ub',\o),$$
$$I_5=-\f{\f{\partial g^{\theta_1 \theta_2}}{\partial \theta_1}(1-R(\ub',\o), \ub', \o)-\f{\partial g^{\theta_1 \theta_2}}{\partial \theta_1}(1-R(\ub,\o), \ub, \o)}{\ub'-\ub} \f{\partial}{\partial \theta_2} R(\ub',\o),$$
$$I_6=-\f{\f{\partial g^{\theta_2 \theta_1}}{\partial \theta_2}(1-R(\ub',\o), \ub', \o)-\f{\partial g^{\theta_2 \theta_1}}{\partial \theta_2}(1-R(\ub,\o), \ub, \o)}{\ub'-\ub} \f{\partial}{\partial \theta_1} R(\ub',\o),$$
$$I_7=-\f{\f{\partial g^{\theta_2 \theta_2}}{\partial \theta_2}(1-R(\ub',\o), \ub', \o)-\f{\partial g^{\theta_2 \theta_2}}{\partial \theta_2}(1-R(\ub,\o), \ub, \o)}{\ub'-\ub} \f{\partial}{\partial \theta_2} R(\ub',\o),$$
$$I_8=-\f12\f{\l g^{\theta_k \theta_j} \f{\partial g_{\theta_j \theta_k}}{\partial \theta_i} g^{\theta_i \theta_l}\r(1-R(\ub',\o), \ub', \o)-\l g^{\theta_k \theta_j} \f{\partial g_{\theta_j \theta_k}}{\partial \theta_i} g^{\theta_i \theta_l}\r(1-R(\ub,\o), \ub, \o)}{\ub'-\ub} \cdot \f{\partial R}{\partial \theta_l}(\ub',\o). $$

For $I_1$, we have
\begin{equation*}
\begin{split}
I_1=&-\f{g^{\theta_1 \theta_1}(1-R(\ub',\o), \ub', \o)-g^{\theta_1 \theta_1}(1-R(\ub,\o), \ub, \o)}{\ub'-\ub} \f{\partial^2}{\partial \theta_1 \partial \theta_1} R(\ub',\o)\\
=&\int_0^1\f{\partial g^{\theta_1 \theta_1}}{\partial u}\l1-\tau R(\ub', \o)-(1-\tau) R(\ub, \o), \tau\ub'+(1-\tau)\ub, \o \r d\tau \\
&\quad \quad \quad \times\f{R(\ub', \o)-R(\ub, \o)}{\ub'-\ub}\cdot \f{\partial^2}{\partial \theta_1 \partial \theta_1} R(\ub', \o)\\
&-\int_0^1 \f{\partial g^{\theta_1 \theta_1}}{\partial \ub}\l 1-\tau R(\ub', \o)-(1-\tau)R(\ub,\o), \tau\ub'+(1-\tau)\ub, \o \r d\tau\cdot\f{\ub'-\ub}{\ub'-\ub}\cdot \f{\partial^2}{\partial \theta_1 \partial \theta_1}R(\ub',\o). 
\end{split}
\end{equation*}
For $0\leq \tau \leq 1$ and when $\ub'$ is close to $\ub$, we utilize the estimates (\ref{nab nab R})
$$\f{\partial^2}{\partial \theta_1 \partial \theta_1} R(\ub',\o)\approx o(1) \cdot \ub a, $$
$$\f{\partial g^{\theta_1 \theta_1}}{\partial u}\l1-\tau R(\ub', \o)-(1-\tau) R(\ub, \o), \tau\ub'+(1-\tau)\ub, \o \r \approx \f{1}{\ub^3 a^3},$$
$$\f{\partial g^{\theta_1 \theta_1}}{\partial \ub}\l 1-\tau R(\ub', \o)-(1-\tau)R(\ub,\o), \tau\ub'+(1-\tau)\ub, \o \r d\tau \approx \f{1}{\ub^3 a^2}$$
and get
$$I_1= \f{1}{\ub^2 a^2} \cdot \f{R(\ub', \o)-R(\ub, \o)}{\ub'-\ub} \cdot o(1)+\f{a}{\ub^2 a^2}\cdot o(1).$$
In the same fashion, for $2\leq i \leq 7$ we have
$$I_i= \f{1}{\ub^2 a^2} \cdot \f{R(\ub', \o)-R(\ub, \o)}{\ub'-\ub} \cdot o(1)+\f{a}{\ub^2 a^2}\cdot o(1).$$
The last term $I_8$ is
\begin{equation*}
\begin{split}
I_8=-\f12\f{\l g^{\theta_k \theta_j} \f{\partial g_{\theta_j \theta_k}}{\partial \theta_i} g^{\theta_i \theta_l}\r(1-R(\ub',\o),\ub', \o)-\l g^{\theta_k \theta_j} \f{\partial g_{\theta_j \theta_k}}{\partial \theta_i} g^{\theta_i \theta_l}\r(1-R(\ub,\o), \ub, \o)}{\ub'-\ub} \cdot \f{\partial R}{\partial \theta_l}(\ub',\o). 
\end{split}
\end{equation*}
Thanks to identity
\begin{equation*}
\begin{split}
&\f{f(\ub')g(\ub')h(\ub')-f(\ub)g(\ub)h(\ub)}{\ub'-\ub}\\
=&f(\ub')g(\ub')\f{h(\ub')-h(\ub)}{\ub'-\ub}+f(\ub')\f{g(\ub')-g(\ub)}{\ub'-\ub}h(\ub)+\f{f(\ub')-f(\ub)}{\ub'-\ub}g(\ub)h(\ub), 
\end{split}
\end{equation*}
we conclude that 
\begin{equation*}
\begin{split}
I_8
=&\f12 g^{\theta_k \theta_j}(1-R(\ub',\o), \ub', \o) \cdot \f{\partial g_{\theta_j \theta_k}}{\partial \theta_i} (1-R(\ub',\o), \ub', \o)\cdot \f{R(\ub',\o)-R(\ub, \o)}{\ub'-\ub} \cdot \f{\partial R}{\partial \theta_l} (\ub',\o)\\
&\quad\quad\quad\times \int_0^1\f{\partial}{\partial u} g^{\theta_i \theta_l} (1-\tau R(\ub', \o)-(1-\tau) R(\ub, \o), \tau \ub'+(1-\tau)\ub, \o)d\tau\\
&-\f12 g^{\theta_k \theta_j}(1-R(\ub',\o), \ub', \o) \cdot \f{\partial g_{\theta_j \theta_k}}{\partial \theta_i} (1-R(\ub',\o), \ub', \o)\cdot \f{\ub'-\ub}{\ub'-\ub} \cdot \f{\partial R}{\partial \theta_l} (\ub',\o)\\
&\quad\quad\quad\times \int_0^1\f{\partial}{\partial \ub} g^{\theta_i \theta_l} (1-\tau R(\ub', \o)-(1-\tau) R(\ub, \o), \tau \ub'+(1-\tau)\ub, \o)d\tau\\
&+\f12 g^{\theta_k \theta_j}(1-R(\ub',\o), \ub', \o) \cdot \f{R(\ub',\o)-R(\ub, \o)}{\ub'-\ub} \cdot \f{\partial R}{\partial \theta_l} (\ub',\o)\\
&\quad\quad\quad \times \int_0^1 \f{\partial^2 g_{\theta_j \theta_k}}{\partial u \partial \theta_i} (1-\tau R(\ub', \o)-(1-\tau) R(\ub, \o), \tau \ub'+(1-\tau)\ub, \o) d\tau\\
&\quad\quad\quad\quad\quad\quad \times  g^{\theta_i \theta_l} (1-R(\ub, \o), \ub, \o)\\
&-\f12 g^{\theta_k \theta_j}(1-R(\ub',\o), \ub', \o) \cdot \f{\ub'-\ub}{\ub'-\ub} \cdot \f{\partial R}{\partial \theta_l} (\ub',\o)\\
&\quad\quad\quad \times \int_0^1 \f{\partial^2 g_{\theta_j \theta_k}}{\partial \ub \partial \theta_i} (1-\tau R(\ub', \o)-(1-\tau) R(\ub, \o), \tau \ub'+(1-\tau)\ub, \o) d\tau\\
&\quad\quad\quad\quad\quad\quad \times  g^{\theta_i \theta_l} (1-R(\ub, \o), \ub, \o)\\
&+\f12\f{\partial g_{\theta_j \theta_k}}{\partial \theta_i} (1-R(\ub,\o), \ub, \o) \cdot g^{\theta_i \theta_l} (1-R(\ub,\o), \ub, \o)\cdot \f{R(\ub',\o)-R(\ub, \o)}{\ub'-\ub} \cdot \f{\partial R}{\partial \theta_l} (\ub',\o)\\
&\quad\quad\quad \times \int_0^1\f{\partial}{\partial u}g^{\theta_k \theta_j}(1-\tau R(\ub', \o)-(1-\tau) R(\ub, \o), \tau \ub'+(1-\tau)\ub, \o)d\tau\\
&-\f12\f{\partial g_{\theta_j \theta_k}}{\partial \theta_i} (1-R(\ub,\o), \ub, \o) \cdot g^{\theta_i \theta_l} (1-R(\ub,\o), \ub, \o)\cdot \f{\ub'-\ub}{\ub'-\ub} \cdot \f{\partial R}{\partial \theta_l} (\ub',\o)\\
&\quad\quad\quad \times \int_0^1\f{\partial}{\partial \ub}g^{\theta_k \theta_j}(1-\tau R(\ub', \o)-(1-\tau) R(\ub, \o), \tau \ub'+(1-\tau)\ub, \o)d\tau\\
=& \f{1}{\ub^2 a^2} \cdot \f{R(\ub', \o)-R(\ub, \o)}{\ub'-\ub} \cdot o(1)+\f{a}{\ub^2 a^2}\cdot o(1).
\end{split}
\end{equation*}
For $II$, we first analyze the leading term
\begin{equation*}
\begin{split}
&\f12 \f{(\O^{-1}\tr\chi)(1-R(\ub',\o), \ub', \o)-(\O^{-1}\tr\chi)(1-R(\ub,\o), \ub, \o)}{\ub'-\ub}\\
=&\f12 \int_0^1 \f{\partial}{\partial \ub}(\O^{-1}\tr\chi)(1-\tau R(\ub',\o)-(1-\tau)R(\ub,\o), \tau \ub'+(1-\tau)\ub, \o)d\tau \cdot \f{\ub'-\ub}{\ub'-\ub}\\
&-\f12\int_0^1 \f{\partial}{\partial u}(\O^{-1}\tr\chi)(1-\tau R(\ub',\o)-(1-\tau)R(\ub,\o), \tau \ub'+(1-\tau)\ub, \o)d\tau\\
&\quad\quad\quad\times \f{R(\ub',\o)-R(\ub, \o)}{\ub'-\ub}\\
=&-\tilde{\nu}(\ub, \o; \ub')\f{a}{\ub^2 a^2}+\f{\nu(\ub, \o; \ub')}{\ub^2 a^2}\cdot \f{R(\ub',\o)-R(\ub,\o)}{\ub'-\ub}.
\end{split}
\end{equation*}  
Here $\tilde{\nu}(\ub, \o; \ub')$ is defined through
\begin{equation}\label{tilde nu}
\begin{split}
&-\tilde{\nu}(\ub, \o; \ub')\f{a}{\ub^2 a^2}\\
=&\f12 \int_0^1 \f{\partial}{\partial \ub}(\O^{-1}\tr\chi)(1-\tau R(\ub',\o)-(1-\tau)R(\ub,\o), \tau \ub'+(1-\tau)\ub, \o)d\tau \cdot \f{\ub'-\ub}{\ub'-\ub}\\
=&-\f12\int_0^1 [1+o(1)]\cdot |\chih|^2 (1-\tau R(\ub',\o)-(1-\tau)R(\ub,\o), \tau \ub'+(1-\tau)\ub, \o) d\tau.\\ 
\end{split}
\end{equation}  
In the same manner, we define $\nu(\ub, \o; \ub')$ through
\begin{equation*}
\begin{split}
&\f{\nu(\ub, \o; \ub')}{\ub^2 a^2}\cdot \f{R(\ub',\o)-R(\ub,\o)}{\ub'-\ub}\\
=&-\f12\int_0^1 \f{\partial}{\partial u}(\O^{-1}\tr\chi)(1-\tau R(\ub',\o)-(1-\tau)R(\ub,\o), \tau \ub'+(1-\tau)\ub, \o)d\tau\\
&\quad\quad\quad\times\f{R(\ub',\o)-R(\ub, \o)}{\ub'-\ub}.
\end{split}
\end{equation*}  
Using (\ref{trchi u}), we have
\begin{equation}\label{new new nu}
\begin{split}
&\f{\nu(\ub, \o; \ub')}{\ub^2 a^2}\cdot \f{R(\ub',\o)-R(\ub,\o)}{\ub'-\ub}\\
=&\int_0^1 \l \f{-\tau R(\ub', \o)-(1-\tau)R(\ub, \o)+[\tau\ub'+(1-\tau)\ub] a f(\ub, \o)+[\tau\ub'+(1-\tau)\ub]\at c_3}{\l\tau R(\ub',\o)+(1-\tau)R(\ub,\o)\r^3} \r d\tau\\
&\quad\quad\quad \times \f{R(\ub',\o)-R(\ub,\o)}{\ub'-\ub},
\end{split}
\end{equation}  
with $|c_3|\leq b^{\f14}$. \\
Together with $C^0$ estimate, when $\ub'$ is close to $\ub$, we conclude
$$\nu(\ub, \o; \ub')\geq \f{1}{8}\f{1}{(\f38)^3}=\f{64}{81},$$
$$\nu(\ub, \o; \ub')\leq \f{17}{24}\f{1}{(\f58)^3}=\f{1088}{375}.$$
We then move to $II_2$ (the other terms in $II$). In a similar fashion, we conclude
\begin{equation*}
\begin{split}
|II_2|\leq &\f{a}{\ub^2 a^2}\cdot o(1)+\f{1}{\ub^2 a^2} \cdot \f{R(\ub', \o)-R(\ub, \o)}{\ub'-\ub} \cdot o(1)\\
&+\f{1}{\ub^2 a^2} \cdot \f{\partial}{\partial \theta_i}\f{R(\ub', \o)-R(\ub, \o)}{\ub'-\ub} \cdot o(1).
\end{split}
\end{equation*}
Back to (\ref{3.63}), we arrive at 
\begin{equation}\label{3.77}
\begin{split}
&\Delta_{R(\ub, \o)} \l \f{R(\ub',\o)-R(\ub,\o)}{\ub'-\ub}\r-\f{\nu(\ub, \o; \ub')}{\ub^2 a^2} \l \f{R(\ub',\o)-R(\ub,\o)}{\ub'-\ub}\r \\
&+\tilde{\nu}(\ub, \o; \ub')\f{a f(\o)}{\ub^2 a^2}+\f{1}{\ub^2 a^2} \cdot \f{R(\ub', \o)-R(\ub, \o)}{\ub'-\ub} \cdot o(1)\\
&+\f{1}{\ub^2 a^2} \cdot \f{\partial}{\partial \theta_i}\f{R(\ub', \o)-R(\ub, \o)}{\ub'-\ub} \cdot o(1)\\
=&\f{a}{\ub^2 a^2}\cdot o(1).
\end{split}
\end{equation}
We then construct an elliptic equation for $h(\ub, \o; \ub')$ satisfying
\begin{equation}\label{new h}
\D_{R(\ub,\o)}h(\ub, \o; \ub')-\f{\nu(\ub, \o; \ub')}{\ub^2 a^2}h(\ub, \o; \ub')+\tilde{\nu}(\ub, \o; \ub')\f{a{\color{black}f(\o)}}{\ub^2 a^2}=0.
\end{equation}
For this equation, we have a unique solution $h(\ub, \o; \ub')$ and it is smooth. 

With the help of $h(\ub,\o;\ub')$, (\ref{3.77}) could be rewritten as
\begin{equation}\label{eqn dq}
\begin{split}
&\Delta_{R(\ub, \o)} \l \f{R(\ub',\o)-R(\ub,\o)}{\ub'-\ub}-h(\ub, \o; \ub')\r-\f{\nu(\ub, \o; \ub')}{\ub^2 a^2} \l \f{R(\ub',\o)-R(\ub,\o)}{\ub'-\ub}-h(\ub, \o; \ub')\r \\
&+\f{1}{\ub^2 a^2} \cdot \l\f{R(\ub', \o)-R(\ub, \o)}{\ub'-\ub}-h(\ub, \o; \ub')\r \cdot o(1)\\
&+\f{1}{\ub^2 a^2} \cdot \l\f{\partial}{\partial \theta_i}\f{R(\ub', \o)-R(\ub, \o)}{\ub'-\ub}-h(\ub, \o; \ub')\r \cdot o(1)\\
=&\f{a}{\ub^2 a^2}\cdot o(1).
\end{split}
\end{equation}
Multiplying $\f{R(\ub',\o)-R(\ub,\o)}{\ub'-\ub}-h(\ub, \o; \ub')$ with (\ref{eqn dq}) and integrating by parts on $M_{\ub}$ we arrive at
\begin{equation*}
\begin{split}
&\int_{M_{\ub}} |\nab' \l \f{R(\ub',\o)-R(\ub,\o)}{\ub'-\ub}-h(\ub, \o; \ub')\r |^2\\
&+\int_{M_{\ub}} \f{\nu(\ub, \o; \ub')}{\ub^2 a^2} \l \f{R(\ub',\o)-R(\ub,\o)}{\ub'-\ub}-h(\ub, \o; \ub')\r ^2\\
=& \int_{M_{\ub}} \f{1}{\ub^2 a^2} \l \f{R(\ub',\o)-R(\ub,\o)}{\ub'-\ub}-h(\ub, \o; \ub')\r ^2 \cdot o(1)\\
&+\int_{M_{\ub}} \f{1}{\ub^2 a^2} \l \f{R(\ub',\o)-R(\ub,\o)}{\ub'-\ub}-h(\ub, \o; \ub')\r \cdot a  \cdot o(1)\\
&+\int_{M_{\ub}} \f{1}{\ub^2 a^2} \l \f{R(\ub',\o)-R(\ub,\o)}{\ub'-\ub}-h(\ub, \o; \ub')\r \\
&\quad\quad\quad\times \f{\partial}{\partial \theta_i} \l \f{R(\ub', \o)-R(\ub, \o)}{\ub'-\ub}-h(\ub, \o; \ub') \r \cdot o(1).
\end{split}
\end{equation*}
This implies
\begin{equation}\label{3.81} 
\begin{split}
&\int_{M_{\ub}} |\nab' \l \f{R(\ub',\o)-R(\ub,\o)}{\ub'-\ub}-h(\ub, \o; \ub')\r |^2\\
&+\int_{M_{\ub}} \f{2}{\ub^2 a^2} \l \f{R(\ub',\o)-R(\ub,\o)}{\ub'-\ub}-h(\ub, \o; \ub')\r ^2\\
\leq &\int_{M_{\ub}} \f{a^2}{\ub^2 a^2}\cdot o(1) \leq o(1) \cdot a^2. 
\end{split}
\end{equation}
Denote $\nu(\ub, \o)$ and $\tilde{\nu}(\ub, \o)$ to be the limits of $\nu(\ub, \o; \ub')$ and $\tilde{\nu}(\ub, \o; \ub')$ as $\ub'$ approaches $\ub$:
$$\nu(\ub, \o):=\lim_{\ub'\rightarrow \ub}\nu(\ub, \o; \ub'), \quad \quad \tilde{\nu}(\ub, \o):=\lim_{\ub'\rightarrow \ub}\tilde{\nu}(\ub, \o; \ub').$$
Let $h(\ub, \o)$ be the unique solution to 
$$\D_{R(\ub,\o)}h(\ub, \o)-\f{\nu(\ub, \o)}{\ub^2 a^2}h(\ub, \o)+\tilde{\nu}(\ub, \o)\f{a f(\o)}{\ub^2 a^2}=0.$$
Back to (\ref{3.81}), since $\ub'$ is an arbitrary number close to $\ub$ {\color{black}and 
$h(\ub, \o; \ub')$ being a smooth function with respect to all variables, we have that $\f{R(\ub',\o)-R(\ub,\o)}{\ub'-\ub}$ is uniformly bounded in $L^2(M_{\ub})$. According to the standard result for difference quotient \footnote{{\color{black}See Theorem 3 of Section 5.8.2 in \cite{Evans}.}}, we have
$$\f{\partial R}{\partial \ub} (\ub, \o) \,\, \mbox{exists},\,\, \mbox{and by} \,\, (\ref{3.81}) \,\, \mbox{it holds} \,\, \f{\partial R}{\partial \ub} (\ub, \o)-h(\ub, \o) \in L^2(M_{\ub}).$$}
Thus,
$$R(\ub, \o)\in H^1(\mathbb{R}^3).$$
Employing (\ref{3.81}) for one more time, we conclude
$$\f{\partial R}{\partial \ub} (\ub, \o)-h(\ub, \o) \in H^1(M_{\ub}).$$
By Sobolev's embedding, this implies
$$\f{\partial R}{\partial \ub} (\ub, \o)-h(\ub, \o) \in L^6(M_{\ub}).$$
Hence,
$$R(\ub, \o)\in W^{1,6}(\mathbb{R}^3).$$
Apply Sobolev's embedding again
\begin{equation}\label{3.49new}
W^{k,p}(\mathbb{R}^n)\subset C^{r,\alpha}(\mathbb{R}^n) \quad \mbox{for} \quad \f{k-r-\alpha}{n}=\f{1}{p}, \quad \mbox{where} \quad \alpha\in (0,1).
\end{equation}
For $k=1, p=6, n=3, r=0$, we obtain $\alpha=1/2$. Therefore, 
$$R(\ub, \o)\in C^{0,\f12}(\mathbb{R}^3).$$
Utilize maximal principle for (\ref{eqn dq}), we have 
\begin{equation*}
\|\f{R(\ub',\o)-R(\ub,\o)}{\ub'-\ub}-h(\ub, \o)\|_{L^{\infty}(M_{\ub})}\leq a\cdot o(1). 
\end{equation*}
By elliptic estimates for (\ref{eqn dq}), we then conclude for $p>1$
\begin{equation*}
\begin{split}
&\|\f{R(\ub',\o)-R(\ub, \o)}{\ub'-\ub}-h(\ub, \o)\|_{W^{2,p}(M_{\ub})}\\
\lesssim & \f{1}{\ub^{2}a^{2}}\|\f{R(\ub',\o)-R(\ub, \o)}{\ub'-\ub}-h(\ub, \o)\|_{L^p(M_{\ub})}\\
&+\f{1}{\ub^2 a^2} \|o(1)\cdot a\|_{L^{p}(M_{\ub})}.
\end{split}
\end{equation*}
With Sobolev's embedding (\ref{3.49new}), for $k=2, p=3/2, n=2, r=0$, we get $\f{2-\alpha}{2}=\f23$. This implies $\alpha=2/3$. Hence, 
$$\f{R(\ub',\o)-R(\ub, \o)}{\ub'-\ub} \in C^{0,\f23}(M_{\ub}).$$
With $C^{2,\alpha}$ estimates for (\ref{eqn dq}), it follows
\begin{equation}\label{3.96new}
\begin{split}
&\|\f{R(\ub',\o)-R(\ub, \o)}{\ub'-\ub}-h(\ub, \o)\|_{C^{2,\alpha}(M_{\ub})}\\
\leq & \f{1}{\ub^{2+\alpha}a^{2+\alpha}}\|\f{R(\ub',\o)-R(\ub, \o)}{\ub'-\ub}-h(\ub, \o)\|_{C^{0}(M_{\ub})}\\
&+\f{1}{\ub^2 a^2} \|o(1)\cdot a\|_{C^{\alpha}(M_{\ub})}\\
\leq& \f{1}{\ub^{2+\alpha}a^{2+\alpha}}\cdot a \cdot o(1). 
\end{split}
\end{equation}
Define 
\begin{equation*}
\D_{\tau_n} R (\ub, \o):=\f{R(\ub+\tau_n, \o)-R(\ub, \o)}{\tau_n}. 
\end{equation*}
By (\ref{3.96new}), for any $\tau_n \rightarrow 0$, $\D_{\tau_n}R(\ub, \o)$ is bounded in $C^{2,\alpha}(M_{\ub})$. We also have the fact that for $\alpha'<\alpha$:
$$C^{2, \alpha} \hookrightarrow C^{2, \alpha'} \, \mbox{is a compact embedding}.$$
Consequently, there exists a subsequence $\tau'_n\rightarrow 0$ and $X(\ub, \o)\in C^{2,\alpha'}(M_{\ub})$, such that 
$$\D_{\tau'_n} R(\ub, \o) \rightarrow X(\ub, \o) \quad \mbox{in} \quad C^{2,\alpha'}(M_{\ub}).$$
From the derivation of (\ref{3.77}), we see that there exist elliptic operators 
$$L^{\tau_n}_{\theta_1, \theta_2}=\D_{R(\ub,\o)}+o(1)\cdot\f{1}{\ub^2 a^2}\cdot \f{\partial}{\partial \theta_i}-[1+o(1)]\cdot \f{\nu(\ub, \o; \ub+\tau_n)}{\ub^2 a^2},$$ 
$$L_{\theta_1, \theta_2}=\D_{R(\ub,\o)}+o(1)\cdot\f{1}{\ub^2 a^2}\cdot \f{\partial}{\partial \theta_i}-[1+o(1)]\cdot \f{\nu(\ub, \o)}{\ub^2 a^2},$$ 
and $l\in C^{\infty}(\ub, \o)$. With them, we could rewrite (\ref{3.77}) as
\begin{equation*}
L^{\tau_n}_{\theta_1, \theta_2} (\D_{\tau_n} R)=\D_{\tau_n}l.
\end{equation*}
Along $\tau'_n \rightarrow 0$, we have
\begin{equation*}
L_{\theta_1, \theta_2} (\lim_{n\rightarrow+\infty} \D_{\tau'_n} R(\ub,\o))=L_{\theta_1, \theta_2} X(\ub, \o)=\f{\partial l}{\partial \ub}. 
\end{equation*}
For any $\tau''_n$ (subsequence of $\tau_n$) $\rightarrow 0$, if $\lim_{n\rightarrow +\infty}\D_{\tau_n''}R(\ub, \o)$ exists, then as $n\rightarrow+\infty$ we have
\begin{equation*}
L_{\theta_1, \theta_2} \l \lim_{n\rightarrow+\infty} \D_{\tau''_n} R(\ub,\o)\r=\f{\partial l}{\partial \ub}. 
\end{equation*}
From the invertibility of this operator $L_{\theta_1, \theta_2}$, it follows
$$\lim_{n\rightarrow+\infty} \D_{\tau''_n} R(\ub,\o)=X(\ub, \o).$$
Using the following conclusion in analysis:
\begin{proposition}
Suppose $\{Y_n\}^{\infty}_{n=1}$ is a bounded sequence. If all of its convergent subsequences have the same limit, then the sequence is convergent. 
\end{proposition}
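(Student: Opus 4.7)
The plan is to proceed by contradiction, combining Bolzano--Weierstrass compactness with the standard subsequence-of-subsequence principle. Let $Y$ denote the common limit of all convergent subsequences of $\{Y_n\}$; this value is well defined because boundedness of $\{Y_n\}$ together with sequential compactness guarantees that at least one convergent subsequence exists. Suppose, for contradiction, that $Y_n \not\to Y$. Then negating the $\varepsilon$--$N$ definition of convergence yields an $\varepsilon_0 > 0$ and indices $n_1 < n_2 < \cdots$ such that $|Y_{n_k} - Y| \geq \varepsilon_0$ for every $k$.

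Next I would extract a convergent sub-subsequence from $\{Y_{n_k}\}$. Because $\{Y_{n_k}\}$ is bounded (as a subsequence of a bounded sequence), the same compactness argument that produced a convergent subsequence of $\{Y_n\}$ in the first place applies here: there exist indices $k_1 < k_2 < \cdots$ such that $\{Y_{n_{k_j}}\}$ converges to some limit $Y^{\ast}$. Since $\{Y_{n_{k_j}}\}$ is itself a convergent subsequence of $\{Y_n\}$, the hypothesis of the proposition forces $Y^{\ast} = Y$. The contradiction is then immediate: on the one hand $Y_{n_{k_j}} \to Y$ forces $|Y_{n_{k_j}} - Y| < \varepsilon_0$ for all sufficiently large $j$, while on the other hand $|Y_{n_{k_j}} - Y| \geq \varepsilon_0$ by construction of the bad subsequence. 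Hence $Y_n \to Y$ and the proposition is proved.

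There is essentially no obstacle in the abstract statement itself; the argument uses only sequential compactness of bounded sets and is a standard exercise. The only point requiring care in the paper's application is that \emph{bounded} is meant in the Banach space $C^{2,\alpha}(M_{\ub})$ while convergence is sought in the weaker norm $C^{2,\alpha'}(M_{\ub})$ with $\alpha' < \alpha$. Here the role of Bolzano--Weierstrass is played by the compact embedding $C^{2,\alpha}\hookrightarrow C^{2,\alpha'}$ invoked earlier in the section, which supplies exactly the compactness needed to extract the sub-subsequence in the second step and to identify the common limit as $X(\ub,\o)$, thereby upgrading the subsequential convergence $\Delta_{\tau'_n} R \to X$ to the full-sequence convergence $\Delta_{\tau_n} R \to X$ that is needed to conclude differentiability of $R$ in $\ub$.
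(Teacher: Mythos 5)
Your proof is correct and is exactly the standard argument (extract a ``bad'' subsequence staying $\varepsilon_0$ away from the common limit, use compactness to pull out a convergent sub-subsequence, and contradict the hypothesis); the paper itself states this proposition without proof, as a known fact from analysis, so there is nothing to diverge from. Your closing remark is also on point: in the paper's application the sequence $\{\D_{\tau_n}R\}$ is bounded in $C^{2,\alpha}(M_{\ub})$ and the role of Bolzano--Weierstrass is played by the compact embedding $C^{2,\alpha}\hookrightarrow C^{2,\alpha'}$, which is precisely what makes the abstract proposition applicable there.
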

We hence conclude that $\D_{\tau_n}R(\ub, \o)$ converges and
$$\lim_{n\rightarrow+\infty} \D_{\tau_n} R(\ub,\o)=X(\ub, \o) \quad \mbox{for all} \quad \tau_n\rightarrow 0.$$
As a consequence, {\color{black} not only $\partial R/ \partial \ub$ is in $C^{0,\f12}(M_{\ub})$,  but also it holds} $\partial R/ \partial \ub \in C^{2,\alpha}(M_{\ub})$. \\

Taking $\partial/\partial \ub$ on both sides of (\ref{LR=0}): $L(R(\ub, \o))=0$ , we arrive at 
\begin{equation}\label{horizon}
\begin{split}
&\D \f{\partial}{\partial \ub}R(\ub, \o)+[\f{\partial}{\partial \ub}, \D]R(\ub, \o)+2\eta_b\f{\partial}{\partial \ub}\nab^b R(\ub, \o)+2\f{\partial \eta_b}{\partial \ub}\nab^b R(\ub, \o) \\
&+\f12 \O \tr\chib \f{\partial}{\partial \ub}|\nab R(\ub, \o)|^2+\f12  \f{\partial (\O \tr\chib)}{\partial \ub}|\nab R(\ub, \o)|^2\\
&+2\O \chibh_{ab} \f{\partial}{\partial \ub}\l\nab^a R(\ub, \o) \nab^b R(\ub, \o)\r+2 \f{\partial (\O \chibh_{ab})}{\partial \ub}\nab^a R(\ub, \o) \nab^b R(\ub, \o)\\
&+4\O \omb \f{\partial}{\partial \ub}|\nab R(\ub, \o)|^2+4 \f{\partial (\O \omb)}{\partial \ub}|\nab R(\ub, \o)|^2\\
&-\f{\partial}{\partial \ub}(\f{\O^{-1}}{2} \tr\chi)\\
=&0.
\end{split}
\end{equation}
Here to get $[\f{\partial}{\partial \ub}, \D]R(\ub,\o)$, we appeal to the fact 
$$\f{\partial }{\partial \ub}=\O e_4-b^A\f{\partial}{\partial \theta^A}$$
and
\begin{equation*}
\begin{split}
[e_4, \D]R
=&-\tr\chi \D R-2\chih\cdot\nab^2 R+\beta\cdot\nab R+\f12(\eta+\etb)\cdot \nab_4 \nab R-\etb\cdot\chih\cdot \nab R\\
&+\f12 \tr\chi\etb\cdot\nab R+\div\l\f12(\eta+\etb)D_4 R\r-\div \chi\cdot \nab R.
\end{split}
\end{equation*}
By standard elliptic estimates, {\color{black}(with sufficiently smooth initial data in \cite{AL})} we have
$$\f{\partial R}{ \partial \ub} \in C^{\infty}(M_{\ub}).$$ 

Our next goal is to study 
$$\f{\f{\partial R}{\partial \ub}(\ub', \o)-\f{\partial R}{\partial \ub}(\ub, \o)}{\ub'-\ub}.$$
Through
\begin{equation*}
\begin{split}
&\D_{R(\ub,\o)} \f{\f{\partial R}{\partial \ub}(\ub', \o)-\f{\partial R}{\partial \ub}(\ub, \o)}{\ub'-\ub}\\
=& \f{\D_{R(\ub',\o)}\f{\partial R}{\partial \ub}(\ub', \o)-\D_{R(\ub,\o)}\f{\partial R}{\partial \ub}(\ub, \o)}{\ub'-\ub}-\f{\D_{R(\ub',\o)}-\D_{R(\ub,\o)}}{\ub'-\ub} \f{\partial R}{\partial \ub} (\ub', \o),
\end{split}
\end{equation*}
we derive an elliptic equation for $$\f{\f{\partial R}{\partial \ub}(\ub', \o)-\f{\partial R}{\partial \ub}(\ub, \o)}{\ub'-\ub}.$$
With the same treatment as for 
$\f{R(\ub', \o)-R(\ub, \o)}{\ub'-\ub},$
we conclude that 
$$\f{\partial^2 R}{\partial \ub^2} \quad \mbox{is well-defined, and} \quad \f{\partial^2 R}{\partial \ub^2}\in C^{\infty}(M_{\ub}).$$
Iterating this procedure, {\color{black}(with sufficiently smooth initial data in \cite{AL})}we then improve the regularity for $R(\ub, \o)$ to
\begin{equation}
\f{\partial^{k} R}{\partial \ub^k}\in C^{\infty}(M) \quad \mbox{for any} \quad k\in \mathbb{Z}^+. 
\end{equation}

\section{Dynamical Horizon}\label{section DH}
In this section, we show that {\color{black} under more restrictive initial condition} the apparent horizon constructed is spacelike. \\

Our apparent horizon is a three dimensional hypersurface: $u=1-R(\ub, \theta_1, \theta_2)$.
With the derived estimate $$|\f{\partial R(\ub, \theta_1, \theta_2)}{\partial \ub}-h(\theta_1, \theta_2)|\leq a\cdot o(1),$$
the components of the induced metric read
\begin{equation*}
g'_{\theta_i \theta_j}=g_{\theta_i \theta_j}+\f{\partial (1-R)}{\partial \theta_i}\cdot \f{\partial(1-R)}{\partial \theta_j} \cdot g(\f{\partial}{\partial u}, \f{\partial}{\partial u})=g_{\theta_i \theta_j},
\end{equation*}
\begin{equation*}
g'_{\ub\, \ub}=g_{\ub\, \ub}+2\f{\partial (1-R)}{\partial \ub}\cdot \f{\partial \ub}{\partial \ub} \cdot g(\f{\partial}{\partial u}, \f{\partial}{\partial \ub})=4\f{\partial R}{\partial \ub}=4h(\ub, \theta_1, \theta_2)\cdot[1+o(1)],
\end{equation*}
\begin{equation*}
g'_{\theta_i \ub}=g_{\theta_i \ub}+\f{\partial (1-R)}{\partial \theta_i}\cdot \f{\partial \ub}{\partial \ub}\cdot g(\f{\partial}{\partial u}, \f{\partial}{\partial \ub})=2\f{\partial R}{\partial \theta_i}.
\end{equation*}

Here we will consider a special open set of initial data. With these data, we will show that $h(\ub, \theta_1, \theta_2){\color{black}=[\f12+o(1)]a}$, which will imply that the apparent horizon constructed is spacelike. 

\subsection{Construction of Special Initial Data}\label{Construction of Special Initial Data} {\color{black}  In previous sections, to ensure the existence and smoothness of apparent horizon, we need initial data along $H_0^{[0,\d]}$ satisfying
\begin{equation*}
\int_0^{\ub}|\chih|^2(0,\ub',\o)d\ub'=f(\ub,\o)\ub a, \quad \mbox{with} \quad \f{20}{21}\leq f(\ub, \o)\leq \f{22}{21}.
\end{equation*}

To prove that the apparent horizon is spacelike, we use more restrictive initial condition:
\begin{enumerate}
\item we require initial data along $H_0^{[0,\d]}$ satisfying
$$\int_0^{\ub}|\chih|^2(0,\ub',\o)d\ub'=f(\ub,\o)\ub a, \,\, \mbox{with} \,\, 1-
\f{1}{c}\leq f(\ub, \o)\leq 1+\f{1}{c}, \,\, \mbox{where} \,\, 1\ll c \ll b \leq \at.$$
{\color{black}\noindent Note: by Remark \ref{Remark B.2} in Appendix \ref{Appendix B}, these initial data could be obtained by rescaling initial data satisfying
\begin{equation}\label{special initial data 1}
\int_0^{\d}|\chih|^2(0,\ub',\o)d\ub'=f(\d,\o)\d a, \,\, \mbox{with} \,\, 1-
\f{1}{c}\leq f(\d, \o)\leq 1+\f{1}{c}, \,\, \mbox{where} \,\, 1\ll c \ll b \leq \at.
\end{equation}
}

\item Moreover, we require that $|\chih_0|^2(\ub, \o)$ is {\color{black}close to} a constant out of {\color{black}two small discs $\{D^1_{\ub}\subset \mathbb{S}^2\}$ and $\{D^2_{\ub}\subset \mathbb{S}^2\}$}
: \footnote{Here $D^1_{\ub}$ and $D^2_{\ub}$ could change for different $\ub$.}
\begin{equation}\label{special initial data 2}
\begin{split}
&0\leq |\chih_0|^2(\ub,\o)\leq a, \quad \mbox{for} \quad \o\in D^1_{\ub}\cup D^2_{\ub}, \\
&|\chih_0|^2(\ub, \o)={\color{black}[1+o(1)]\cdot} a, \quad \mbox{for} \quad \o\in\mathbb{S}^2\setminus (D^1_{\ub}\cup D^2_{\ub}), \,\,\mbox{with} \quad |o(1)|\leq 1/c,\\
&\iint_{D^1_{\ub}\cup D^2_{\ub}}1\cdot d\o\lesssim \f{1}{c^2}, \quad \mbox{where} \,\, 1\ll c \ll b \leq \at.
\end{split}
\end{equation}
{\color{black}\noindent Note: in Appendix \ref{Appendix C}, we will construct examples of initial data satisfying both (\ref{special initial data 1}) and (\ref{special initial data 2}) at the same time.}

\end{enumerate}

}

\noindent Using the following equation in \cite{AL} 
$$\nab_3\chih+\f12\tr\chib \chih=\nab \hat{\otimes}\eta+2\omb\chih-\f12\tr\chi\chibh+\eta\hat{\otimes}\eta,$$
together with the estimates derived there, we have
$$|u|^2|\chih|^2(u,\ub,\o)\approx|\chih|^2(0, \ub, \o)=|\chih|^2(0, \ub,\o)\cdot [1+o(1)].$$
Back to (\ref{tilde nu}), we hence get 
{\color{black}
\begin{equation*}
\begin{split}
&-\tilde{\nu}(\ub, \o; \ub')\f{a}{\ub^2 a^2}\\
=&-\f12\int_0^1 [1+o(1)]\cdot |\chih|^2 (1-\tau R(\ub',\o)-(1-\tau)R(\ub,\o), \tau \ub'+(1-\tau)\ub, \o) d\tau\\ 
=&-\f12\int_0^1 [1+o(1)]\cdot \f{|\chih_0|^2 (\tau \ub'+(1-\tau)\ub, \o)}{\l\tau R(\ub',\o)+(1-\tau)R(\ub,\o)\r^2} d\tau.
\end{split}
\end{equation*}  
When $\ub'$ is very close to $\ub$, we have
$$\f{[\tau R(\ub',\o)+(1-\tau)R(\ub,\o)]^2}{1+o(1)}=[\f14+o(1)]\ub^2 a^2.$$
This implies
$$\tilde{\nu}(\ub, \o; \ub')=\f{2}{a}\int_0^1 [1+o(1)]\cdot |\chih_0|^2 (\tau \ub'+(1-\tau)\ub, \o) d\tau.$$
Denote
$$\bar{\tilde{\nu}}(\ub, \o; \ub'):=\f{1}{|\mathbb{S}^2|} \iint_{\mathbb{S}^2}\tilde{\nu}(\ub, \o; \ub') \, d\o.$$
With the help of Fubini's theorem, we then have 
\begin{equation*}
\begin{split}
\bar{\tilde{\nu}}(\ub, \o; \ub'):=&\f{1}{|\mathbb{S}^2|}\f{2}{a}\iint_{\mathbb{S}^2}\bigg(\int_0^1 [1+o(1)]\cdot |\chih_0|^2 (\tau \ub'+(1-\tau)\ub, \o) d\tau\bigg)d\o\\
=&\f{1}{|\mathbb{S}^2|}\f{2}{a}\int^1_0\bigg(\iint_{\mathbb{S}^2} [1+o(1)]\cdot |\chih_0|^2 (\tau \ub'+(1-\tau)\ub, \o) d\o\bigg)d\tau\\
=&\f{1}{|\mathbb{S}^2|}\f{2}{a}\cdot |\mathbb{S}^2| \cdot[1+o(1)]\cdot a.
\end{split}
\end{equation*} 
For the last step, we use that on any $D_{\tau \ub'+(1-\tau)\ub}$, it holds $|\chih_0|^2 (\tau \ub'+(1-\tau)\ub, \o)= [1+o(1)]a$ for almost all the points $\{\o\}$ on $\mathbb{S}^2$ except a small portion (with area $\lesssim \f{1}{c^2}$) of $\mathbb{S}^2$. For sufficient large $c$, we hence have
$$\bar{\tilde{\nu}}(\ub, \o; \ub')=2+o(1).$$
}
Back to (\ref{new h})
\begin{equation}\label{new new h}
\D_{R(\ub,\o)}h(\ub, \o; \ub')-\f{\nu(\ub, \o; \ub')}{\ub^2 a^2}h(\ub, \o; \ub')+\tilde{\nu}(\ub, \o; \ub')\f{a}{\ub^2 a^2}=0.
\end{equation}
With (\ref{new new nu}), we have
\begin{equation*}
\begin{split}
&\f{\nu(\ub, \o; \ub')}{\ub^2 a^2}\\
=&\int_0^1 \l \f{-\tau R(\ub', \o)-(1-\tau)R(\ub, \o)+[\tau\ub'+(1-\tau)\ub] a f(\ub, \o)+[\tau\ub'+(1-\tau)\ub]\at c_3}{\l\tau R(\ub',\o)+(1-\tau)R(\ub,\o)\r^3} \r d\tau.\\
\end{split}
\end{equation*} 
For $\ub'$ sufficiently close to $\ub$ and for sufficiently large $c$, we have
$$R(\ub,\o)=\f{\ub a}{2}\cdot[1+o(1)], \quad R(\ub',\o)=\f{\ub' a}{2}\cdot[1+o(1)], \quad f(\ub, \o)=1+o(1),$$
$$\bar{\tilde{\nu}}(\ub, \o; \ub')=2+o(1), \quad \nu(\ub,\o;\ub')=4+o(1).$$ 
Integrate (\ref{new new h}) over $M_{\ub}$. We then have
$$\iint_{M_{\ub}}\f{\nu(\ub, \o; \ub')}{\ub^2 a^2}h(\ub, \o; \ub')=\iint_{M_{\ub}}\tilde{\nu}(\ub, \o; \ub')\f{a}{\ub^2 a^2}.$$ 
This implies
$$\bar{h}(\ub; \ub'):=\f{1}{|M_{\ub}|}\iint_{M_{\ub}}h(\ub, \o; \ub')\, d\o=[\f12+o(1)]a.$$
From (\ref{new new h}), we have
\begin{equation}\label{new h bar}
\begin{split}
\D_{R(\ub,\o)}[h(\ub, \o; \ub')-\bar{h}(\ub; \ub')]&-\f{\nu(\ub, \o; \ub')}{\ub^2 a^2}[h(\ub, \o; \ub')-\bar{h}(\ub; \ub')]\\
&+\tilde{\nu}(\ub, \o; \ub')\f{a}{\ub^2 a^2}-\f{\nu(\ub, \o; \ub')}{\ub^2 a^2}\cdot \bar{h}(\ub;\ub')=0.
\end{split}
\end{equation}
Notice that
\begin{equation*}
\begin{split}
&\iint_{M_{\ub}}|\tilde{\nu}(\ub, \o; \ub')\f{a}{\ub^2 a^2}-\f{\nu(\ub, \o; \ub')}{\ub^2 a^2}\cdot \bar{h}(\ub;\ub')|\\
=&\iint_{M_{\ub}}|\bar{\tilde{\nu}}(\ub, \o; \ub')\f{a}{\ub^2 a^2}-\f{\nu(\ub, \o; \ub')}{\ub^2 a^2}\cdot \bar{h}(\ub;\ub')|+o(1)\cdot a\\
=&o(1)\cdot a+o(1)\cdot a\\
=&o(1)\cdot a.
\end{split}
\end{equation*}
Together with Sobolev embedding and $W^{2,2}$ elliptic estimates for (\ref{new h bar}), we have
\begin{equation*}
\begin{split}
&\|h(\ub,\o;\ub')-\bar{h}(\ub; \ub')\|_{L^{\infty}(M_{\ub})}\\
\leq& \|h(\ub,\o;\ub')-\bar{h}(\ub; \ub')\|_{W^{2,2}(M_{\ub})}\\
\leq& \|\tilde{\nu}(\ub, \o; \ub')\f{a}{\ub^2 a^2}-\f{\nu(\ub, \o; \ub')}{\ub^2 a^2}\cdot \bar{h}(\ub;\ub')\|_{L^{2}(M_{\ub})}\cdot \ub a\\
\leq&\|\tilde{\nu}(\ub, \o; \ub')\f{a}{\ub^2 a^2}-\f{\nu(\ub, \o; \ub')}{\ub^2 a^2}\cdot \bar{h}(\ub;\ub')\|^{\f12}_{L^{\infty}(M_{\ub})}\cdot\|\tilde{\nu}(\ub, \o; \ub')\f{a}{\ub^2 a^2}-\f{\nu(\ub, \o; \ub')}{\ub^2 a^2}\cdot \bar{h}(\ub;\ub')\|^{\f12}_{L^{1}(M_{\ub})}\cdot \ub a\\
\leq& o(1)\cdot a.
\end{split}
\end{equation*}
Since $\bar{h}(\ub;\ub')=[\f12+o(1)]a$, we thus have
$$h(\ub,\o;\ub')=[\f12+o(1)]a.$$
Following the procedure in the previous section, we conclude that 
\begin{equation}\label{key DH}
h(\ub, \theta_1, \theta_2)=[\f12+o(1)]a.
\end{equation}

\subsection{Spacelike Apparent Horizon}
Our parameter $a$ is a fixed large positive constant. Hence the tangent vectors $\f{\partial}{\partial \theta_1}, \f{\partial}{\partial \theta_2}, \f{\partial}{\partial \ub}$ are all spacelike. 
Let $\lambda_1, \lambda_2, \lambda_3$ be any real numbers with $\lambda_1^2+\lambda_2^2+\lambda_3^3\neq 0$. Recall $h(\ub, \theta_1, \theta_2)=[1/2+o(1)]a$. Then
\begin{equation*}
\begin{split}
&g'(\lambda_1 \f{\partial}{\partial \theta_1}+\lambda_2 \f{\partial}{\partial \theta_2}+\lambda_3 \f{\partial}{\partial \ub},\lambda_1 \f{\partial}{\partial \theta_1}+\lambda_2 \f{\partial}{\partial \theta_2}+\lambda_3 \f{\partial}{\partial \ub})\\
=&\lambda_1^2 \cdot g_{\theta_1 \theta_1}+\lambda_2^2 \cdot g_{\theta_2 \theta_2}+4\lambda_1 \lambda_3 \f{\partial R}{\partial \theta_1}+4\lambda_2 \lambda_3 \f{\partial R}{\partial \theta_2}+\lambda_3^2 \cdot h(\ub, \theta_1, \theta_2)\cdot [1+o(1)]\\
>& 0.
\end{split}
\end{equation*}
Therefore, our apparent horizon formed is spacelike. And according to the definitions given by Ashtekar and Krishnan in \cite{AK03, AK} and by Ashtekar and Galloway in \cite{AG}, it is a dynamical horizon.  

\section{The Area Law}\label{entropy}
In this section, we study the area of MOTS.
\begin{proposition}
For the MOTS $M_{\ub}$ constructed, we have
\begin{equation}
\lim_{\ub\rightarrow 0}\mbox{Area}(M_{\ub})=0.
\end{equation}
\end{proposition}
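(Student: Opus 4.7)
The plan is to reduce the area of $M_{\ub}$ to an area computation on the underlying sphere $S_{1-R(\ub,\o),\ub}$ of the double null foliation, and then to combine the $C^0$ apriori estimate $R(\ub,\o)=\frac{\ub a}{2}[1+o(1)]$ with the pointwise control on the induced metric from \cite{AL}. First, recall the remark preceding equation (1.12): because $g(\partial/\partial u,\partial/\partial u)=0$ on a double null foliation, the induced metric on the graph $M_{\ub}=\{(1-R(\ub,\o),\ub,\o):\o\in\mathbb{S}^2\}$ satisfies $g'_{M_{\ub}}(\theta^1,\theta^2)=g_{S_{1-R(\ub,\o),\ub}}(\theta^1,\theta^2)$. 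Consequently
\begin{equation*}
\mathrm{Area}(M_{\ub})=\iint_{\mathbb{S}^2}\sqrt{\det g(1-R(\ub,\o),\ub,\o)}\;d\theta^1 d\theta^2.
\end{equation*}

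Next, I would invoke the pointwise geometric bounds derived in \cite{AL} (and used throughout Sections 4 and 8 of the paper): on each sphere $S_{u,\ub}$ of the double null foliation, the induced area element is comparable to that of the round sphere of radius $|1-u|$, with multiplicative error $1+o(1)$ as $a$ is taken large. Combined with the $C^0$ estimate of Section 4, $R(\ub,\o)\leq \tfrac{11}{21}(1+o(1))\,\ub a$, this gives the pointwise bound
\begin{equation*}
\sqrt{\det g(1-R(\ub,\o),\ub,\o)}\;\lesssim\;R(\ub,\o)^{2}\;\lesssim\;\ub^{2}a^{2}.
\end{equation*}

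Integrating over $\mathbb{S}^2$ (whose standard area is finite and independent of $\ub$) yields
\begin{equation*}
\mathrm{Area}(M_{\ub})\;\lesssim\;\ub^{2}a^{2}\;\longrightarrow\;0\qquad\text{as }\ub\to 0,
\end{equation*}
since $a$ is a fixed large universal constant. I do not foresee a substantive obstacle here: the work has already been done upstream, both in constructing $R(\ub,\o)$ and in the scale-critical estimates of \cite{AL} that pin down the metric on nearby 2-spheres. The only care required is to confirm that the $o(1)$ errors in the metric comparison on $S_{1-R(\ub,\o),\ub}$ remain uniform in $\o\in\mathbb{S}^2$ as $\ub\to 0$, which is immediate from the uniform apriori bounds on $R$ and its angular derivatives established in Sections 4 and 9.
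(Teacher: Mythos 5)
Your proposal is correct and follows essentially the same route as the paper: identify the induced metric on $M_{\ub}$ with that of the ambient spheres $S_{1-R(\ub,\o),\ub}$ (using $g(\partial_u,\partial_u)=0$), compare the area element with that of a round sphere of radius $\approx R(\ub,\o)$, and conclude via the $C^0$ bound $R\approx \tfrac12\ub a$ that $\mbox{Area}(M_{\ub})\lesssim \ub^2 a^2\to 0$. The only cosmetic difference is that the paper derives the area-element comparison explicitly from the first variation formula $\partial_{\ub}\sqrt{\det g}=\O\,\tr\chi\,\sqrt{\det g}$ integrated from the Minkowskian cone $\Hb_0$, while you cite the equivalent pointwise metric bounds from \cite{AL}.
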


\begin{proof}
On each $M_{\ub}$ we have induced metric
\begin{equation*}
g'_{\theta_i \theta_j}=g_{\theta_i \theta_j}+\f{\partial (1-R)}{\partial \theta_i}\cdot \f{\partial(1-R)}{\partial \theta_j} \cdot g(\f{\partial}{\partial u}, \f{\partial}{\partial u})=g_{\theta_i \theta_j}.
\end{equation*}
Thus, along $\Hb_{\ub}$
$$\sqrt{\det g'}(u,\ub, \theta_1, \theta_2)=\sqrt{\det g}(u, \ub, \theta_1, \theta_2).$$
The first variation formula states
$$\f{\partial}{\partial \ub}g_{AB}=2\O\chi_{AB},$$
which implies
$$\f{\partial}{\partial \ub} \det g=\det g \cdot g^{AB}\cdot 2\O \chi_{AB},$$
and
$$\f{\partial}{\partial \ub} \sqrt{\det g}=\sqrt{\det g}\cdot \O \tr\chi.$$
Hence, we have
$$\f{\partial}{\partial \ub} \log \f{\sqrt{\det g} (u, \ub, \theta_1, \theta_2)}{\sqrt{\det g} (u, 0, \theta_1, \theta_2)}=\O \tr\chi (u, \ub, \theta_1, \theta_2).$$
This gives
$$|\f{\sqrt{\det g} (u, \ub, \theta_1, \theta_2)}{\sqrt{\det g} (u, 0, \theta_1, \theta_2)}-1|\leq \f{\ub \at b^{\f14}}{|u|}\ll 1,$$
and
$$\f12 \sqrt{\det g} (u, 0, \theta_1, \theta_2)\leq \sqrt{\det g} (u, \ub, \theta_1, \theta_2)\leq \f32 \sqrt{\det g} (u, 0, \theta_1, \theta_2).$$
Therefore, we conclude
\begin{equation*}
\begin{split}
\mbox{Area}(M_{\ub})=&\iint_{\mathbb{S}^2}\sqrt{\det g} (u, \ub, \theta_1, \theta) d\theta_1 d\theta_2\\
\leq& \f32\iint_{\mathbb{S}^2}\sqrt{\det g} (u, 0, \theta_1, \theta) d\theta_1 d\theta_2\\
=& \f32\cdot 4\pi\cdot |1-u|^2\\
\leq& \f32\cdot 4\pi \cdot \f{25}{64}\ub^2 a^2.
\end{split}
\end{equation*}
For the last inequality, we use along the apparent horizon $u=1-R(\ub, \theta_1, \theta_2)$ and a priori estimate 
$$\f38 \ub a\leq R(\ub, \theta_1, \theta_2) \leq \f58 \ub a.$$
Similarly, 
\begin{equation*}
\begin{split}
\mbox{Area}(M_{\ub})=&\iint_{\mathbb{S}^2}\sqrt{\det g} (u, \ub, \theta_1, \theta) d\theta_1 d\theta_2\\
\geq& \f12\iint_{\mathbb{S}^2}\sqrt{\det g} (u, 0, \theta_1, \theta) d\theta_1 d\theta_2\\
=& \f12\cdot 4\pi\cdot |1-u|^2\\
\geq& \f12\cdot 4\pi \cdot \f{9}{64}\ub^2 a^2.
\end{split}
\end{equation*}
In summary, we establish
$$\f12\cdot 4\pi \cdot \f{9}{64}\ub^2 a^2\leq \mbox{Area}(M_{\ub})\leq \f32\cdot 4\pi \cdot \f{25}{64}\ub^2 a^2, \quad \mbox{and} \quad \lim_{\ub\rightarrow 0}\mbox{Area}(M_{\ub})=0.$$
\end{proof}
We now turn to prove
\begin{proposition}
Given two different MOTSs $M_{\ub'}$ and $M_{\ub}$, we have
\begin{equation}\label{change of area 3}
\mbox{Area}(M_{\ub'})>\mbox{Area}(M_{\ub}) \quad \mbox{for} \quad \ub'>\ub.
\end{equation}
\end{proposition}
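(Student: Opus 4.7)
The plan is to prove $A'(\ub) > 0$ for all $\ub \in (0,\delta]$, where $A(\ub) := \mbox{Area}(M_\ub)$; the strict monotonicity then follows by integration, using $R \in C^{\infty}(\ub,\o)$ from Section 9. The most efficient route is the first variation of area for the codimension-two spacelike surface $M_\ub$ in spacetime, applied to the deformation vector $X := \partial_\ub\big|_{\mathcal{H}}$ tangent to the apparent horizon $\mathcal{H}$ (smooth and spacelike by Sections 9--10). Decomposing $X = X^{\top} + X^{\perp}$ with $X^{\perp} = \alpha\, e_4' + \beta\, e_3'$ in the normal bundle of $M_\ub$ spanned by the deformed null frame (\ref{nfr}) of Proposition~\ref{newdeformation}, the tangential divergence contribution integrates to zero on the closed surface $M_\ub$, leaving
\begin{equation*}
A'(\ub) \;=\; \int_{M_\ub}\bigl(\alpha\,\tr\chi' \;+\; \beta\,\tr\chib'\bigr)\,dA_{M_\ub}.
\end{equation*}

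Since $M_\ub$ is a MOTS, $\tr\chi' \equiv 0$, so only $\beta\,\tr\chib'$ contributes. Starting from
$$\partial_\ub\big|_{\mathcal{H}} \;=\; -\partial_\ub R\cdot\partial_u + \partial_\ub \;=\; -\Omega\,\partial_\ub R \cdot e_3 + \Omega\, e_4 - d^A\,\partial_A$$
and substituting $e_3 = e_3'$, $e_4 = e_4' + 2\Omega\,e^A(R)\,e_A - \Omega^2|\nabla R|^2 e_3$ from (\ref{nfr}), a direct computation yields $\alpha = \Omega > 0$ and
$$\beta \;=\; -\Omega\bigl(\partial_\ub R + \Omega^2|\nabla R|^2\bigr) \;+\; (\mbox{terms involving } d^A\,\partial_A R).$$
By Section~\ref{section DH}, $\partial_\ub R = h(\ub,\o) = [\tfrac12 + o(1)]\,a > 0$, and the $C^1$ a priori estimate from Section~\ref{C1 Estimate} gives $|\nabla R| \ll 1$, so $\beta = -[\tfrac12 + o(1)]\,a < 0$. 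A parallel short computation using that $D_{e_3}e_3$ has no tangential component (geodesic normalization, as in the derivation of $D_{e_3}e_3 = -2\omb e_3$) yields $\chib'_{AB} = \chib_{AB}$ on $M_\ub$, hence $\tr\chib' = \tr\chib = -2/R + \mbox{l.o.t.} < 0$ by the bounds of \cite{AL}. Combining, $\beta\,\tr\chib' > 0$ pointwise with leading order $\approx 2/\ub$, and therefore $A'(\ub) = [2/\ub + o(1/\ub)]\,A(\ub) > 0$.

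The main technical obstacle is the careful bookkeeping in computing $\beta$: one must track how the shift term $-d^A\partial_A$ and the gradient-quadratic corrections enter the decomposition of $\partial_\ub\big|_{\mathcal{H}}$ in the deformed frame, and verify that these are strictly subleading to the dominant piece $-\Omega\,\partial_\ub R$. The a priori bounds of Section 4 together with the pointwise estimates $\Omega = 1 + o(1)$, $R = [\tfrac12 + o(1)]\ub a$ from \cite{AL} ensure this. As a cross-check, one may bypass the geometric formulation and compute directly: differentiating $A(\ub) = \iint\sqrt{\det\gamma}(1-R(\ub,\o),\ub,\o)\,d\o_1 d\o_2$ under the integral sign, invoking $\partial_u\sqrt{\det\gamma} = \Omega\,\tr\chib\,\sqrt{\det\gamma}$ together with the MOTS-derived substitution $\tr\chi = 2\Omega\,\Delta'_M R + O(|\nabla R|)$ from Proposition~\ref{newdeformation}, and using $\iint_{M_\ub}\Delta'_M R\,dA_{M_\ub} = 0$ (divergence theorem on a closed surface) to dispose of the $\partial_\ub\sqrt{\det\gamma}$ contribution; the surviving term $-(\partial_\ub R)\,\Omega\,\tr\chib$ is again the same positive quantity of order $2/\ub$ per unit area.
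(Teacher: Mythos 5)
Your proposal is correct and follows essentially the same route as the paper: both compute the first variation of area along the horizon foliation, observe that the $e_4'$-component contributes nothing because $\tr\chi'=0$ on a MOTS, and conclude positivity from $\tr\chib'=\tr\chib<0$ together with a strictly negative $e_3'$-coefficient — the paper extracts that sign abstractly from spacelikeness of the horizon (writing the unit normal within AH as $\frac12 l_4 e_4'-\frac12 l_3 e_3'$ with $l_3,l_4>0$), while you compute it explicitly from $\partial_{\ub}R=[\frac12+o(1)]a>0$, which is the same underlying input. (Minor note: in your expression for $\beta$ the subleading term should be $+\Omega^3|\nab R|^2$ rather than $-\Omega^3|\nab R|^2$, but since $|\nab R|\ll 1\ll \partial_{\ub}R$ this does not affect the sign or the conclusion.)
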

\begin{proof}
The apparent horizon AH: $u=1-R(\ub, \theta_1, \theta_2)$ is foliated by $\{M_{\ub}\}$. Let $g'_{\mu\nu}$ be the induced metric on AH and denote $h_{\mu\nu}$ to be the induced metric on $M_{\ub}$. 
Since AH is spacelike, there exists a vector field $r^{\mu}$ satisfies $h_{\mu\nu}r^{\mu}=0$ ($r^{\mu}$ orthogonal to each $M_{\ub}$) and $g'_{\mu\nu}r^{\mu}r^{\nu}=1$. 

Define a function $\lambda: \mbox{AH} \rightarrow \mathbb{R}$, such that each $M_{\ub}$ is a level set of $\lambda$. Assume $C(s)$ is an integral curve of $r^{\mu}$ and $s$ is the affine parameter. We have
$$\l\f{\partial}{\partial \lambda}\r^{\mu}=\l\f{\partial}{\partial s}\r^{\mu}\cdot\f{d s}{d \lambda}=wr^{\mu}, \quad \quad \mbox{where} \quad \quad w:=\f{ds}{d\lambda} \quad \mbox{and} \quad w>0.$$
Denote $A(\lambda)$ to be the area of each level set of $\lambda$: $A(\lambda):=\iint_{\mathbb{S}^2}\sqrt{\det h}(\lambda, \theta_1, \theta_2)d\theta_1 d\theta_2$. We calculate
$$\f{d A(\lambda)}{d \lambda}=\f12\iint_{\mathbb{S}^2}\f{1}{\sqrt{\mbox{det }h}}\f{\partial (\mbox{det} h)}{\partial \lambda} d\theta_1 d\theta_2=\f12 \iint_{\mathbb{S}^2} \sqrt{\mbox{det} h}\cdot h^{\mu\nu}\cdot \f{\partial h_{\mu\nu}}{\partial \lambda}.$$
To proceed, we combine the facts
$$\f{\partial h_{\mu\nu}}{\partial \lambda}=\mathcal{L}_{\f{\partial}{\partial \lambda}} h_{\mu\nu}=\mathcal{L}_{wr}h_{\mu\nu},$$
and 
$$\mathcal{L}_{wr}h_{\mu\nu}=w\mathcal{L}_r h_{\mu\nu}+r^{\gamma}h_{\gamma \mu}D_{\nu} w+r^{\gamma}h_{\gamma \nu}D_{\mu}w=w\mathcal{L}_r h_{\mu\nu}.$$
For the last equality, we use $h_{\mu\nu}r^{\mu}=0$. Therefore, we arrive at 
\begin{equation}\label{change of area}
\f{d A(\lambda)}{d \lambda}=\f12 \iint_{\mathbb{S}^2} w\cdot \sqrt{\mbox{det} h}\cdot h^{\mu\nu}\cdot \mathcal{L}_r h_{\mu\nu}.
\end{equation}
Recall AH is spacelike, at each point $P$ on AH there exists a vector field $t^{\mu}$ orthogonal to the tangent plane of AH at P and satisfying $g_{\mu\nu}t^{\mu}t^{\nu}=-1$. Together with $g_{\mu\nu}r^{\mu}r^{\nu}=g'_{\mu\nu}r^{\mu}r^{\nu}=1$, we have that $t^{\mu}+r^{\mu}$ is an outgoing null vector and $t^{\mu}-r^{\mu}$ is an incoming null vector. Therefore, there exist positive functions $l_3$ and $l_4$ satisfying 
$$t^{\mu}+r^{\mu}=l_4 \cdot e'_4, \quad \quad \quad t^{\mu}-r^{\mu}=l_3 \cdot e'_3.$$ 
Hence
$$r^{\mu}=\f12 \cdot l_4 \cdot e'_4-\f12 \cdot l_3 \cdot e'_3.$$
As a consequence, it follows 
$$h^{\mu\nu}\cdot \mathcal{L}_r h_{\mu\nu}=\f12 \cdot l_4 \cdot \tr\chi'-\f12 \cdot l_3 \cdot \tr\chib'=-\f12\cdot l_3 \cdot \tr\chib.$$
In the last equality, we utilize $\tr\chib'=\tr\chib$ and along AH $\tr\chi'=0$. Combining (\ref{change of area}) and $\tr\chib<0$\footnote{See \cite{AL}.}, $w>0$, $l_3>0$, we conclude
\begin{equation*}
\f{d A(\lambda)}{d \lambda}=\f12 \iint_{\mathbb{S}^2} w\cdot\sqrt{\mbox{det} h}\cdot -\f12 \cdot l_3 \cdot \tr\chib>0.
\end{equation*}
\end{proof}

Inspired by the works \cite{AK03, AK} of Ashtekar and Krishnan, we define the entropy of each $M_{\ub}$ to be its area. The area law (\ref{change of area 3}) shows that the entropy of $M_{\ub}$ grows as $\ub$ increases. This is corresponding to the second law of black hole mechanics along our apparent horizon.

\section{Towards Isolated Horizon}
In this section, we further prove Theorem \ref{thm1.9}.  

\begin{minipage}[!t]{0.4\textwidth}
\begin{tikzpicture}[scale=0.9]
\draw [white](3,-1)-- node[midway, sloped, below,black]{$H_0(u=0)$}(5,1);
\draw [white](2,2)--node [midway,sloped,above,black] {$\Hb_{\delta}(\ub=\delta)$}(4,0);
\draw [white](1,1)--node [midway,sloped, below,black] {$\Hb_{0}(\ub=0)$}(3,-1);
\draw [dashed] (0, 4)--(0, -4);
\draw [dashed] (0, -4)--(4,0)--(0,4);
\draw [dashed] (0,0)--(2,2);
\draw [dashed] (0,1)--(1.5,2.5);
\draw [dashed] (0,-4)--(2,-2);
\draw [dashed] (0,2)--(3,-1);
\draw [thick] (1,1)--(3,-1)--(4,0)--(2,2)--(1,1);
\draw [thick] (1,1)--(0.5,1.5)--(1.5,2.5)--(2,2)--(1,1);
\fill[yellow!70!red] (1,1)--(3,-1)--(4,0)--(2,2)--(1,1);
\fill[yellow!30!red](1,1)--(0.5,1.5)--(1.5,2.5)--(2,2)--(1,1);
\fill[red!40!](1, 2)--(0,2)--(0.5, 1.5)--(1, 2);
\draw [yellow!70!red](1,0.9)-- node[midway,sloped,above,black]{$H_{1-\delta a}$}(2,1.9);
\draw [dashed] (0.875,1.875)--(2,2);
\draw[thick] (0,2)--(0.5, 1.5);
\draw [dashed] (0.875, 1.875)--(0,2);
\draw[thick] (1.5, 2.5)--(2.5, 3.5);
\draw[thick] (2.5, 3.5)--(3,3);
\draw[thick] (2,2)--(3,3);
\draw[thick] (3,3)--(5,1);
\draw[thick] (5,1)--(4,0);
\fill[yellow!30!red](1.5,2.5)--(2.5,3.5)--(3,3)--(2,2)--(1.5,2.5);
\fill[yellow!70!red] (2,2)--(3,3)--(5,1)--(4,0)--(2,2);
\draw[dashed] (2,2)--(3.02, 2.98);
\draw [thin](2,2)--node [midway,sloped,above,black] {$\Hb_{\delta}(\ub=\delta)$}(4,0);
\draw [thin](3,3)--node [midway,sloped,above,black] {$\Hb_{2\delta}(\ub=2\delta)$}(5,1);
\draw [thin](0.6,1.6)-- node[midway,sloped,above,black]{$H_{1-b \delta \at}$}(1.5,2.5);
\end{tikzpicture}
\end{minipage}
\begin{minipage}[!t]{0.5\textwidth}

For $\d\leq \ub \leq 2\d$, with initial data satisfying
\begin{equation}\label{late input}
\int_{\d}^{\ub}|\chih_0|^2 (\ub', \o)d\ub'=0,
\end{equation}
we will see that there exists a unique MOTS along each $\Hb_{\ub}$ and all the MOTS $\{M_{\ub}\}_{\d\leq \ub \leq 2\d}$ together constitute a smooth 3-dimensional apparent horizon.\\

Note that, intuitively, this new piece of apparent horizon {\color{black} will be tilted toward outgoing null direction, and eventually it should be close to an isolated horizon. }For the precise definition of isolated horizon, interested readers are refereed to \cite{AK}.\\

\end{minipage}

To construct MOTS along each $\Hb_{\ub}$, we start from deriving estimates.

\subsection{A Priori Estimates}
For $\d\leq \ub \leq 2\d$, we denote
$$M_0(\ub, \o):=\int_0^{\ub} |\chih_0|^2 (0, \ub', \o)d\ub' \quad \mbox{and} \quad M_{0}^*(\o):=\int_0^{\d} |\chih_0|^2 (0, \ub', \o)d\ub'.$$
The condition (\ref{late input}) on initial data implies
$$M_0(\ub, \o)=M_0^*(\o) \quad \mbox{for} \quad \d \leq \ub \leq 2\d.$$
Note $u=1-R(\ub, \o)$. From the estimates in \cite{AL}, for $\d\leq \ub \leq 2\d$ we still have
$$\tr\chi=\f{2}{R(\ub, \o)}-\f{M_0^*(\o)}{R(\ub, \o)^2}+l.o.t., \quad \tr\chib=-\f{2}{R(\ub, \o)}+l.o.t., \quad \O=1+l.o.t..$$
And $\eta_b$, $\omb$ behave as lower order terms. Equation of MOTS (\ref{1.12}) is transferred to
$$\D'_M R(\ub, \o)-\f{1}{R(\ub, \o)}|\nab R(\ub, \o)|^2-\f{1}{R(\ub, \o)}+\f{M_0^*(\o)}{2R(\ub, \o)^2}+l.o.t.=0.$$
Recall
$$M_0^*(\o)=\delta a f(\delta, \o), \quad \mbox{with} \quad \f{20}{21}\leq f(\delta, \o) \leq \f{22}{21}.$$
With maximal principle, we derive $C^0$ estimates for $R(\ub, \o)$ 
\begin{equation}\label{C1 later}
\f{10}{21}[1+o(1)]\delta a \leq R(\ub, \o)\leq \f{11}{21}[1+o(1)]\delta a.
\end{equation}
For $C^1$ (gradient) estimates, we use Bochner's formula and by exploring the structures of equation (\ref{1.12}). Here we construct
$$h \big( R(\ub, \o) \big)=1+\f{8}{\delta^2 a^2}\l R(\ub, \o)-\f{\delta a}{2}\r^2.$$
With Bochner's formula and (\ref{1.12}), we then calculate $\D'_M\l h\big(R(\ub, \o)\big)|\nab R(\ub, \o)|^2\r$. With the same method as in Section \ref{C1 Estimate}, we conclude
\begin{equation*}
|\nab R (\ub, \o)|\ll1 \quad \mbox{for all} \quad \o\in S^2. 
\end{equation*}
By standard elliptic estimates, we further obtain
$$\|\nab^2 R(\ub, \o)\|_{L^p(M)}\lesssim (\delta a)^{-1+\f{2}{p}}, \quad \quad \|R(\ub, \o)\|_{C^{1,q}(M)}\lesssim (\delta a)^{-q}, \quad \quad \|R(\ub, \o)\|_{C^{2,q}(M)}\lesssim (\delta a)^{-1-q}.$$

\subsection{Existence of MOTS}
With a priori estimates, to solve (\ref{1.12}) we also employ the method of continuity.  The same argument as in Section \ref{Continuity1} and Section \ref{Continuity2} also works. The only modifications are 
\begin{enumerate}
\item when using $M_0(\ub, \o)$, we have $M_0(\ub, \o)=M_0^*(\o)=\d a f(\d, \o);$
\item  when showing the linearized operators are invertible, derived estimate (\ref{C1 later}) is used. 
\end{enumerate}

\subsection{Uniqueness of MOTS}

On a fixed $\Hb_{\ub}$, assume we have two solutions $\tR(\o)$ and $R(\o)$ satisfying $\tr\chi'=0$.  We then derive an elliptic equation for $\tR(\o)-R(\o)$. Together with a priori estimates and bounds derived in \cite{AL}, by a similar argument as in Section \ref{uniqueness} we have
\begin{equation*}
\begin{split}
&\D_{1-R(\o),\ub}\l \tR(\o)-R(\o)\r-\nu(\o)\l \tR(\o)-R(\o)\r\\
&+\f{1}{\delta^2 a^2}\cdot \l \tR(\o)-R(\o) \r \cdot o(1)\\
&+\f{1}{\delta^2 a^2}\cdot \f{\partial}{\partial \theta_i} \l \tR(\o)-R(\o)\r\cdot o(1)\\
=0,
\end{split}
\end{equation*}
with
$$\nu(\o)\geq \f{64}{81\delta^2 a^2}.$$
By maximal principle, we conclude that 
$$\tR(\o)=R(\o) \quad \mbox{for} \quad \o\in \mathbb{S}^2.$$

\subsection{Piecewise Smoothness of Apparent Horizon}
Along each incoming null hypersurface $\underline{H}_{\ub}$, there exists a unique 2-dimensional MOTS $M_{\ub}$, which is corresponding to the unique $C^2$ solution $R(\ub, \o)$ to $\tr\chi'=0$. Varying $\ub'$, $u=1-R(\ub', \o)$ is thus a three dimensional hypersurface. 

For different $\ub'$ and $\ub$, both in $[\d, 2\d]$, we then derive an elliptic equation for $\f{R(\ub', \o)-R(\ub, \o)}{\ub'-\ub}.$ Together with a priori estimates and bounds derived in \cite{AL}, as in Section \ref{regularity of horizon} we obtain
\begin{equation*}
\begin{split}
&\Delta_{1-R(\ub, \o), \ub} \l \f{R(\ub',\o)-R(\ub,\o)}{\ub'-\ub}-h(\ub, \o; \ub')\r-\f{\nu(\ub, \o; \ub')}{\ub^2 a^2} \l \f{R(\ub',\o)-R(\ub,\o)}{\ub'-\ub}-h(\ub, \o; \ub')\r \\
&+\f{1}{\ub^2 a^2} \cdot \l\f{R(\ub', \o)-R(\ub, \o)}{\ub'-\ub}-h(\ub, \o; \ub')\r \cdot o(1)\\
&+\f{1}{\ub^2 a^2} \cdot \l\f{\partial}{\partial \theta_i}\f{R(\ub', \o)-R(\ub, \o)}{\ub'-\ub}-h(\ub, \o; \ub')\r \cdot o(1)\\
=&\f{a}{\ub^2 a^2}\cdot o(1).
\end{split}
\end{equation*}
Here both $h(\ub, \o; \ub')$ and $\nu(\ub, \o; \ub')$ are smooth functions respect to $\ub$ and $\o$. And there exist smooth functions $h(\ub, \o)$ and $\nu(\ub, \o)$ such that 
$$h(\ub, \o)=\lim_{\ub'\rightarrow\ub}h(\ub, \o; \ub'), \quad \quad \nu(\ub,\o)=\lim_{\ub'\rightarrow\ub}\nu(\ub, \o; \ub').$$
When $\ub'$ is close to $\ub$, by a similar argument as in Section \ref{regularity of horizon} and Section \ref{section DH} we have
$$|h(\ub, \o; \ub')| \leq o(1)\cdot a, \quad \quad \quad \f{64}{81} \leq \nu(\ub, \o; \ub')\leq \f{1088}{375},$$
and
$$|h(\ub, \o)| \leq o(1)\cdot a, \quad \quad \quad \f{64}{81} \leq \nu(\ub, \o)\leq \f{1088}{375}.$$
Through elliptic estimates and standard argument for difference quotient, we have 
\begin{equation}\label{piecewise smooth}
|\f{\partial R(\ub,\o)}{\partial \ub}-h(\ub, \o)|\leq o(1)\cdot a, \quad \mbox{where} \quad |h(\ub, \o)|\leq o(1)\cdot a,
\end{equation}
$$\f{\partial^k R}{\partial \ub^k}\in C^{\infty}(M) \quad \mbox{for any} \quad k\in \mathbb{Z}^{+}.$$

\begin{remark}
For $\d\leq \ub \leq 2\d$, (\ref{piecewise smooth}) implies 
\begin{equation}\label{piecewise1}
|\f{\partial R(\ub,\o)}{\partial \ub}|\leq |h(\ub, \o)|+o(1)\cdot a \lesssim o(1)\cdot a.
\end{equation}
But in Section \ref{section DH}, for $0\leq \ub \leq \d$ we have 
$$|\f{\partial R(\ub,\o)}{\partial \ub}-h(\ub, \o)|\leq o(1)\cdot a, \quad \mbox{where} \quad h(\ub, \o)=[\f12+o(1)] a.$$
This implies for $0\leq \ub \leq \d$
\begin{equation}\label{piecewise2}
\f{\partial R(\ub,\o)}{\partial \ub}=h(\ub, \o)+o(1)\cdot a \geq 0.3a.
\end{equation}
Therefore, with the initial data as in Section \ref{section DH}, $\partial R(\ub, \o)/\partial \ub$ has a jump at $\ub=\d$. And (\ref{piecewise1}), (\ref{piecewise2}) together show that the constructed apparent horizon $\{M_{\ub}\}_{0\leq \ub \leq 2\d}$ is only piecewise smooth and it is not $C^1$ at $\ub=\delta$.

\end{remark}

\appendix
\section{Ricci Curvature of $M$} \label{AppendixA}
On each MOTS: $(1-R(\ub,\o), \ub, \o)$, here we derive a lower bound for ${\color{black}\mbox{Ric}_{M}}(\nab R, \nab R)$.\\ 

For the $2$-dimensional manifold $M$, we have 
$$R_{ij}=K'g_{ij},$$
where $K'$ is the Gaussian curvature of $M$ respect to frames $e_a', e_b', e_3', e_4'$.
Gauss equation gives
\begin{equation*}
K'=-\rho'+\f12 \chih'\cdot \chibh'-\f14 \tr\chi' \tr\chib'.
\end{equation*}
Let $F_a:=\O (\nab_a R)$ and $F=F^c e_c$. Recall that 
$$e'_b=e_b-F_b e_3, \quad e'_3=e_3, \quad e'_4=e_4-2F+|F|^2 e_3.$$
Therefore, we have
\begin{equation*}
\begin{split}
\chib'(e_a', e_b')=&g(D_{e_a'}e_3, e_b')=g(D_{e_a-F_a e_3}e_3, e'_b)\\
=&g(D_{e_a}e_3, e'_b)-F_a g(D_{e_3}e_3, e'_b)\\
=&g(D_{e_a}e_3, e_b-F_b e_3)=g(D_{e_a}e_3, e_b)-F_b g(D_{e_a}e_3, e_3)\\
=&\chib(e_a, e_b).
\end{split}
\end{equation*}
It follows 
$$\tr\chib'=\tr\chib, \quad \quad \chibh'=\chibh.$$
In the same fashion, we have
\begin{equation*}
\begin{split}
\chi'=&\chi_{ab}-(\nab_a F_b+\nab_b F_a)+\nab_3(F_a F_b)-(\zeta_b+\eta_b)F_a-(\zeta_a+\eta_a)F_b\\
&+|F|^2\chib_{ab}-F_b F^c\chib_{ac}-F_a F^c \chib_{bc}-4\omb F_a F_b.
\end{split}
\end{equation*}
Since 
$$\nab_3 F_a=\nab_3 (\O\nab R)=-\O\chib\cdot\nab R-2\omb\O\nab R,$$
contracting with metric, we then arrive at
$$\tr\chi'=\tr\chi-2\O\D R-4\O\eta\cdot\nab R-4\O^2\chibh_{bc}\nab^b R\nab^c R-\O^2\tr\chib|\nab R|^2-8\O^2\omb |\nab R|^2.$$
Furthermore notice that on $M$, we have $\tr\chi'=0$. Hence, 
\begin{equation*}
\begin{split}
\chih'_{ab}=&\chi'_{ab}-\f12\tr\chi' g_{ab}=\chi'\\
=&\chi_{ab}-(\nab_a F_b+\nab_b F_a)+\nab_3(F_a F_b)-(\zeta_b+\eta_b)F_a-(\zeta_a+\eta_a)F_b\\
&+|F|^2\chib_{ab}-F_b F^c\chib_{ac}-F_a F^c \chib_{bc}-4\omb F_a F_b\\
=&\chi_{ab}-\nab_a \O\nab_b R-\O \nab_a \nab_b R-\nab_b \O \nab_a R-\O\nab_b\nab_a R\\
&-\O^2 \chib_{ac}\nab^c R \nab_b R-2\omb \O^2 \nab_a R \nab_b R-\O^2 \chib_{bc} \nab^c R \nab_a R-2\omb \O^2 \nab_b R \nab_a R\\
&-\O(\zeta_b+\eta_b)\nab_a R-\O(\zeta_a+\eta_a)\nab_b R+\O^2\chib_{ab}|\nab R|^2\\
&-\O^2\chib_{ac}\nab_b R \nab^c R-\O^2\chib_{bc} \nab_a R \nab^c R-4\O^2\omb \nab_a R \nab_b R.
\end{split}
\end{equation*}
For $\rho'$, we have
\begin{equation*}
\begin{split}
\rho'=&\f14 R(e'_4, e'_3, e'_4, e'_3)\\
=&\f14 R(e_4-2F^c e_c+|F|^2e_3, e_3, e_4-2F^b e_b+|F|^2e_3, e_3)\\
=&\rho+2\O \beb_b \nab^b R+\O^2 \underline{\alpha}_{bc}\nab^c R \nab^b R.
\end{split}
\end{equation*}
Therefore, on $M$ we conclude
\begin{equation*}
\begin{split}
K'=&-\rho-2\O \beb_b \nab^b R-\O^2 \underline{\alpha}_{bc}\nab^c R \nab^b R\\
&+\f12\chibh^{ab}\chi_{ab}-\chibh^{ab}\nab_a \O \nab_b R-\O \chibh^{ab}\nab_a\nab_b R-\f12 \O^2 \chibh^{ab}\chib_{ac}\nab^c R \nab_b R\\
&-\O^2\omb \chib^{ab}\nab_a R \nab_b R-\f12 \O^2\chibh^{ab}\chib_{bc}\nab^c R\nab_a R-\O^2\omb \chih^{ab}\nab_a R \nab_b R\\
&-\f12\O\chibh^{ab} (\zeta_b+\eta_b)\nab_a R-\f12\O \chibh^{ab} (\zeta_a+\eta_a)\nab_b R+\f12\O^2|\nab R|^2 \chibh^{ab}\chib_{ab}\\
&-\f12\O^2\chibh^{ab}\chib_{ac}\nab_b R \nab^c R-\f12\O^2 \chibh^{ab}\chib_{bc}\nab_a R\nab^c R-2\O^2\omb \chibh^{ab}\nab_a R \nab_b R.
\end{split}
\end{equation*}
Here 
$$-\rho+\f12\chibh^{ab}\chi_{ab}=-\rho+\f12\chibh^{ab}\chih_{ab}=-\check{\rho}=\f{\ub a}{2R^3}f(\o)\cdot[1+o(1)]>0,$$
and 
\begin{equation*}
\begin{split}
&2| K'+\rho-\f12\chibh^{ab}\chi_{ab} |\leq\f{\ub \at}{R}\cdot\f{|\nab R|^2}{R^2}+\f{\ub \at}{R^2} \cdot|\nab^2 R|+\f{\ub a}{R^3}\cdot\f{\ub \at}{R}\cdot|\nab R|.
\end{split}
\end{equation*}
Hence 
\begin{equation*}
\begin{split}
2K'=&2\l K'+\rho-\f12\chibh^{ab}\chi_{ab}\r+2\l -\rho+\f12\chibh^{ab}\chi_{ab}\r\\
\geq& -2|K'+\rho-\f12\chibh^{ab}\chi_{ab}|+2\l -\rho+\f12\chibh^{ab}\chi_{ab}\r\\
\geq&-\f{\ub \at}{R}\cdot\f{|\nab R|^2}{R^2}-\f{\ub \at}{R^2} \cdot|\nab^2 R|-\f{\ub a}{R^3}\cdot\f{\ub \at}{R}\cdot|\nab R|.
\end{split} 
\end{equation*}

Therefore, we conclude
\begin{equation}\label{RicciBound}
\begin{split}
2{\color{black}\mbox{Ric}_{M}}(\nab R, \nab R)=&2K'|\nab R|^2\\
\geq& -\f{\ub \at}{R}\cdot\f{|\nab R|^4}{R^2}-\f{\ub \at}{R^2} \cdot|\nab^2 R|\cdot |\nab R|^2-\f{\ub a}{R^3}\cdot\f{\ub \at}{R}\cdot|\nab R|^3.
\end{split}
\end{equation}

\section{Construction of Initial Data Along $H_0^{[0,\delta]}$}\label{Appendix B}
 
\textbf{Goal of This Section.} By four steps, we will construct initial data along $H_0^{[0,\delta]}$ such that for any $\ub\in (0,\d]$ we have
\begin{equation}\label{goal of initial data}
\int_0^{\ub}|\chih|^2(0,\ub', \o)d\ub'=f(\ub, \o)\ub a, \quad \quad \mbox{with} \quad \quad \f{20}{21}\leq f(\ub, \o)\leq \f{22}{21}.
\end{equation}

\begin{remark}\label{Remark B.2}
In the below, we will focus on achieving (\ref{goal of initial data}). But with a similar argument, for any large positive constant $c$, we could also find initial data such that
\begin{equation}\label{1-c 1+c}
\int_0^{\ub}|\chih|^2(0,\ub', \o)d\ub'=f(\ub, \o)\ub a, \quad \quad \mbox{with} \quad \quad 1-\f{1}{c}\leq f(\ub, \o)\leq 1+\f{1}{c}
\end{equation}
{\color{black}by rescaling initial data satisfying
$$\int_0^{\d}|\chih|^2(0,\ub', \o)d\ub'=f(\d,\o) \d a,\quad \quad \mbox{with} \quad \quad 1-\f{1}{c}\leq f(\d, \o)\leq 1+\f{1}{c}.$$}
We use (\ref{1-c 1+c}) in Section \ref{Construction of Special Initial Data}.
\end{remark}

\begin{remark}
A first trying to achieve (\ref{goal of initial data}) is to require
\begin{equation}\label{first trying}
\f{20}{21}\cdot a \leq |\chih|^2(0, \ub', \o)\leq \f{22}{21}\cdot a
\end{equation}
for any $\o\in \mathbb{S}^2$, where $0\leq \ub' \leq \d$. If we have (\ref{first trying}), then (\ref{goal of initial data}) follows. However, by a topological argument, on any fixed $\S$, traceless two tensor $\chih_{ab}$ must vanish on at least one point. This first trying fails. In the below, we will give a more sophisticated approach to achieve (\ref{goal of initial data}). 

\end{remark}

\textbf{Step One.}
We choose a smooth function $\chih_0(\ub, \o)$. And we require that for fixed $\o$, $\chih_0(\ub, \o)\in C_{c}^{\infty}([0,\delta])$ in $\ub$ variable and {\color{black}for all $\o\in \mathbb{S}^2$},
$$\int_0^{\delta}|\chih_0|^2(\ub', \o)d\ub'=\d a.$$

\textbf{Step Two.}
We choose a number $\f{99}{100}$ with property 
$$\f{20}{21}\leq \f{99}{100} \leq 1 \leq \f{100}{99} \leq \f{22}{21}.$$ 
With this number, we decompose $(0,\d]$ into dyadic pieces
$$(0,\d]=\bigcup_{k=0}^{+\infty}[(\f{99}{100})^{k+1}\d, (\f{99}{100})^k \d].$$
In $[(\f{99}{100})^{k+1}\d, (\f{99}{100})^k \d]$, we let 
\begin{equation}\label{prescribe chi_0}
|\chih|^2(0,\ub, \o):=|\chih_0|^2\bigg([(\f{100}{99})^{k+1}\ub-\d]\cdot 99, \o \bigg).
\end{equation}
Thus, 
\begin{equation*}
\begin{split}
&\int_{(\f{99}{100})^{k+1}\d}^{(\f{99}{100})^{k}\d}|\chih|^2(0,\ub', \o)d\ub'\\
=&\int_{(\f{99}{100})^{k+1}\d}^{(\f{99}{100})^{k}\d}|\chih_0|^2\bigg([(\f{100}{99})^{k+1}\ub'-\d]\cdot 99, \o \bigg)d\ub'\\
=&(\f{99}{100})^{k+1}\cdot\f{1}{99}\cdot\int_{(\f{99}{100})^{k+1}\d}^{(\f{99}{100})^{k}\d}|\chih_0|^2\bigg([(\f{100}{99})^{k+1}\ub'-\d]\cdot 99, \o \bigg)d \bigg([(\f{100}{99})^{k+1}\ub'-\d]\cdot 99 \bigg)\\
=&(\f{99}{100})^{k+1}\cdot\f{1}{99}\cdot \int_0^{\d}|\chih_0|^2(\ub', \o)d\ub'\\
=&(\f{99}{100})^{k+1}\cdot\f{1}{99}\cdot \d a\\
=&(\f{99}{100})^k\cdot \f{1}{100}\cdot \d a.
\end{split}
\end{equation*}
And for fixed $\o$, $|\chih|^2(0,\ub, \o)\in C_c([(\f{99}{100})^{k+1}\d, (\f{99}{100})^{k}\d])$ in $\ub$ variable.  Furthermore, we have
\begin{equation*}
\begin{split}
&\sum_{k=0}^{+\infty}\int_{(\f{99}{100})^{k+1}\d}^{(\f{99}{100})^{k}\d}|\chih|^2(0,\ub', \o)d\ub'\\
=&\sum_{k=0}^{+\infty} (\f{99}{100})^k\cdot \f{1}{100}\cdot \d a\\
=&\f{\f{1}{100}\cdot \d a}{1-\f{99}{100}}=\d a.
\end{split}
\end{equation*}

\textbf{Step Three.} For any $\ub\in (0,\delta]$, there exists $N_0\in \mathbb{N}$ such that 
$$(\f{99}{100})^{N_0+1}\d\leq \ub \leq (\f{99}{100})^{N_0}\d.$$ 
We thus have
$$ \int_0^{(\f{99}{100})^{N_0+1}\d}|\chih|^2(0, \ub', \o)d\ub' \leq \int_0^{\ub}|\chih|^2(0, \ub', \o)d\ub' \leq \int_0^{(\f{99}{100})^{N_0}\d}|\chih|^2(0, \ub', \o)d\ub'.$$
On one side
\begin{equation*}
\begin{split}
&\int_0^{\ub}|\chih|^2(0, \ub', \o)d\ub' \\
\leq&\int_0^{(\f{99}{100})^{N_0}\d}|\chih|^2(0, \ub', \o)d\ub'\\
=&\f{(\f{99}{100})^{N_0}\cdot\f{1}{100}\cdot \d a}{1-\f{99}{100}}=(\f{99}{100})^{N_0}\cdot \d a\\
=& \f{100}{99}\cdot (\f{99}{100})^{N_0+1}\cdot \d a\leq \f{100}{99}\ub a \leq \f{22}{21}\ub a. 
\end{split}
\end{equation*}
On the other side
\begin{equation*}
\begin{split}
&\int_0^{\ub}|\chih|^2(0, \ub', \o)d\ub' \\
\geq&\int_0^{(\f{99}{100})^{N_0+1}\d}|\chih|^2(0, \ub', \o)d\ub'\\
=&\f{(\f{99}{100})^{N_0+1}\cdot\f{1}{100}\cdot \d a}{1-\f{99}{100}}=(\f{99}{100})^{N_0+1}\cdot \d a\\
=&\f{99}{100}\cdot (\f{99}{100})^{N_0}\cdot \d a \geq \f{99}{100}\ub a\geq \f{20}{21}\ub a.
\end{split}
\end{equation*}
Putting the above inequalities together, for any $\ub\in (0, \d]$ we have
$$\int_0^{\ub}|\chih|^2(0,\ub', \o)d\ub'=f(\ub, \o)\ub a, \quad \quad \mbox{with} \quad \quad \f{20}{21}\leq f(\ub, \o)\leq \f{22}{21}.$$

\textbf{Step Four.}With initial data $|\chih|^2(0, \ub, \o)$ prescribed along $H_0^{(0,\d]}$, we need to further check the hyperbolic part. Here we need to note that $\partial_{\ub} \chih$ and $\a$ are very large and tend to $+\infty$ as $\ub\rightarrow 0$. However, in \cite{AL} by a method of renormalization we avoid using $\a$.  Replace $\d$ by $\ub$ in \cite{AL}, for any $\ub\in (0, \d]$, we consider the region $(u,\ub')\in [b\ub \at, 1]\times [0,\ub]$.
All the arguments in \cite{AL} hold for this region. Then let $\ub\rightarrow \d$. With initial data $|\chih|^2(0, \ub, \o)$ prescribed along $u=0$ and Minkowski data prescribed along $\ub=0$, we then have the existence of Einstein vacuum equation in the whole colored region above.  And we have a sequence of $M_{\ub}$, of which the radius $\rightarrow 0$ as $\ub\rightarrow 0$.

\begin{remark}
Another way to construct arbitrary small MOTS is using the following initial data along $u=0$: fix any large natural number $N_1$, we then pick up another natural number $N_2$ such that $1\ll N_1 \ll N_2$ . For $\ub\leq (\f{99}{100})^{N_2+1}$, Minkowski data are prescribed along $u=0$;  for $\ub \geq (\f{99}{100})^{N_2+1}$ we prescribe $|\chih|^2(0, \ub, \o)$ as in Step One to Step Three above. Since $N_1\ll N_2$, for $\ub \geq (\f{99}{100})^{N_1+1}$ it still holds
$$\int_0^{\ub}|\chih|^2(0,\ub', \o)d\ub'=f(\ub, \o)\ub a, \quad \quad \mbox{with} \quad \quad \f{20}{21}\leq f(\ub, \o)\leq \f{22}{21}.$$
Then the smallest MOTS is of radius $(\f{99}{100})^{N_1+1}\cdot a$. Since $N_1$ could be any large positive integer, the radius of the smallest MOTS could be arbitrary small.
\end{remark}

{\color{black}

\section{Construction of Initial Data Along $H_0^{[0,\delta]}$: Part II}\label{Appendix C}

In this part, we provide examples of initial data satisfying the following two requirements at the same time:
\begin{enumerate}
\item The initial data along $H_0^{[0,\d]}$ satisfy $$\int_0^{\d}|\chih_0|^2(\ub,\o)d\ub=f(\d,\o)\d a, \,\, \mbox{with} \,\, 1-
\f{1}{c}\leq f(\d, \o)\leq 1+\f{1}{c}, \,\, \mbox{where} \,\, 1\ll c \ll b \leq \at.$$

\item Moreover, we require that $|\chih_0|^2(\ub, \o)$ is almost a constant out of {\color{black}two} small discs {\color{black}$\{D^1_{\ub}\subset \mathbb{S}^2\}$ and $\{D^2_{\ub}\subset \mathbb{S}^2\}$}
: \footnote{Here $D^1_{\ub}$ and $D^2_{\ub}$ could change for different $\ub$.}
\begin{equation*}
\begin{split}
&0\leq |\chih_0|^2(\ub,\o)\lesssim a, \quad \mbox{for} \quad \o\in D^1_{\ub}\cup D^2_{\ub},\\
&|\chih_0|^2(\ub, \o)=[1+o(1)]\cdot a, \quad \mbox{for} \quad \o\in\mathbb{S}^2\setminus \big(D^1_{\ub}\cup D^2_{\ub}\big) \quad \mbox{with} \quad |o(1)|\leq 1/c,\\
&\iint_{D^1_{\ub}\cup D^2_{\ub}}1\cdot d\o\lesssim \f{1}{c^2}.
\end{split}
\end{equation*}
\end{enumerate}

\begin{minipage}[!t]{0.4\textwidth}
\begin{tikzpicture}[scale=0.9]
\draw [white](3,-1)-- node[midway, sloped, below,black]{$H_0(u=0)$}(4,0);

\draw [white](0.5,1.5)-- node[midway,sloped,above,black]{$H_{1-b \delta \at}$}(1.5,2.5);
\draw [white](2,2)--node [midway,sloped,above,black] {$\Hb_{\delta}(\ub=\delta)$}(4,0);
\draw [white](1,1)--node [midway,sloped, below,black] {$\Hb_{0}(\ub=0)$}(3,-1);
\draw [dashed] (0, 4)--(0, -4);
\draw [dashed] (0, -4)--(4,0)--(0,4);
\draw [dashed] (0,0)--(2,2);
\draw [dashed] (0,1)--(1.5,2.5);
\draw [dashed] (0,-4)--(2,-2);
\draw [dashed] (0,2)--(3,-1);
\draw [very thick] (1,1)--(3,-1)--(4,0)--(2,2)--(1,1);
\draw [very thick] (1,1)--(0.5,1.5)--(1.5,2.5)--(2,2)--(1,1);
\fill[yellow!70!red] (1,1)--(3,-1)--(4,0)--(2,2)--(1,1);
\fill[yellow!30!red](1,1)--(0.5,1.5)--(1.5,2.5)--(2,2)--(1,1);
\draw [white](1,1)-- node[midway,sloped,above,black]{$H_{1-\delta a}$}(2,2);
\draw [->] (3.3,-0.6)-- node[midway, sloped, above,black]{$e_4$}(3.6,-0.3);
\draw [->] (1.4,1.3)-- node[midway, sloped, below,black]{$e_4$}(1.7,1.6);
\draw [->] (3.3,0.6)-- node[midway, sloped, below,black]{$e_3$}(2.7,1.2);
\draw [->] (2.4,-0.3)-- node[midway, sloped, above,black]{$e_3$}(1.7,0.4);
\end{tikzpicture}
\end{minipage}
\begin{minipage}[!t]{0.55\textwidth}
We now prescribe characteristic initial data. We follow some basic calculations in Chapter 2 of \cite{Chr:book}. For $\ub\leq 0$, we prescribe Minkowskian initial data. For $0 \leq R \leq 1$, the surface $S_{1-R, 0}$ on the boundary $\underline{H}_0$ of Minkowskian region, is the sphere of radius $R$ in the hyperplane $t=-R$:
$$|x|=R, \quad \quad |x|=\sqrt{x_1^2+x_2^2+x_3^3}.$$
We use coordinates $(x_1, x_2, x_3)$ to define stereographic coordinates $(\theta_1, \theta_2)$ on $S_{0,0}$. We thus have two stereographic charts, the north polar chart and the south polar chart.

\end{minipage}
\hspace{0.05\textwidth}

Denote $q_2=(0,0,-1)$ to be the south pole. The domain of the north polar chart is then $U_1=S_{0,0} \backslash q_2$. And the chart is the mapping of $U_1$ onto $\mathbb{R}^2$ by $(x_1, x_2, x_3)\in U_1 \rightarrow (\theta_1, \theta_2)\in \mathbb{R}^2$:
$$\theta_1=\f{2x_1}{1+x_3}, \quad \quad \theta_2=\f{2x_2}{1+x_3}.$$
Similarly, for the south polar chart:
$$\theta_1=\f{2x_1}{1-x_3}, \quad \quad \theta_2=\f{2x_2}{1-x_3}.$$
In both charts, the standard metric on $S_{0,0}$ is 
$$\mathring{g}_{AB}(\theta_1, \theta_2)=\f{\d_{AB}}{(1+\f14(\theta_1^2+\theta_2^2))^2}.$$
The transformation from north polar coordinates to south polar coordinates is 
$$(\theta_1, \theta_2)\rightarrow f(\theta_1, \theta_2)=\f{(4\theta_1, 4\theta_2)}{\theta_1^2+\theta_2^2}.$$
Note that $f=f^{-1}$. We define
$(\theta_1', \theta_2')=f(\theta_1, \theta_2)$ and hence $(\theta_1, \theta_2)=f(\theta_1', \theta_2').$\\

We then prescribe initial data along $H_0$. The induced metric $g|_{S_{0,\ub}}$ can be expressed in the form
$$g|_{S_{0,\ub}}=(\phi|_{S_{0,\ub}})^2 \hat{g}|_{S_{0,\ub}},$$
where $\hat{g}|_{S_{0,\ub}}$ satisfies that $\Phi^*_{\ub}\hat{g}|_{S_{0,\ub}}$, a metric on $S_{0,0}$, has the same area form as $\mathring{g}|_{S_{0,0}}$:
$$d\mu_{\Phi^*_{\ub}\hat{g}|_{S_{0,\ub}}}=d\mu_{\mathring{g}|_{S_{0,0}}}.$$
In another word, $$\sqrt{\det\hat{g}(0,\ub,\theta_1, \theta_2)}=W^2(\theta_1, \theta_2), \quad  \mbox{where} \quad W(\theta_1, \theta_2)=\f{1}{1+\f14(\theta_1^2+\theta_2^2)}.$$
Thus, in both stereographic charts $\hat{g}$ is given by 
$$\hat{g}_{AB}(0,\ub,\theta_1, \theta_2)=W^2(\theta_1, \theta_2)\,m_{AB}(0,\ub, \theta_1, \theta_2),$$ 
where $m$ satisfies $\det m=1$. Set 
$$O_{CA}(\theta_1, \theta_2)=\d_{CA}-\f{2\theta_C \theta_A}{\theta_1^2+\theta_2^2} \quad \mbox{and} \quad \tilde{O} \quad \mbox{its transpose}.$$
 It is straight forward to check 
$$\tilde{O}O=I, \quad \quad \tilde{O}=O, \quad \quad \det{O}=-1.$$
From Chapter 2 in \cite{Chr:book}, we also have 
\begin{equation*}
m(0, \ub, \theta_1, \theta_2)=\tilde{O}(\theta_1, \theta_2)\,m'(0,\ub,\theta_1', \theta_2')\,O(\theta_1, \theta_2),
\end{equation*}
which we write as 
\begin{equation}\label{m m'}
m=\tilde{O}m' O.
\end{equation}
Now the matrix $m$ at a given point on $H_0$ is a 2-dimensional positive definite symmetric unimodular matrix. Such a matrix has the form 
$m=
 \begin{bmatrix}
    Z+X & Y \\
    Y & Z-X
 \end{bmatrix},
$
where $Z^2-X^2-Y^2=1$ and $Z\geq 0$. Use exponential map, {\color{black}in north polar chart} we can express: 
$$m(0, \ub, \theta_1, \theta_2)=\exp \Psi(\ub, \theta_1, \theta_2),$$
where $\Psi$ is a symmetric trace-free 2-dimensional matrix. And $\Psi(\ub, \theta_1, \theta_2)$ is the free data we can prescribe along $H_0$. {\color{black}In the below, we will prescribe $\Psi(\ub, \theta_1, \theta_2)$ in the north polar chart and  $\Psi'(\ub, \theta_1', \theta_2')$ in the south polar chart such that (\ref{m m'}) is satisfied.}\\

Consider the standard spherical coordinates $\{\theta, \phi\}$ on $\mathbb{S}^2$, where $0\leq \theta \leq \pi$ and $0\leq\phi<2\pi$. Assume $\theta=0$ and $\theta=\pi$ are corresponding to the north pole and south pole, respectively. For constant $c$, we assume  $1\ll c \ll b \leq \at$.

\begin{minipage}[!t]{0.23\textwidth}
\begin{tikzpicture}[scale=0.83]
    \draw [white](-0.5,1.05)-- node[midway,sloped,above,black]{$Q$}(-0.46,1.05);
     \draw [white](-0.9,-0.8)-- node[midway,sloped,above,black]{$\theta=\f{\pi}{2}$}(-0.5,-0.8);
    \draw [white](0.5,-1.05)-- node[midway,sloped,below,black]{$P$}(0.46,-1.05);
    \draw (-2,0) arc (180:360:2cm and 1cm);
    \draw[dashed] (-2,0) arc (180:0:2cm and 1cm);
    \draw (0,0) circle (2cm);
    \shade[ball color=black!60!white,opacity=0.2] (0,0) circle (2cm);
    \shade[ball color=gray!60, opacity=0.4] (-0.57,0.95) circle (5pt);
    \shade[ball color=gray!80, opacity=0.6] (-0.57,0.95) circle (3pt);
    \filldraw (-0.57,0.95) circle (2pt);
    \shade[ball color=gray!60, opacity=0.4] (0.57,-0.95) circle (5pt);
    \shade[ball color=gray!80, opacity=0.6] (0.57,-0.95) circle (3pt);
    \filldraw (0.57,-0.95) circle (2pt);
    \end{tikzpicture}
\end{minipage}
\begin{minipage}[!t]{0.75\textwidth}
In the following, we will prescribe $\Psi(\ub, \theta_1, \theta_2)$ and {\color{black}$\Psi'(\ub, \theta_1', \theta_2')$} such that for fixed $\ub$, in $\{\theta, \phi\}$ coordinates, we have 
\begin{itemize}
\item $|\chih|^2_0(\ub, \theta, \phi)=0$ for $|\phi-{2\pi\ub}/{\d}|\leq \pi/2c$ and $|\theta-\pi/2|\leq \pi/2c$,
\item $|\chih|^2_0(\ub, \theta, \phi)\leq a$ for $|\phi-{2\pi\ub}/{\d}|\leq \pi/c$ and $|\theta-\pi/2|\leq \pi/c$,
\item  $|\chih|^2_0(\ub, \theta, \phi)=0$ for $|{\color{black}\pi}-\phi+{2\pi\ub}/{\d}|\leq \pi/2c$ and $|\phi|\leq \pi/2c$,
\item $|\chih|^2_0(\ub, \theta, \phi)\leq a$ for $|{\color{black}\pi}-\phi+{2\pi\ub}/{\d}|\leq \pi/c$ and $|\phi|\leq \pi/c$,
\item $|\chih|^2_0(\ub, \theta, \phi)=a$ otherwise. 
\end{itemize}
\end{minipage}
In another word we hope, for fixed $\ub$, it holds $|\chih|^2_0(\ub, \theta, \phi)=a$ for most  points on $\mathbb{S}^2$, except for $(\theta, \phi)$ lying in two small discs (with radius $\pi/c$) centered at $P$ and $Q$. Here $P, Q$ have coordinates $\theta=\pi/2, \phi=2\pi\ub/\d$ and $\theta=\pi/2, \phi={\color{black}\pi}-2\pi\ub/\d$, respectively.  Within these two small discs, it holds $|\chih|^2_0(\ub, \theta, \phi)\leq a$. \\

Fix the north pole. For any point $P\in \mathbb{S}^2$ there is a $1-1$ correspondence between its stereographic coordinates $(\theta_1(P), \theta_2(P))$ and its spherical coordinates $(\theta(P), \phi(P))$. In particular, we could write 
$$(\theta, \phi)=(\theta(\theta_1, \theta_2), \phi(\theta_1, \theta_2)).$$ 
For simplicity, in the below we omit the explicit forms of $(\theta(\theta_1, \theta_2), \phi(\theta_1, \theta_2))$. \\

To prescribe $\Psi(\ub,\theta_1, \theta_2)$ {\color{black}in the north polar chart, for $\sqrt{\theta^2_1+\theta^2_2}<20$} we set $$\Psi(\ub,\theta_1, \theta_2)=\sqrt2 x(\theta, \phi-\f{2\pi\ub}{\d})\cdot\at\cdot\Psi_0(\ub, \theta_1, \theta_2).$$ 
For fixed $\theta$, we require $x(\theta, \cdot)$ to be a periodic function with period $2\pi$ and it will be constructed later. Let
$$\Psi_0(\ub, \theta_1, \theta_2)=
\begin{bmatrix}
1&0\\
0&-1
\end{bmatrix}, \quad \mbox{for} \quad \ub>0.$$
We also define
$$m(\ub,\theta_1, \theta_2)=\exp(\Psi(\ub,\theta_1, \theta_2))=\exp\big(\sqrt2 x(\theta, \phi-\f{2\pi\ub}{\d})\cdot\at\cdot\Psi_0(\ub, \theta_1, \theta_2)\big).$$
And for $\ub>0$, we have
$$m(\ub,\theta_1, \theta_2)=
\begin{bmatrix}
\exp\big(\sqrt2 x(\theta, \phi-\f{2\pi\ub}{\d})\cdot\at\big)&0\\
0&\exp\big(-\sqrt2 x(\theta, \phi-\f{2\pi\ub}{\d})\cdot\at\big)
\end{bmatrix},$$
$$m^{-1}(\ub,\theta_1, \theta_2)=
\begin{bmatrix}
\exp\big(-\sqrt2 x(\theta, \phi-\f{2\pi\ub}{\d})\cdot\at\big)&0\\
0&\exp\big(\sqrt2 x(\theta, \phi-\f{2\pi\ub}{\d})\cdot\at\big)
\end{bmatrix}.$$
Applying 
$$\f{d}{d\ub}e^{X(\ub)}=\int_0^1 e^{\tau X(\ub)}\f{dX(\ub)}{d\ub}e^{(1-\tau)X(\ub)}d\tau,$$
we get
\begin{equation*}
\begin{split}
\f{\partial}{\partial \ub}m&(\ub,\theta_1,\theta_2)=\int_0^1 \exp \begin{bmatrix}
\sqrt2 \tau x(\theta, \phi-\f{2\pi\ub}{\d})\cdot\at&0\\
0&-\sqrt2 x(\theta, \phi-\f{2\pi\ub}{\d})\cdot\at
\end{bmatrix}\\
&\times \bigg(\sqrt2\cdot\f{\partial x}{\partial \phi}(\theta, \phi-\f{2\pi\ub}{\d})\cdot\f{-2\pi}{\d}\cdot\at\bigg) \begin{bmatrix} 1&0\\0&-1 \end{bmatrix}\\
&\times \exp \begin{bmatrix}
\sqrt2 (1-\tau)\cdot x(\theta, \phi-\f{2\pi\ub}{\d})\cdot\at&0\\
0&-\sqrt2 (1-\tau)\cdot x(\theta, \phi-\f{2\pi\ub}{\d})\cdot\at
\end{bmatrix}d\tau\\
=&\int_0^1 \exp \begin{bmatrix}
\sqrt2 x(\theta, \phi-\f{2\pi\ub}{\d})\cdot\at&0\\
0&-\sqrt2 x(\theta, \phi-\f{2\pi\ub}{\d})\cdot\at
\end{bmatrix}\\
&\times \bigg(\sqrt2\cdot \f{\partial x}{\partial \phi}(\theta, \phi-\f{2\pi\ub}{\d})\cdot\f{-2\pi}{\d}\cdot\at\bigg)\begin{bmatrix} 1&0\\0&-1 \end{bmatrix} d\tau\\
=&\sqrt2\at \begin{bmatrix}
\exp\bigg(\sqrt2 x(\theta, \phi-\f{2\pi\ub}{\d})\cdot\at\bigg)&0\\
0&-\exp\bigg(-\sqrt2 x(\theta, \phi-\f{2\pi\ub}{\d})\cdot\at\bigg)
\end{bmatrix}\\
&\times \bigg(\f{\partial x}{\partial \phi}(\theta, \phi-\f{2\pi\ub}{\d})\cdot\f{-2\pi}{\d}\bigg).
\end{split}
\end{equation*}
From Chapter 2 in \cite{Chr:book}, it holds 
$$\chih_{AB}=\f12\phi^2\f{\partial \hat{g}}{\partial \ub}=\f12\phi^2 W^2(\theta_1, \theta_2)\f{\partial}{\partial \ub}m_{AB}(\ub,\theta_1, \theta_2),$$
and
\begin{equation*}
\begin{split}
e:=\f12 |\chih|^2_{g}=&\f12 (g^{-1})^{CA}\chih_{AB}(g^{-1})^{BD}\chih_{DC}\\
=&\f18 (\hat{g}^{-1})^{CA}\f{\partial \hat{g}_{AB}}{\partial \ub}(\hat{g}^{-1})^{BD}\f{\partial \hat{g}_{DC}}{\partial \ub}\\
=&\f18 (m^{-1})^{CA}(\f{\partial m}{\partial \ub})_{AB} (m^{-1})^{BD} (\f{\partial m}{\partial \ub})_{DC}.
\end{split}
\end{equation*}
For our concrete example, we have 
\begin{equation}\label{e value}
\begin{split}
e=&\f18 (m^{-1})^{CA}(\f{\partial m}{\partial \ub})_{AB} (m^{-1})^{BD} (\f{\partial m}{\partial \ub})_{DC}\\
=&\f18\cdot \bigg(\sqrt2\cdot \f{\partial x}{\partial \phi}(\theta, \phi-\f{2\pi\ub}{\d})\cdot\f{-2\pi}{\d}\cdot\at\bigg)^2 \cdot\tr \begin{bmatrix} 1&0\\0&1 \end{bmatrix}\\
=&\f12 a\cdot  \bigg(\f{\partial x}{\partial \phi}(\theta, \phi-\f{2\pi\ub}{\d})\cdot\f{-2\pi}{\d}\bigg)^2.
\end{split}
\end{equation}
We now prescribe the function $\partial x/\partial \phi(\theta, \phi)$. Let $\partial x/\partial \phi(\theta, \phi)$ to be $2\pi$-periodic with respect to $\phi$ and require

\[ |\f{\partial x}{\partial \phi}|(\theta,\phi)\left\{ \begin{array}{ll}
=0 & \mbox{if $-\f{\pi}{2c}\leq \phi \leq \f{\pi}{2c}$} \mbox{ and }  -\f{\pi}{2c}\leq\theta-\f{\pi}{2}\leq \f{\pi}{2c};\\
=0 & \mbox{if ${\color{black}\pi}-\f{\pi}{2c}\leq \phi \leq {\color{black}\pi}+\f{\pi}{2c}$} \mbox{ and }  -\f{\pi}{2c}\leq\theta-\f{\pi}{2}\leq \f{\pi}{2c};\\ 
\leq {\d}/{2\pi} & \mbox{if $-\f{\pi}{c}\leq \phi \leq \f{\pi}{c}$} \mbox{ and }  -\f{\pi}{2c}\leq\theta-\f{\pi}{2}\leq \f{\pi}{2c};\\
\leq {\d}/{2\pi} & \mbox{if ${\color{black}\pi}-\f{\pi}{c}\leq \phi \leq {\color{black}\pi}+\f{\pi}{c}$} \mbox{ and }  -\f{\pi}{2c}\leq\theta-\f{\pi}{2}\leq \f{\pi}{2c};\\ 
        ={\d}/{2\pi} & \mbox{otherwise}.
                \end{array}  \right. \] 
Then {\color{black}in the north polar chart $\sqrt{\theta_1^2+\theta^2_2}<20$} we have 
\begin{equation}\label{C3}
\f12a\cdot \bigg(\f{\partial x}{\partial \phi}(\theta, \phi-\f{2\pi\ub}{\d})\cdot\f{-2\pi}{\d}\bigg)^2 \left\{ \begin{array}{ll}
=0 & \mbox{if $-\f{\pi}{2c}\leq \phi-\f{2\pi\ub}{\d} \leq \f{\pi}{2c}$} \mbox{ and } -\f{\pi}{2c}\leq\theta-\f{\pi}{2}\leq \f{\pi}{2c};\\
=0 & \mbox{if ${\color{black}\pi}-\f{\pi}{2c}\leq \phi-\f{2\pi\ub}{\d} \leq {\color{black}\pi}+\f{\pi}{2c}$} \mbox{ and }  -\f{\pi}{2c}\leq\theta-\f{\pi}{2}\leq \f{\pi}{2c};\\ 
\leq {a}/{2} & \mbox{if $-\f{\pi}{c}\leq \phi-\f{2\pi\ub}{\d} \leq \f{\pi}{c}$} \mbox{ and }  -\f{\pi}{2c}\leq\theta-\f{\pi}{2}\leq \f{\pi}{2c};\\
\leq {a}/{2} & \mbox{if ${\color{black}\pi}-\f{\pi}{c}\leq \phi-\f{2\pi\ub}{\d} \leq {\color{black}\pi}+\f{\pi}{c}$} \mbox{ and }  -\f{\pi}{2c}\leq\theta-\f{\pi}{2}\leq \f{\pi}{2c};\\ 
        ={a}/{2} & \mbox{otherwise}.
                \end{array}  \right. 
                \end{equation}

{\color{black}Next we extend $\Psi(\ub, \theta_1, \theta_2)$ to the south polar chart. In $\f15<\sqrt{\theta_1^2+\theta_2^2}<20$,\footnote{That is corresponding to $20>\sqrt{\theta_1'^2+\theta_2'^2}>\f15$ in the south polar chart.} we require $\Psi'(\ub, \theta_1', \theta_2'):=O(\theta_1', \theta_2')\Psi(\ub, \theta_1, \theta_2)O(\theta_1, \theta_2)$. Since $m(\ub, \theta_1, \theta_2)=\exp\big(\Psi(\ub, \theta_1, \theta_2)\big)$ and $m'(\ub, \theta_1', \theta_2')=\exp\big(\Psi'(\ub, \theta_1', \theta_2')\big)$, (\ref{m m'}) follows. Denote $\tilde{\lambda}:=\sqrt2 x(\theta, \phi-\f{2\pi\ub}{\d})\cdot\at.$ We calculate
\begin{equation*}
\begin{split}
e'=&\f18 (m'^{-1})^{CA}(\f{\partial m'}{\partial \ub})_{AB} (m'^{-1})^{BD} (\f{\partial m'}{\partial \ub})_{DC}\\
=&\f18 {O^{C}}_{C'}(m^{-1})^{C'A'}{O_{A'}}^{A}{O_{A}}^{A''}(\f{\partial m}{\partial \ub})_{A''B'}{O^{B'}}_{B}{O^{B}}_{B''}(m^{-1})^{B''D'}{O_{D'}}^{D}{O_{D}}^{D''}(\f{\partial m}{\partial \ub})_{D''C''}{O^{C''}}_{C}\\
=&\f18 {O^{C}}_{C'}(m^{-1})^{C'A''}(\f{\partial m}{\partial \ub})_{A''B''}(m^{-1})^{B''D''}(\f{\partial m}{\partial \ub})_{D''C''}{O^{C''}}_{C}\\
=&\f18\cdot \bigg(\sqrt2\cdot \f{\partial x}{\partial \phi}(\theta, \phi-\f{2\pi\ub}{\d})\cdot\f{-2\pi}{\d}\cdot\at\bigg)^2\\
&\times {O^{C}}_{C'}\begin{bmatrix} e^{-\tilde{\lambda}}&0\\0&e^{\tilde{\lambda}}\end{bmatrix}^{C'A''}\begin{bmatrix} e^{\tilde{\lambda}}&0\\0&-e^{-\tilde{\lambda}}\end{bmatrix}_{A''B''}\begin{bmatrix} e^{-\tilde{\lambda}}&0\\0&e^{\tilde{\lambda}}\end{bmatrix}^{B''D''}\begin{bmatrix} e^{\tilde{\lambda}}&0\\0&-e^{-\tilde{\lambda}}\end{bmatrix}_{D''C''}  {O^{C''}}_{C}\\
=&\f18\cdot \bigg(\sqrt2\cdot \f{\partial x}{\partial \phi}(\theta, \phi-\f{2\pi\ub}{\d})\cdot\f{-2\pi}{\d}\cdot\at\bigg)^2 \cdot\tr \begin{bmatrix} 1&0\\0&1 \end{bmatrix}\\
=&\f12 a\cdot  \bigg(\f{\partial x}{\partial \phi}(\theta, \phi-\f{2\pi\ub}{\d})\cdot\f{-2\pi}{\d}\bigg)^2.
\end{split}
\end{equation*}
Note that by (\ref{e value}) $e$ and $e'$ take the same value at $(\theta_1, \theta_2)$ with $\f15<\sqrt{\theta_1^2+\theta_2^2}<20$. And in view of (\ref{C3}), when the points are close to $\sqrt{\theta_1^2+\theta_2^2}=\f15$ and $\sqrt{\theta_1^2+\theta_2^2}=20$, $e$ and $e'$ are ${a}/{2}$. We then extend $m'(\ub, \theta_1', \theta_2')$ as a smooth $2$-covariant $S$ tensorfield to the south pole chart $\sqrt{\theta_1'^2+\theta_2'^2}<20$ and require  
$$e'=\f18 (m'^{-1})^{CA}(\f{\partial m'}{\partial \ub})_{AB} (m'^{-1})^{BD} (\f{\partial m'}{\partial \ub})_{DC}=\f{a}{2} \quad \mbox{ for } \quad \sqrt{\theta_1'^2+\theta_2'^2}\leq \f15.$$

}

\noindent With the above constructions we find $|\chih_0|^2(\ub, \o)$: it is a constant out of two small discs $\{D^1_{\ub}\subset \mathbb{S}^2\}$ centered at $P$ and $\{D^2_{\ub}\subset \mathbb{S}^2\}$ centered at $Q$: 
$$0\leq |\chih_0|^2(\ub,\o)\lesssim a, \quad \mbox{for} \quad \o\in D^1_{\ub}\cup D^2_{\ub},$$
$$|\chih_0|^2(\ub, \o)=a, \quad \mbox{for} \quad \o\in\mathbb{S}^2\setminus \big(D^1_{\ub}\cup D^2_{\ub}\big),$$
$$\iint_{D^1_{\ub}\cup D^2_{\ub}}1\cdot d\o\lesssim \f{1}{c^2}.$$

\begin{minipage}[!t]{0.29\textwidth}
\begin{tikzpicture}[scale=0.9]
    \draw [white](-0.5,1.05)-- node[midway,sloped,above,black]{$Q$}(-0.46,1.05);
     \draw [white](-0.9,-0.8)-- node[midway,sloped,above,black]{$\theta=\f{\pi}{2}$}(-0.5,-0.8);
    \draw [white](0.5,-1.05)-- node[midway,sloped,below,black]{$P$}(0.46,-1.05);
    \draw (-2,0) arc (180:360:2cm and 1cm);
    \draw[dashed] (-2,0) arc (180:0:2cm and 1cm);
    \draw (0,0) circle (2cm);
    \shade[ball color=black!60!white,opacity=0.2] (0,0) circle (2cm);
    \shade[ball color=gray!60, opacity=0.4] (-0.57,0.95) circle (5pt);
    \shade[ball color=gray!80, opacity=0.6] (-0.57,0.95) circle (3pt);
    \filldraw (-0.57,0.95) circle (2pt);
    \shade[ball color=gray!60, opacity=0.4] (0.57,-0.95) circle (5pt);
    \shade[ball color=gray!80, opacity=0.6] (0.57,-0.95) circle (3pt);
    \filldraw (0.57,-0.95) circle (2pt);
    \draw[arrows=->](-0.7,0.7)--(-0.39, 0.7);
    \draw[arrows=->](0.7,-0.7)--(0.39, -0.7);
  \end{tikzpicture}
\end{minipage}
\begin{minipage}[!t]{0.7\textwidth}
At the same time, for fixed $\theta$ and fixed $\phi$ we let $\ub$ vary. Then these two discs $D^1_{\ub}$ and $D^2_{\ub}$ rotate (at a speed $2\pi/\d$) along $\theta=\pi/2$.
For $0<\ub\leq \d$, there are at most two intervals of length $\d/c$ such that $0\leq|\chih_0|^2\leq a$, for all the rest $\ub\in(0,\d]$, we have $|\chih_0|^2=a$. This implies that along $H_0^{[0,\d]}$, $(\chih_0)_{ab}(\ub,\o)$ also satisfies 
$$\int_0^{\d}|\chih_0|^2(\ub,\o)d\ub=f(\d,\o)\d a, \,\, \mbox{with} \,\, 1-\f{1}{c}\lesssim f(\d, \o)\lesssim 1+\f{1}{c},$$ 
where $1\ll c \ll b \leq \at$.

\end{minipage}

}

\end{document}